\newtheorem{theorem}{Theorem}
\newaliascnt{proposition}{theorem}
\newaliascnt{assumption}{theorem}
\newaliascnt{lemma}{theorem}
\newtheorem{lemma}[lemma]{Lemma}
\newaliascnt{corollary}{theorem}
\newtheorem{corollary}[corollary]{Corollary}
\newaliascnt{definition}{theorem}
\newtheorem{definition}[definition]{Definition}
\newaliascnt{example}{theorem}
\newtheorem{example}[example]{Example}
\newaliascnt{remark}{theorem}
\newtheorem{remark}[remark]{Remark}
\def\nset{\mathbb{N}}
\def\zset{\mathbb{Z}}
\newcommand{\PP}{\mathbb{P}}
\newcommand{\RR}{\mathbb{R}}
\newcommand{\ZZ}{\mathbb{Z}}
\def\et{\eta}
\def \e{\epsilon}
\def\e{\varepsilon}
\newcounter{hypA}
\newenvironment{hypA}{\refstepcounter{hypA}\begin{itemize}
  \item[{\bf (A\arabic{hypA})}]}{\end{itemize}}
\newcounter{hypB}
\newenvironment{hypB}{\refstepcounter{hypB}\begin{itemize}
  \item[{\bf (B\arabic{hypB})}]}{\end{itemize}}
\newcounter{hypAN}
\newenvironment{hypAN}{\refstepcounter{hypAN}\begin{itemize}
  \item[{\bf (AN\arabic{hypAN})}]}{\end{itemize}}
\newcounter{hypC}
\newenvironment{hypC}{\refstepcounter{hypC}\begin{itemize}
  \item[{\bf (C\arabic{hypC})}]}{\end{itemize}}
\newcounter{hypD}
\newenvironment{hypD}{\refstepcounter{hypD}\begin{itemize}
  \item[{\bf (D)}]}{\end{itemize}}
\newcounter{hypMG}
\newenvironment{hypMG}{\refstepcounter{hypMG}\begin{itemize}
  \item[{\bf (M)}]}{\end{itemize}}
\newcounter{hypLD}
\newenvironment{hypLD}{\refstepcounter{hypLD}\begin{itemize}
  \item[{\bf (L)}]}{\end{itemize}}
\newcounter{hypR}
\newenvironment{hypR}{\refstepcounter{hypR}\begin{itemize}
  \item[{\bf (S)}]}{\end{itemize}}
\newcommandx\LD[2][1=]{
\ifthenelse{\equal{#1}{}}
{\hspace{-1mm}(\textbf{L})\hspace{-1mm}}
{\hspace{-1mm}(\textbf{L\ref{#1}--\ref{#2}})\hspace{-1mm}}
}
\newcommandx\MG[2][1=]{
\ifthenelse{\equal{#1}{}}
{\hspace{-1mm}(\textbf{M})\hspace{-1mm}}
{\hspace{-1mm}(\textbf{M\ref{#1}--\ref{#2}})\hspace{-1mm}}
}
\newcommandx\A[2][1=]{
\ifthenelse{\equal{#1}{}}
{\hspace{-1mm}(\textbf{A\ref{#2}})\hspace{-1mm}}
{\hspace{-1mm}(\textbf{A\ref{#1}--\ref{#2}})\hspace{-1mm}}
}
\newcommandx\AN[2][1=]{
\ifthenelse{\equal{#1}{}}
{\hspace{-1mm}(\textbf{AN\ref{#2}})\hspace{-1mm}}
{\hspace{-1mm}(\textbf{AN\ref{#1}--\ref{#2}})\hspace{-1mm}}
}
\newcommandx\C[2][1=]{
\ifthenelse{\equal{#1}{}}
{\hspace{-1mm}(\textbf{C\ref{#2}})\hspace{-1mm}}
{\hspace{-1mm}(\textbf{C\ref{#1}--\ref{#2}})\hspace{-1mm}}
}
\newcommandx\D[2][1=]{
\ifthenelse{\equal{#1}{}}
{\hspace{-1mm}(\textbf{D})\hspace{-1mm}}
{\hspace{-1mm}(\textbf{D\ref{#1}--\ref{#2}})\hspace{-1mm}}
}
\newcommandx\R[2][1=]{
\ifthenelse{\equal{#1}{}}
{\hspace{-1mm}(\textbf{S})\hspace{-1mm}}
{\hspace{-1mm}(\textbf{S\ref{#1}--\ref{#2}})\hspace{-1mm}}
}
\newcommandx\CI[2][1=]{
\ifthenelse{\equal{#1}{}}
{\hspace{-1mm}(\textbf{D\ref{#2}})\hspace{-1mm}}
{\hspace{-1mm}(\textbf{D\ref{#1}--\ref{#2}})\hspace{-1mm}}
}
\newcommandx{\arr}[2][1=]{
\ifthenelse{\equal{#1}{}}
{U_{\N}({#2})}
{U_{\N}^{#1}({#2})}
}
\newcommandx{\asvar}[1][1=]{
\ifthenelse{\equal{#1}{}}
{\sigma^2}
{\sigma_{#1}^2}
}
\newcommandx{\ascovar}[1][1=]{
\ifthenelse{\equal{#1}{}}
{\Gamma}
{\Gamma_{#1}}
}
\newcommandx{\asvartd}[1][1=]{
\ifthenelse{\equal{#1}{}}
{\tilde{\sigma}^2}
{\tilde{\sigma}^2_{#1}}
}
\newcommandx\B[2][1=]{
\ifthenelse{\equal{#1}{}}
{\hspace{-1mm}(\textbf{B\ref{#2}})\hspace{-1mm}}
{\hspace{-1mm}(\textbf{B\ref{#1}--\ref{#2}})\hspace{-1mm}}
}
\newcommandx{\M}[1][1=]{
\ifthenelse{\equal{#1}{}}
{M}
{M_{\N}}
}
\newcommand{\bd}{Y_N}
\newcommand{\bdconst}{X_N}
\newcommand{\bmf}[1]{\mathsf{F}_{\mathrm{b}}(#1)}
\newcommandx\condmeas[2][1=]{
\ifthenelse{\equal{#1}{}}
{\mu_{#2}}
{\mu^{(#1)}_{#2}}
}
\newcommandx\asvarANone[1][1=]{
\ifthenelse{\equal{#1}{}}
{\nu^{2}}
{\nu_{#1}^{2}}
}
\newcommandx\asvartdANone[1][1=]{
\ifthenelse{\equal{#1}{}}
{\tilde{\nu}^{2}}
{\tilde{\nu}_{#1}^{2}}
}
\newcommandx\condmeastd[2][1=]{
\ifthenelse{\equal{#1}{}}
{\tilde{\mu}_{#2}}
{\tilde{\mu}_{#2}^{(#1)}}
}
\newcommandx{\cexp}[3][1=]{
\ifthenelse{\equal{#1}{}}
{\mathbb{E}\left[ #2 \mid #3 \right]} 
{\mathbb{E}[ #2 \mid #3 ]}
}
\newcommandx{\cov}[3][1=]{
\ifthenelse{\equal{#1}{}}
{\sigma(#2,#3)}
{\sigma_{#1}(#2,#3)}
}
\newcommandx{\covtd}[3][1=]{
\ifthenelse{\equal{#1}{}}
{\tilde{\sigma}(#2,#3)}
{\tilde{\sigma}_{#1}(#2,#3)}
}
\newcommand{\CV}[1][1=]{\operatorname{CV}^2_{\N}}
\newcommand{\CVthres}{\tau}
\newcommand{\DB}{B$^2$}
\newcommand{\DBAIS}{B$^2$\!ASIL}
\newcommandx{\der}[1][1=]{
\ifthenelse{\equal{#1}{}}
{w}
{w_{#1}}
}
\newcommandx{\Nder}[1][1=]{
\ifthenelse{\equal{#1}{}}
{\bold{w}}
{\bold{w}_{#1}}
}
\newcommand{\diff}[2]{\delta_{\N}(#1,#2)}
\newcommand{\difftd}[2]{\tilde{\delta}_{\N}(#1,#2)}
\newcommandx{\dlim}[1][1=]{
\ifthenelse{\equal{#1}{}}
{\stackrel{\mathcal D}{\longrightarrow}}
{\stackrel{\mathcal D}{\longrightarrow}}
}
\newcommand{\disc}{{\sf P}}
\newcommandx\epart[3][1=]{
\ifthenelse{\equal{#1}{}}
{\xi_\N(#2,#3)}
{\xi^{(#1)}_\N(#2,#3)}
}
\newcommandx\eparttd[3][1=]{
\ifthenelse{\equal{#1}{}}
{\tilde{\xi}_\N(#2,#3)}
{\tilde{\xi}^{(#1)}_\N(#2,#3)}
}
\newcommandx\epartck[3][1=]{
\ifthenelse{\equal{#1}{}}
{\check{\xi}_\N(#2,#3)}
{\check{\xi}^{(#1)}_\N(#2,#3)}
}
\newcommand{\eqsp}{} 
\newcommand{\eqdef}{\triangleq}
\newcommand{\E}{\mathbb{E}}
\newcommandx{\evar}[1][1=]{
\ifthenelse{\equal{#1}{}}
{Z_\infty}
{Z_{#1}}
}
\newcommand{\fd}{\mathcal{X}}
\newcommandx\fest[1][1=]{
\ifthenelse{\equal{#1}{}}
{\phi^\N}
{\phi_{#1}^\N}
}
\newcommand{\field}[1]{\mathcal{#1}}
\newcommandx\filt[1][1=]{
\ifthenelse{\equal{#1}{}}
{\field{F}_\N}
{\field{F}{(#1)}_\N}
}
\newcommandx\filttd[1][1=]{
\ifthenelse{\equal{#1}{}}
{\tilde{\field{F}}^\N}
{\tilde{\field{F}}_{#1}^\N}
}
\newcommandx\ffilt[1][1=]{
\ifthenelse{\equal{#1}{}}
{\field{G}^\N}
{\field{G}_{#1}^\N}
}
\newcommandx{\gaussfield}[1][1=]{
\ifthenelse{\equal{#1}{}}
{W^{N}}
{W_{#1}^{N}}
}
\newcommandx{\gaussfieldtd}[1][1=]{
\ifthenelse{\equal{#1}{}}
{\widetilde{W}^{N}}
{\widetilde{W_{#1}}^{N}}
}
\newcommandx{\cgfield}[1][1=]{
\ifthenelse{\equal{#1}{}}
{A^{N}}
{A_{#1}^{N}}
}
\newcommandx{\ngfield}[1]{\mathcal{W}_{#1}^{N}}
\newcommandx{\ungfield}[1]{\mathcal{Y}_{#1}^{N}}
\newcommandx{\gfield}[1]{\mathcal{V}_{#1}^{N}}
\newcommand{\genexp}{\nu}
\newcommandx{\genfd}[1][1=]{
\ifthenelse{\equal{#1}{}}
{\mathcal{F}}
{\mathcal{F}_{\N}}
}
\newcommand{\Hlw}{\varrho} 
\newcommand{\Hlwtd}{\tilde{\varrho}}
\newcommand{\Hc}{c_1}  
\newcommandx{\Hcons}[1][1=]{
\ifthenelse{\equal{#1}{}}
{c}
{c_{#1}}
}
\newcommand{\Hconstck}[1]{\check{c}_{#1}}
\newcommand{\Hconstd}[1]{\tilde{c}_{#1}}
\newcommand{\He}{c_2} 
\newcommandx{\inind}[3][1=]{
\ifthenelse{\equal{#1}{}}
{J_{\N}(#2,#3)}
{J_{\N}_{(#1)}(#2,#3)}
}
\newcommandx{\isind}[2][1=]{
\ifthenelse{\equal{#1}{}}
{I_{N}(#2)}
{I_{N}^{(#1)}(#2)}
}
\newcommand{\intvect}[2]{\llbracket #1, #2 \rrbracket}
\newcommandx\inwgt[3][1=]{
\ifthenelse{\equal{#1}{}}
{\omega_\N(#2,#3)}
{\omega^{(#1)}_\N(#2,#3)}
}
\newcommandx\inwgtbar[3][1=]{
\ifthenelse{\equal{#1}{}}
{\check{\omega}_\N(#2,#3)}
{\check{\omega}^{(#1)}_\N(#2,#3)}
}
\newcommand{\inwgtbd}{|\omega|_{\infty}}
\newcommandx\inwgttd[3][1=]{
\ifthenelse{\equal{#1}{}}
{\tilde{\omega}_\N(#2,#3)}
{\tilde{\omega}^{(#1)}_\N(#2,#3)}
}
\newcommand{\inwgttdbd}{|\tilde{\omega}|_{\infty}}
\newcommandx\iswgt[2][1=]{
\ifthenelse{\equal{#1}{}}
{\Omega_\N(#2)}
{\Omega^{(#1)}_\N(#2)}
}
\newcommandx\iswgtbar[2][1=]{
\ifthenelse{\equal{#1}{}}
{\check{\Omega}_\N(#2)}
{\check{\Omega}^{(#1)}_\N(#2)}
}
\newcommandx\iswgttd[2][1=]{
\ifthenelse{\equal{#1}{}}
{\tilde{\Omega}_\N(#2)}
{\tilde{\Omega}^{(#1)}_\N(#2)}
}
\newcommand{\ubmarkov}{\sigma_+}
\newcommand{\lbmarkov}{\sigma_-}
\newcommand{\lbop}{c_{-}}
\newcommandx{\ulim}[1][1=]{
\ifthenelse{\equal{#1}{}}
{\stackrel{}{\longrightarrow}}
{\xrightarrow[{#1} \rightarrow +\infty]{}}
}
\newcommandx{\Mk}[1][1=]{
\ifthenelse{\equal{#1}{}}
{M}
{M_{#1}}
}
\newcommandx{\mbd}[1][1=]{
\ifthenelse{\equal{#1}{}}
{V_{\N}}
{V^{#1}_{\N}}
}
\newcommand{\meas}[1]{\set{M}(#1)}
\newcommand{\mf}[1]{\set{F}(#1)}
\newcommand{\minmeas}{\varphi}
\newcommand{\mutation}[2]{\mathsf{Mut}\langle#1\rangle}
\newcommand{\N}{N}
\newcommandx\Nin[1][1=]{
\ifthenelse{\equal{#1}{}}
{N_2}
{N_2(#1)}
}
\newcommandx\Nis[1][1=]{
\ifthenelse{\equal{#1}{}}
{N_1}
{N_1(#1)}
}
\newcommand{\nn}{\mathbb{N}}
\newcommand{\nsetpos}{\mathbb{N}^*}
\newcommand{\normdist}{{\sf N}}
\newcommand{\1}[1]{\mathbbm{1}_{#1}}
\newcommandx{\osc}[2][1=]{
\ifthenelse{\equal{#1}{}}
{\operatorname{osc}(#2)}
{\operatorname{osc}^2(#2)}
}
\newcommandx{\op}[1][1=]{
\ifthenelse{\equal{#1}{}}
{Q}
{Q_{#1}}
}
\newcommandx{\optd}[1][1=]{
\ifthenelse{\equal{#1}{}}
{\tilde{Q}}
{\tilde{Q}_{#1}}
}
\newcommand{\pind}[2][]{I_{#1}\ifthenelse{\equal{#1}{}}{}{^{#2}}}
\newcommandx{\plim}[1][1=]{
\ifthenelse{\equal{#1}{}}
{\stackrel{\prob}{\longrightarrow}}
{\stackrel{\prob}{\longrightarrow}}
}
\newcommandx{\pot}[1][1=]{
\ifthenelse{\equal{#1}{}}
{g}
{g_{#1}}
}
\newcommand{\ppart}[2][]{X_{#1}\ifthenelse{\equal{#1}{}}{}{^{#2}}}
\newcommand{\prob}{\mathbb{P}}
\newcommand{\probmeas}[1]{\set{M}_1(#1)}
\newcommandx{\prop}[1][1=]{
\ifthenelse{\equal{#1}{}}
{R}
{R_{#1}}
}
\newcommandx{\pvar}[1][1=]{
\ifthenelse{\equal{#1}{}}
{X_\infty}
{X_{#1}}
}
\newcommandx{\Nprop}[1][1=]{
\ifthenelse{\equal{#1}{}}
{\bold{R}}
{\bold{R}_{#1}}
}
\newcommandx{\NNprop}[1][1=]{
\ifthenelse{\equal{#1}{}}
{\mathcal{R}}
{\mathcal{R}_{#1}}
}
\newcommand{\randpar}[1]{Z_{#1}}
\newcommand{\rmd}{\mathrm{d}}
\newcommand{\rset}{\mathbb{R}}
\newcommand{\rsetnonneg}{\mathbb{R}_+}
\newcommand{\rsetpos}{\mathbb{R}^\ast_+}
\newcommand{\set}[1]{\mathsf{#1}}
\newcommand{\SIL}{\mathsf{SIL}}
\newcommand{\SiL}{\mathsf{SiL}}
\newcommandx{\supn}[2][1=]{
\ifthenelse{\equal{#1}{}}
{\left \| #2 \right \|_\infty}
{ \| #2 \|_\infty}
}
\newcommand{\stsp}{\set{X}}
\newcommand{\stsptd}{\tilde{\set{X}}}
\newcommand{\stfd}{\mathcal{X}}
\newcommand{\stfdtd}{\tilde{\mathcal{X}}}
\newcommand{\targ}[1][]{
\ifthenelse{\equal{#1}{}}
{\eta}
{\eta_{#1}}
}
\newcommand{\untarg}[1][]{
\ifthenelse{\equal{#1}{}}
{\gamma}
{\gamma_{#1}}
}
\newcommand{\targtd}[1][]{
\ifthenelse{\equal{#1}{}}
{\tilde{\eta}}
{\tilde{\eta}_{#1}}
}
\newcommandx{\var}[1][1=]{
\ifthenelse{\equal{#1}{}}
{\varsigma^2}
{\varsigma_{#1}^2}
}
\newcommandx\varmat[1]{D_{#1}}
\newcommand{\WA}{archipelago}
\newcommand{\Xinit}{\chi}
\newcommandx{\X}[2][1=]{
\ifthenelse{\equal{#1}{}}
{X_{\N}(#2)}
{X^{#1}_{\N}(#2)}
}
\newcommand{\Yset}{\set{Y}}
\newcommand{\Yfd}{\mathcal{Y}}
\newcommand{\Zset}{\set{Z}}
\newcommand{\Zfd}{\mathcal{Z}}
\newcommand{\separt}[2][]{
\ifthenelse{\equal{#1}{}}
{\xi^{\N}(#2)}
{\xi_{#1}^{\N}(#2)}
}
\newcommand{\Nepart}[2][]{
\ifthenelse{\equal{#1}{}}
{\bold{\Xi}^{\N}(#2)}
{\bold{\Xi}_{#1}^{\N}(#2)}
}
\newcommand{\separttd}[2][]{
\ifthenelse{\equal{#1}{}}
{\tilde{\xi}^{\N}(#2)}
{\tilde{\xi}_{#1}^{\N}(#2)}
}
\newcommand{\sinwgt}[2][]{
\ifthenelse{\equal{#1}{}}
{\omega^{\N}(#2)}
{\omega_{#1}^{\N}(#2)}
}
\newcommand{\sinwgttd}[2][]{
\ifthenelse{\equal{#1}{}}
{\tilde{\omega}^{\N}(#2)}
{\tilde{\omega}_{#1}^{\N}(#2)}
}
\newcommand{\siswgt}[2][]{
\ifthenelse{\equal{#1}{}}
{\Cmega^{\N}(#2)}
{\Cmega_{#1}^{\N}(#2)}
}
\newcommand{\kac}[1][]{\eta\ifthenelse{\equal{#1}{}}{}{_{#1}}}
\newcommand{\kactd}[1][]{
\tilde{\eta}
\ifthenelse{\equal{#1}{}}{}{_{#1}}
}
\newcommand{\unkac}[1][]{\gamma\ifthenelse{\equal{#1}{}}{}{_{#1}}}
\numberwithin{equation}{section}
\numberwithin{theorem}{section}
\numberwithin{lemma}{section}
\numberwithin{definition}{section}
\numberwithin{remark}{section}
\numberwithin{corollary}{section}
\begin{document}

\begin{frontmatter}

\title{Convergence properties of weighted particle islands with application to the double bootstrap algorithm}
\runtitle{Convergence properties of weighted particle islands}

\author{\fnms{Pierre} \snm{Del Moral}\thanksref{e2}\ead[label=e2,mark]{p.del-moral@unsw.edu.au}},

\author{\fnms{Eric} \snm{Moulines}\thanksref{e3}\ead[label=e3,mark]{eric.moulines@polytechnique.edu}},

\author{\fnms{Jimmy} \snm{Olsson}\thanksref{e4,t4} \ead[label=e4,mark]{jimmyol@kth.se}}
\thankstext{t4}{J.~Olsson is supported by the Swedish Research Council, Grant 2011-5577.}

\author{\fnms{Christelle} \snm{Verg\'e}\corref{}\thanksref{e1,t1}\ead[label=e1,mark]{christelle.verge@onera.fr}}
\thankstext{t1}{C.~Verg\'e is supported by CNES (Centre National d'\'Etudes Spatiales) and ONERA, The French Aerospace Lab.},

\runauthor{P. Del Moral et al.}
\affiliation{University of New South Wales\thanksref{e2},
\'Ecole Polytechnique, INRIA XPOP\thanksref{e3},
KTH Royal Institute of Technology\thanksref{e4},
ONERA/CNES\thanksref{e1}
}

\address[e2]{School of Mathematics and Statistics \\
High Street, Kensington \\
NSW 2052 Australia \\
\printead{e2}}

\address[e3]{ Ecole Polytechnique, \\ 
Centre de Math\'ematiques Appliqu\'ee, INRIA XPOP\\
Route de Saclay  \\
91128 Palaiseau, Cedex \\
France \\
\printead{e3}}

\address[e4]{Department of Mathematics \\
KTH \\
SE-100 44 Stockholm \\
Sweden \\
\printead{e4}}

\address[e1]{ONERA - The French Aerospace Lab \\
91761 Palaiseau \\
France \\
\printead{e1}}

\begin{abstract}
    Particle island models \cite{verge:dubarry:delmoral:moulines:2013} provide a means of parallelization of sequential Monte Carlo methods, and in this paper we present novel convergence results for algorithms of this sort. In particular we establish a central limit theorem---as the number of islands and the common size of the islands tend jointly to infinity---of the double bootstrap algorithm with possibly adaptive selection on the island level. For this purpose we introduce a notion of archipelagos of weighted islands and find conditions under which a set of convergence properties are preserved by different operations on such archipelagos. This theory allows arbitrary compositions of these operations to be straightforwardly analyzed, providing a very flexible framework covering the double bootstrap algorithm as a special case. Finally, we establish the long-term numerical stability of the double bootstrap algorithm by bounding its asymptotic variance under weak and easily checked assumptions satisfied typically for models with non-compact state space.
\end{abstract}

\begin{keyword}
    \kwd{Central limit theorem}
    \kwd{exponential deviation}
    \kwd{parallelization}
    \kwd{particle island models}
    \kwd{particle filter}
    \kwd{sequential Monte Carlo methods}
\end{keyword}

\end{frontmatter}

\section{Introduction}
\label{Introduction}
This paper discusses approaches to parallelization of \emph{sequential Monte Carlo} (SMC) methods (or \emph{particle filters}) approximating normalized \emph{Feynman-Kac distribution flows}. At present, SMC methods are used successfully for online sampling from sequences of complex distributions in a wide range of applications, including nonlinear filtering, signal processing, data assimilation \cite[see, e.g.,][and the references therein]{doucet:defreitas:gordon:2001, chopin:2002, ristic:arulampalam:gordon:2004, cappe:moulines:2005, crisan:rozovskii:2011}, and rare event analysis \cite{delmoral:garnier:2005, cerou:delmoral:furon:guyader:2012}. These algorithms evolve, recursively and randomly in time, a sample of random draws, \emph{particles}, with associated \emph{importance weights}.
The particle cloud is updated through \emph{selection} and \emph{mutation} operations, where the former duplicates or eliminates, through resampling, particles with large or small importance weights, respectively, while the latter disseminates randomly the particles over the state space and updates accordingly the importance weights for further selection.

SMC methods are computationally intensive, which may be critical in online applications. In particular, since the particle interaction enforced by the selection operation is of ``global'' nature (as it draws, with replacement, each particle from the entire particle population rather than from a subset of the same), running SMC methods in parallel on multicore processors is not straightforward. A natural ideal, which is the basis also for the present paper, is to parallelize the algorithm by, instead of considering a single batch of $\N$ particles, simply dividing the particle population into $\Nis$ batches of each $\Nin$ particles (i.e., $\N = \Nis \Nin$), where each batch is referred to as a \emph{particle island} (or simply an \emph{island}).

Parallel implementation of SMC was first proposed in \cite{bolic:djuric:hong:2005} in the form of an algorithm referred to as the \emph{local exchange particle filter} (LEPF), in which groups of particles are spread across computational units.  This algorithm was later improved in \cite{balasingam:bolic:djuric:miguez:2011} (see also \cite{heine:whiteley:2015} where a detailed convergence analysis of the LEPF is carried out). As indicated by the almost $300$ Google Scholar citations at the time of writing, the LEPF has triggered a substantial interest in parallelization of SMC. Most notably, variations of the LEPF are found in the contexts of multitarget tracking \cite{sutharsan:kirubaraja:lang:2012}, optical tracking \cite{sankaranarayanan:srivastava:chellappa:2008}, and state estimation \cite{rosen:medvedev:2013}.

In the present paper we consider an algorithm suggested in \cite{verge:dubarry:delmoral:moulines:2013}, which may be viewed as a variant of the LEPF algorithm.  In this framework, each island evolves according to the standard SMC scheme subjecting alternatingly the subpopulation to selection and mutation. Unfortunately, the division of the particle population introduces additional bias which may be of note for moderate island sizes $\Nin$. Thus, in \cite{verge:dubarry:delmoral:moulines:2013} it is proposed to reduce this bias by performing additional selection also on the \emph{island level} by resampling multinomially, when needed, the islands according to probabilities proportional to the weight averages over the different subpopulations. Selection on the island level may be performed systematically, as in the \emph{double bootstrap} (\DB) \emph{algorithm} (in the present paper we have chosen to denote the algorithm ``{\DB}'' rather than ``2B'', as we consider it more correctly described as a ``square bootstrap'' rather than a ``double bootstrap''; nevertheless, the algorithm must not be confused with the \emph{SMC square} (SMC$^2$) \emph{algorithm} proposed in \cite{chopin:jacob:papaspiliopoulos:2012}, which is, if still of a related form, of a different nature) or may be activated adaptively by some criterion measuring the skewness of the island weights. The latter approach will be referred to by us as the \emph{double bootstrap algorithm with adaptive selection on the island level} (\DBAIS). At the end of the day, a sequence of Monte Carlo estimators is obtained by weighing up, using the island weights, the self-normalized empirical measures associated with the different particle islands.

Needless to say, the theoretical analysis of \DB-type algorithms is challenging due to the intricate dependence structure imposed by the island selection operation and the ``double asymptotics'' introduced by the two sample sizes $\Nis$Ê and $\Nin$. The authors of \cite{verge:dubarry:delmoral:moulines:2013}, who base their theoretical analysis on a reformulation of the particle island model as an extended Feynman-Kac model on an augmented space of dimension $\Nin$, detour the latter difficulty by letting first the number $\Nis$ of islands and then the number $\Nin$ of individuals of each island tend to infinity. By \emph{separating} the asymptotics in this manner, the analysis can, not surprisingly, at least in the case of the {\DB} algorithm, be handled using classical techniques from SMC analysis, and in this way the authors establish convergence of bias and variance when these quantities are scaled with the size $\N$Ê of the system. However, working with this somewhat synthetic mode of convergence (with separated limits), the authors fail to supplement their consistency results with a central limit theorem (CLT). Moreover, they do not provide any convergence results for the {\DBAIS} algorithm.

Nevertheless, even though the islands are allowed to interact through selection, any two individuals of the system should become more and more statistically independent as the number of islands as well as the size of the islands grow (cf. the \emph{propagation of chaos} property of standard SMC methods \cite{delmoral:2004}). Thus, we may expect a law of large number as well as a CLT to hold when $\Nis$ and $\Nin$ tend \emph{jointly} to infinity. Moreover, in analogy with similar result for standard, single batch SMC methods \cite[see][]{delmoral:guionnet:1999, chopin:2004, kuensch:2005, douc:moulines:2008}, we may expect the rate of such a CLT to be $\sqrt{\N}$.

The aim of the present paper is to improve the existing theoretical analysis of particle island models by establishing results of the previous type. For this purpose we will introduce a notion of \emph{archipelagos of weighted islands} that generalizes the particle models studied in \cite{verge:dubarry:delmoral:moulines:2013} and consider three kinds of convergence properties of such archipelagos, namely \emph{consistency} (convergence in probability), \emph{asymptotic normality} (convergence in distribution in terms of a CLT with rate $\sqrt{N}$), and \emph{large deviation} (an exponential inequality of Hoeffing-type that holds uniformly over all islands). After this, we perform single-step analyses of three kinds of operations on archipelagos, namely \emph{selection on the island level}, \emph{selection on the individual level}, and \emph{mutation}, and show how these operations preserve the convergence properties under consideration. As a consequence, we are able to establish that the convergence properties in question are preserved through an \emph{arbitrary} composition of the mentioned operations, including the {\DB} algorithm as a special case, and to provide explicit expressions of the associated asymptotic variance. Moreover, the flexibility of our results, which generalize those obtained in \cite{douc:moulines:2008} for standard, single batch SMC methods, makes these well-suited for analyzing particle island algorithms with adaptive resampling strategies such as the {\DBAIS} scheme, for which we provide a detailed analysis (including a CLT). In our proofs, which rely on limit theorems for triangular arrays obtained in \cite{douc:moulines:2008}, the working process is highly inductive. Since the intricate dependence structures of the particle model force us to define triangular arrays on the island level, we will often, when establishing the preservation of a certain convergence property of a certain operation, face a situation where the only way of obtaining some critical limit or bound is to add the same to the list of induction hypotheses. After this, one establishes that the operation in question preserves also this additional property (limit or bound), by possibly adding, if needed, further assumptions to the list, and so on. At the end of the day, we have obtained a more or less minimal set (a hexad in the case of asymptotic normality) of properties that need to be checked at each induction step. In this machinery, the large deviation property is a critical component, since it provides, as a consequence of the distribution-free character of Hoeffding-type inequalities, uniform control of the deviation of the empirical measures associated with the different islands from their common mean.

As a last contribution, we establish the numerical stability of the {\DB} algorithm by bounding uniformly the asymptotic variance of its output. We carry through this analysis under a \emph{strong mixing condition} as well a \emph{local Doeblin condition} (see \autoref{proof_bound:var} for details), where the latter is considerably weaker than the former and easily verified for a large variety of models with possibly non-compact state space. When operating under the local Doeblin condition, we let the Feynman-Kac model be indexed by a strictly stationary sequence of random parameters (corresponding, e.g., to random observations in the case of optimal filtering in \emph{hidden Markov models}) and show, using novel results in \cite{douc:moulines:olsson:2014}, that the sequence of asymptotic variances is stochastically bounded (tight) in this setting. On the other hand, imposing the strong mixing assumption, which is classical in the literature of SMC analysis \cite{delmoral:guionnet:2001,delmoral:2004}, allows an explicit, deterministic uniform variance bound to be obtained using standard methods.

To sum up, the contribution of the present paper is threefold, since we
\begin{itemize}
    \item introduce a general theory of archipelagos of weighted particle islands and analyze thoroughly the convergence properties, as the number $\Nis$Ê of islands and the common size $\Nin$ of the islands tend jointly to infinity, of such objects when subjected to certain operations. For this purpose, we develop a machinery that allows triangular arrays defined on the island level to be analyzed and which may be used for handling double asymptotics appearing in other kinds of island-type particle algorithms.
    \item apply the previous theoretical results to the {\DB} and {\DBAIS} algorithms, yielding laws of large numbers and CLTs for these schemes.
    \item establish the long-term stability of the {\DB} algorithm under weak and easily checked assumptions.
\end{itemize}

The paper is organized as follows. In \autoref{sec:preliminaries} we introduce, after some prefatory notation, the concept of archipelagos of weighted islands, and define the three different convergence properties of such archipelagos. Our main results are, along with the three different operations under consideration, presented in \autoref{sec:main:results}, and \autoref{sec:applications} discusses the application of these results to the {\DBAIS} algorithm. In particular, in \autoref{coro:ESS:bootstrap} we establish the asymptotic normality of this algorithm, which implies the asymptotic normality of the {\DB} algorithm as a special case (see \autoref{coro:systematic:SIL}), and provide a formula for the asymptotic variance; moreover, in \autoref{long-term:stability:compact:case} establish the long-term stability of the algorithm by showing that the asymptotic variance of the {\DB} algorithm may, under suitable assumptions, be bounded uniformly. The most significative proofs are gathered in \autoref{sec:proofs}, and in order to avoid repetition we have put some additional proofs using similar techniques in the supplementary paper \cite{verge:delmoral:moulines:olsson:2014:supp}. Finally, \autoref{appenbdix:technical:lemmas} provides some technical results that are used frequently in \autoref{sec:proofs}.


\section{Preliminaries}
\label{sec:preliminaries}

\subsection{Some notation}
\label{sec:notation}
For $(m, n) \in \zset^2$ such that $m \leq n$ we denote $\intvect{m}{n} \eqdef \{m, m + 1, \ldots, n \} \subset \ZZ$. Moreover, we denote by and $\rsetnonneg$ and $\rsetpos$ the sets of nonnegative and positive real numbers, respectively, and by $\nsetpos$ the set of positive integers. For any quantities $\{Êa_\ellÊ\}_{\ell = m}^n$ we will use the vector notation $a_{m:n} = (a_m, \ldots, a_n)$ with the convention $a_{m:n} = \varnothing$ if $m > n$.

In the sequel we assume that all random variables are defined on a common probability space $(\Omega,\mathcal{F},\PP)$. For some given measurable space $(\stsp, \stfd)$ we denote by $\meas{\stfd}$ and $\probmeas{\stfd} \subset \meas{\stfd}$ the set of measures and probability measures on $(\stsp, \stfd)$, respectively. In addition, we denote by $\mf{\stfd}$ the set of real-valued measurable functions on $(\stsp, \stfd)$ and by $\bmf{\stfd} \subset \mf{\stfd}$ the set of bounded such functions. For $h \in \bmf{\stfd}$ we denote the sup norm $\supn[txt]{h} \eqdef \sup_{x \in \stsp} |h(x)|$ and the oscillator norm $\osc{h} \eqdef \sup_{(x, x') \in \stsp^2} |h(x) - h(x')|$. For any $\nu \in \meas{\stfd}$ and $f \in \mf{\stfd}$ we denote by $\nu f \eqdef \int f(x) \, \nu(\rmd x)$  the Lebesgue integral of $f$ under $\nu$ whenever this is well-defined. Now, given also some other $(\Yset, \Yfd)$ measurable space, an \emph{unnormalized transition kernel} $K$ from $(\stsp, \stfd)$ to $(\Yset, \Yfd)$ is a mapping from $\stsp \times \Yfd$ to $\rsetnonneg$ such that for all $\set{A} \in \Yfd$, $x \mapsto K(x, \set{A})$ is a nonnegative measurable function on $\stsp$ and for all $x \in \stsp$, $\set{A} \mapsto K(x, \set{A})$ is a measure on $(\Yset, \Yfd)$. If $K(x, \Yset) = 1$ for all $x \in \stsp$, then $K$ is called a \emph{transition kernel} (or simply a \emph{kernel}). The kernel $K$ induces two integral operators, one acting on functions and the other on measures. More specifically, let $f \in \mf{\stfd}$ and $\nu \in \meas{\stfd}$ and define the measurable function
$$
	K f : \stsp \ni x \mapsto \int f(y) \, K(x, \rmd y)
$$
and the measure
$$
	\nu K : \Yfd \ni \set{A} \mapsto \int K(x, \set{A}) \, \nu(\rmd x)
$$
whenever these quantities are well-defined. Finally, let $K$ be as above and let $L$ be another unnormalized transition kernels from $(\Yset, \Yfd)$ to some third measurable space $(\Zset, \Zfd)$; then we define the \emph{product} of $K$ and $L$ as the unnormalized transition kernel
$$
K L : \stsp \times \Zfd \ni (x, \set{A}) \mapsto \int K(x, \rmd y) \, L(y, \set{A})
$$
from $(\stsp, \stfd)$ to $(\Zset, \Zfd)$ whenever this is well-defined.

\subsection{Weighted particle islands and archipelagos}
\label{sec:weighted:archipelagos}
Let $\{ \Nis[\N] \}_{\N \in \nsetpos}$ and $\{ \Nin[\N] \}_{\N \in \nsetpos}$ be sequences of positive integers such that $\Nis[\N] \Nin[\N] = \N$ for all $\N \in \nsetpos$ and $\Nis[\N] \rightarrow \infty$ and $\Nin[\N] \rightarrow \infty$ as $\N \rightarrow \infty$. For lucidity we will often omit the index $\N$ from the notation and write simply $\Nis$ and $\Nin$. In the following, let $\{ (\epart{i}{j}, \inwgt{i}{j}) ; (i, j) \in \intvect{1}{\Nis} \times \intvect{1}{\Nin} \}$ be an array of $\stsp$-valued random variables (the $\xi_\N$) with associated nonnegative (possibly unnormalized) weights (the $\omega_\N$). For each $i \in \intvect{1}{\Nis}$, the subset $\{ (\epart{i}{j}, \inwgt{i}{j}) \}_{j = 1}^{\Nin}$ of the array will be referred to as an \emph{island}. With this terminology, a random variable $\epart{i}{j}$ in the array will be referred to as an \emph{individual} or a \emph{particle}. Finally, we associate each island $\{ (\epart{i}{j}, \inwgt{i}{j}) \}_{j = 1}^{\Nin}$ with a nonnegative (possibly unnormalized) weight $\iswgt{i}$. In the following, the set $\{ (\iswgt{i}, \{ (\epart{i}{j}, \inwgt{i}{j}) \}_{j = 1}^{\Nin}) \}_{i = 1}^{\Nis}$ of islands with associated weights will be referred to as an \emph{\WA} on $(\stsp, \stfd)$. We will always require the island weights to be positive and the particle weights to be positive and uniformly bounded, i.e., there exists some constant $\inwgtbd$ such that $0 < |\inwgt{i}{j}| \leq \inwgtbd$ for all $(i, j) \in \intvect{1}{\Nis[\N]} \times \intvect{1}{\Nin[\N]}$ and $\N \in \nsetpos$.


\subsection{Convergence properties of archipelagos}
\label{sec:properties:WAs}

In the following, any limit ($\longrightarrow$), limit in probability ($\plim$), and limit in distribution ($\dlim$) is supposed to hold as $\N \rightarrow \infty$ if not specified differently.

\begin{definition}[consistency] 
    An archipelago $\{(\iswgt{i}, \{(\epart{i}{j}, \inwgt{i}{j}) \}_{j = 1}^{\Nin}) \}_{i = 1}^{\Nis}$ on $(\stsp, \stfd)$ is said to be \emph{consistent for} $\targ \in \probmeas{\stfd}$ if for all $h \in \bmf{\stfd}$,
    \begin{hypC} \label{def:consistency}
    $
    	\displaystyle \quad \sum_{i = 1}^{\Nis} \frac{\iswgt{i}}{\sum_{i' = 1}^{\Nis} \iswgt{i'}} \sum_{j = 1}^{\Nin} \frac{\inwgt{i}{j}}{\sum_{j' = 1}^{\Nin} \inwgt{i}{j'}} h(\epart{i}{j}) \plim[\N] \targ h \eqsp,
    $
    \end{hypC}
    \begin{hypC} \label{def:consistency:max}
    $
    	\displaystyle \quad \max_{i \in \intvect{1}{\Nis}} \frac{\iswgt{i}}{\sum_{i' = 1}^{\Nis} \iswgt{i'}} \plim[N] 0 \eqsp.
    $
    \end{hypC}
\end{definition}
Note that the estimator in \C{def:consistency} assigns the weight $\iswgt{i} / \sum_{i' = 1}^{\Nis} \iswgt{i'}$ to the self-normalized importance sampling estimator $\sum_{j = 1}^{\Nin} \inwgt{i}{j} h(\epart{i}{j}) / \sum_{j' = 1}^{\Nin} \inwgt{i}{j'}$ associated with island $i \in \intvect{1}{\Nis}$, and the smallness condition \C{def:consistency:max} formalizes the fact that this weight, and thus the contribution of each island to the estimator associated with the archipelago as a whole, should vanish asymptotically as $N \rightarrow \infty$.


\begin{definition}[exponential deviation] \label{def:deviation:inequalities}
        In the following, let $\targ \in \probmeas{\stfd}$ and $\Hlw$ and $\{c_\ell \}_{\ell = 1}^2$ be positive constants. An archipelago $\{(\iswgt{i}, \{(\epart{i}{j}, \inwgt{i}{j}) \}_{j = 1}^{\Nin}) \}_{i = 1}^{\Nis}$  on $(\stsp, \stfd)$ is said to satisfy \emph{exponential deviation for} $(\targ, \Hlw, \{ c_{\ell} \}_{\ell = 1}^2)$ if for all $h \in \bmf{\stfd}$, $\Nis \in \nsetpos$, $\Nin \in \nsetpos$, and $\varepsilon > 0$,
	\begin{hypD} \label{def:deviation:inequality:individual}
	$
    		\displaystyle \prob \left( \max_{i \in \intvect{1}{\Nis}} \left| \frac{1}{\Nin} \sum_{j = 1}^{\Nin} \inwgt{i}{j} h(\epart{i}{j}) - \Hlw \times \targ h \right| \geq \varepsilon \right) \leq \Hc \Nis \exp \left( - \He \Nin \frac{\varepsilon^2 }{\supn{h}^2} \right) \eqsp.
	 $
	\end{hypD}
\end{definition}
The exponential deviation inequality in \D{def:deviation:inequality:individual} provides uniform control on the deviations of the unnormalized importance sampling estimators $\sum_{j = 1}^{\Nin} \inwgt{i}{j} h(\epart{i}{j}) / \Nin$, $i \in \intvect{1}{\Nis}$, associated with the different islands from their common mean level $\Hlw \times \targ h$. The factor $\Nis$ on the right hand side of the equality is required to compensate for the maximum with respect to the island index. Assumption~\D{def:deviation:inequality:individual} implies, by a straightforward extension of the generalized Hoeffding inequality derived in \cite[Lemma~4]{douc:garivier:moulines:olsson:2010}, that also the deviations of the properly normalized importance sampling estimators associated with the different islands from the expectations targeted by the archipelago can be uniformly controlled as follows.
\begin{lemma} \label{lem:deviation:properly:normalized}
    Assume that \D{def:deviation:inequality:individual} holds for $(\targ, \Hlw, \{ c_{\ell} \}_{\ell = 1}^2)$. Then for all $h \in \bmf{\fd}$, $\Nis \in \nsetpos$, $\Nin \in \nsetpos$, and $\varepsilon > 0$,
    \begin{equation} \label{eq:deviation:normalized:estimator}
        \prob \left( \max_{i \in \intvect{1}{\Nis}} \left| \sum_{j = 1}^{\Nin} \frac{\inwgt{i}{j}}{\sum_{j' = 1}^{\Nin} \inwgt{i}{j'}} \{h(\epart{i}{j}) - \targ h \} \right| \geq \varepsilon \right)
        \leq 2 \Hc \Nis \exp \left( - \He \Nin \frac{\varepsilon^2 \Hlw^2}{4 \supn{h}^2} \right) \eqsp.
    \end{equation}
\end{lemma}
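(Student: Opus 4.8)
The plan is to reduce the self-normalized (ratio) deviation in \eqref{eq:deviation:normalized:estimator} to a one-sided \emph{unnormalized} deviation, to which the generalized Hoeffding inequality of \cite[Lemma~4]{douc:garivier:moulines:olsson:2010} applies directly; the version I need is already uniform over islands, since this uniformity (the $\max_i$ and the prefactor $\Nis$) is exactly what is furnished by \D{def:deviation:inequality:individual}. Writing $T_{\N}(i) \eqdef \sum_{j=1}^{\Nin} \inwgt{i}{j} \{ h(\epart{i}{j}) - \targ h \} \big/ \sum_{j'=1}^{\Nin} \inwgt{i}{j'}$, I would first note that $T_{\N}(i)$ is invariant under adding a constant to $h$ (the normalized island weights sum to one, and $\targ h$ centers accordingly), which is the structural reason the \emph{oscillation} of $h$, rather than its sup-norm, should govern the final bound. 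I then split the two-sided event into $\{ T_{\N}(i) \geq \varepsilon \}$ and $\{ T_{\N}(i) \leq - \varepsilon \}$, so that it suffices to bound each of $\prob ( \max_i T_{\N}(i) \geq \varepsilon )$ and $\prob ( \max_i (- T_{\N}(i)) \geq \varepsilon )$ by $\Hc \Nis \exp ( - \He \Nin \varepsilon^2 \Hlw^2 / (4 \supn{h}^2) )$; the leading factor $2$ in \eqref{eq:deviation:normalized:estimator} will come from the union bound over these two events.

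The decisive step is a denominator-clearing reformulation. Since $\sum_{j'} \inwgt{i}{j'} > 0$, the event $\{ T_{\N}(i) \geq \varepsilon \}$ is identical to
\[
\left\{ \frac{1}{\Nin} \sum_{j = 1}^{\Nin} \inwgt{i}{j} \, g^+(\epart{i}{j}) \geq 0 \right\}, \qquad g^+ \eqdef h - \targ h - \varepsilon .
\]
Because $\targ g^+ = - \varepsilon$, assumption \D{def:deviation:inequality:individual} asserts that the unnormalized estimator of $g^+$ concentrates about $\Hlw \, \targ g^+ = - \Hlw \varepsilon$; hence the event above is contained in $\{ | \Nin^{-1} \sum_j \inwgt{i}{j} g^+(\epart{i}{j}) - \Hlw \, \targ g^+ | \geq \Hlw \varepsilon \}$, i.e.\ a deviation above the mean by exactly $\Hlw \varepsilon$. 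Taking the union over $i \in \intvect{1}{\Nis}$ and applying \D{def:deviation:inequality:individual} to the bounded measurable function $g^+$ with threshold $\Hlw \varepsilon$ produces a bound of the form $\Hc \Nis \exp ( - \He \Nin \Hlw^2 \varepsilon^2 / \supn{g^+}^2 )$; the symmetric choice $g^- \eqdef h - \targ h + \varepsilon$ handles $\{ T_{\N}(i) \leq - \varepsilon \}$ in the same way.

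It remains to control $\supn{g^\pm}$. The naive estimate $\supn{g^+} \leq \supn{h} + |\targ h| + \varepsilon$ is too lossy and yields the wrong constant; instead one invokes the fact that the Hoeffding bound is really driven by the oscillation $\osc{g^\pm} = \osc{h}$ (shift-invariant and unchanged by subtracting $\targ h \pm \varepsilon$), together with $\osc{h} \leq 2 \supn{h}$. This oscillation-based sharpening is precisely the ``straightforward extension'' of \cite[Lemma~4]{douc:garivier:moulines:olsson:2010} alluded to in the statement, its bounded-differences argument depending on oscillations rather than sup-norms. Substituting $\supn{g^\pm}^2$ by $\osc{h}^2 \leq 4 \supn{h}^2$ turns the exponent into $- \He \Nin \Hlw^2 \varepsilon^2 / (4 \supn{h}^2)$, and summing the two one-sided contributions gives the factor $2$, establishing \eqref{eq:deviation:normalized:estimator}.

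The main obstacle is the constant bookkeeping rather than the probabilistic content. The denominator-clearing step is what converts the awkward ratio $T_{\N}(i)$ into a linear (unnormalized) deviation that \D{def:deviation:inequality:individual} can reach, and tracking oscillations instead of sup-norms is what delivers the sharp factor $4$; by contrast, a direct split into a numerator deviation and a denominator deviation would leave an $\varepsilon$-free term (from controlling $\Nin^{-1}\sum_j \inwgt{i}{j}$ away from $\Hlw$) and a strictly worse exponential constant, so the shifted one-sided reformulation is essential.
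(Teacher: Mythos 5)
Your overall strategy---clearing the random denominator via the shifted functions $g^{\pm} \eqdef h - \targ h \mp \varepsilon$, observing that $\targ g^{\pm} = \mp\varepsilon$ so that each one-sided event is contained in a deviation of size $\Hlw\varepsilon$ for the unnormalized statistic, and summing the two sides to produce the factor $2$---is exactly the mechanism behind the generalized Hoeffding inequality of \cite[Lemma~4]{douc:garivier:moulines:olsson:2010} that the paper invokes in place of a written proof, and up to that point every step is correct. The gap is in your final step, where you control $\supn{g^{\pm}}$. You cannot ``invoke the fact that the Hoeffding bound is really driven by the oscillation'': \D{def:deviation:inequality:individual} is an abstract \emph{hypothesis} on the archipelago, stated with $\supn{h}^{2}$ in the exponent, and in this paper it is propagated inductively through the SIL, SiL, and Mutation operations; after, say, an island-selection step the particles are strongly dependent, and there is no independent-summand or bounded-differences structure whose proof you could rerun with oscillations in place of sup-norms---all that is available is the literal statement of \D{def:deviation:inequality:individual}. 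As written, applying \D{def:deviation:inequality:individual} to $g^{+}$ yields the exponent $-\He \Nin \Hlw^{2}\varepsilon^{2}/\supn{g^{+}}^{2}$ with only the crude bound $\supn{g^{+}} \leq \osc{h} + \varepsilon \leq 2\,\osc{h}$ (on the nontrivial range $\varepsilon \leq \osc{h}$), which produces the constant $16 \supn{h}^{2}$ in the denominator rather than the claimed $4 \supn{h}^{2}$.

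The fix is short and stays entirely inside your framework, with \D{def:deviation:inequality:individual} used as a black box. Since your $T_{\N}(i)$ is a convex combination of values of $h - \targ h$, you have $T_{\N}(i) \leq \sup_{x} h(x) - \targ h$ $\prob$-a.s., so the event $\{ \max_{i} T_{\N}(i) \geq \varepsilon \}$ is empty unless $\varepsilon \leq \sup_{x} h(x) - \targ h$; on that range,
\[
	\supn{g^{+}} = \left( \sup_{x} h(x) - \targ h - \varepsilon \right) \vee \left( \targ h - \inf_{x} h(x) + \varepsilon \right) \leq \osc{h} \leq 2 \supn{h} \eqsp,
\]
because the troublesome lower tail satisfies $\targ h - \inf_{x} h(x) + \varepsilon \leq \targ h - \inf_{x} h(x) + \sup_{x} h(x) - \targ h = \osc{h}$. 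The symmetric reduction $\varepsilon \leq \targ h - \inf_{x} h(x)$ handles $g^{-}$. With this genuinely sup-norm bound, each one-sided application of \D{def:deviation:inequality:individual} with threshold $\Hlw \varepsilon$ gives $\Hc \Nis \exp( - \He \Nin \Hlw^{2} \varepsilon^{2} / (4 \supn{h}^{2}))$, and the two sides together yield \eqref{eq:deviation:normalized:estimator}. Your closing remark about the numerator/denominator split is essentially right: that route is not unusable (the $\varepsilon$-free denominator term can also be absorbed on the nontrivial range of $\varepsilon$), but it provably degrades the constant, so the shifted one-sided reformulation is indeed the correct device.
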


Finally, we introduce a third convergence property describing weak convergence in the sense of a CLT. Let $\normdist$ denote the Gaussian distribution.

\begin{definition}[asymptotic normality] \label{def:asymptotic:normality}
	In the following, let
	\begin{itemize}
		\item $\asvar : \bmf{\stfd} \rightarrow \rset_+^*$ and $\asvarANone : \bmf{\stfd} \rightarrow \rset_+^*$ be functionals.
		\item $\targ \in \probmeas{\stfd}$ and $\{Ê\condmeas{\ell} \}_{\ell = 1}^3 \subset \meas{\stfd}$ be measures.
	\end{itemize}
	An archipelago $\{(\iswgt{i}, \{(\epart{i}{j}, \inwgt{i}{j}) \}_{j = 1}^{\Nin}) \}_{i = 1}^{\Nis}$
	on $(\stsp, \stfd)$ is said to be \emph{asymptotically normal for} $(\targ, \asvar, \asvarANone, \{\condmeas{\ell} \}_{\ell = 1}^3)$ if for all $h \in \bmf{\stfd}$,
	\begin{hypAN} \label{ass:AN:CLT}
	$
		\displaystyle \quad \sqrt{\N} \sum_{i = 1}^{\Nis} \frac{\iswgt{i}}{\sum_{i' = 1}^{\Nis} \iswgt{i'}} \sum_{j = 1}^{\Nin} \frac{\inwgt{i}{j}}{\sum_{j' = 1}^{\Nin} \inwgt{i}{j'}} \{ h(\epart{i}{j})  - \targ h \} \dlim[\N] \normdist(0, \asvar(h))
	$
	\end{hypAN}
	and, in addition,
	\begin{hypAN} \label{ass:AN:squared:islands}
	$
		\displaystyle \quad \Nin \sum_{i = 1}^{\Nis} \frac{\iswgt{i}}{\sum_{i' = 1}^{\Nis} \iswgt{i'}} \left( \sum_{j = 1}^{\Nin} \frac{\inwgt{i}{j}}{\sum_{j' = 1}^{\Nin} \inwgt{i}{j'}} \{Êh(\epart{i}{j}) - \targ h \} \right)^2 \plim[N] \asvarANone(h) \eqsp,
	$
	\end{hypAN}
	\begin{hypAN} \label{ass:AN:squared:iswgts}
	$
		\displaystyle \quad \Nis \sum_{i = 1}^{\Nis} \left( \frac{\iswgt{i}}{\sum_{i' = 1}^{\Nis} \iswgt{i'}} \right)^2 \sum_{j = 1}^{\Nin} \frac{\inwgt{i}{j}}{\sum_{j' = 1}^{\Nin} \inwgt{i}{j'}} h(\epart{i}{j}) \plim[N] \condmeas{1} h \eqsp,
	$
	\end{hypAN}
	\begin{hypAN} \label{ass:AN:mut-1}
	$
		\displaystyle \quad \N \sum_{i = 1}^{\Nis} \left( \frac{\iswgt{i}}{\sum_{i' = 1}^{\Nis} \iswgt{i'}} \right)^2 \sum_{j = 1}^{\Nin} \left( \frac{\inwgt{i}{j}}{\sum_{j' = 1}^{\Nin} \inwgt{i}{j'}} \right)^2 h(\epart{i}{j}) \plim[N] \condmeas{2} h \eqsp,
	$
	\end{hypAN}
	\begin{hypAN} \label{ass:AN:mut-2}
	$
		\displaystyle \quad \Nin \sum_{i = 1}^{\Nis} \frac{\iswgt{i}}{\sum_{i' = 1}^{\Nis} \iswgt{i'}} \sum_{j = 1}^{\Nin} \left( \frac{\inwgt{i}{j}}{\sum_{j' = 1}^{\Nin} \inwgt{i}{j'}} \right)^2 h(\epart{i}{j}) \plim[N] \condmeas{3} h \eqsp,
$
	\end{hypAN}
	\begin{hypAN} \label{def:tightness:island}
	$
		\displaystyle \quad \lim_{\lambda \rightarrow \infty} \sup_{N \in \nsetpos} \prob \left(\max_{i \in \intvect{1}{\Nis}} \Nis \frac{\iswgt{i}}{\sum_{i'=1}^{\Nis}\iswgt{i'}} \geq \lambda \right) = 0 \eqsp.
$
	\end{hypAN}
\end{definition}
Here \AN{ass:AN:CLT} corresponds to a CLT and implies straightforwardly \C{def:consistency}. In addition, since \AN{def:tightness:island} implies immediately \C{def:consistency:max} we may conclude that asymptotic normality is stronger than consistency. Assumptions~\AN[ass:AN:squared:islands]{def:tightness:island} guarantee the existence of asymptotic variance; see \autoref{rem:AN:conditions} for further comments.


\section{Main results}
\label{sec:main:results}
\subsection{Operations on weighted archipelagos}
In the following we let $\disc(\{ a(i) \}_{i = 1}^M)$ denote the categorical probability distribution induced by a set $\{ a(i) \}_{i = 1}^M$ of positive (possibly unnormalized) numbers; thus, writing $V \sim \disc(\{ a(i) \}_{i = 1}^M)$ means that the random variable $V$ takes the value $i \in \intvect{1}{M}$ with probability $a(i) / \sum_{i' = 1}^M a(i')$.

\subsubsection{Selection on the island level}
The first operation, described in Algorithm~\ref{alg:island:selection}, is referred to as \emph{multinomial selection on the island level} (SIL). This operation consists in converting an archipelago $\{(\iswgt{i}, \{(\epart{i}{j}, \inwgt{i}{j}) \}_{j = 1}^{\Nin}) \}_{i = 1}^{\Nis}$ targeting some distribution $\targ$ into an archipelago $\{Ê(1, \{Ê(\eparttd{i}{j}, \inwgttd{i}{j}) \}_{j = 1}^{\Nin}) \}_{i = 1}^{\Nis}$ with uniform island weights targeting the same distribution $\targ.$ This step allows islands with small/large weights to be eliminated/duplicated, respectively. More precisely, a new family of islands is generated from the existing ones by selecting, conditionally independently given the input archipelago, new islands according to probabilities proportional to the island weights $\{\iswgt{i}\}_{i = 1}^{\Nis}$. After this, the weights and the particles of the selected islands are copied deterministically (which of course implies that the particle weights of the new archipelago are bounded by the same constant $\inwgtbd$ as the ancestor archipelago).

\begin{algorithm}[H] \label{alg:island:selection}
	\KwData{$\{Ê(\iswgt{i}, \{Ê(\epart{i}{j}, \inwgt{i}{j}) \}_{j = 1}^{\Nin}) \}_{i = 1}^{\Nis}$}
	\KwResult{$\{Ê(1, \{Ê(\eparttd{i}{j}, \inwgttd{i}{j}) \}_{j = 1}^{\Nin}) \}_{i = 1}^{\Nis}$}
	\For{$i \gets 1$ \KwTo $\Nis$}{
	draw $\isind{i} \sim \disc(\{ \iswgt{i'} \}_{i' = 1}^{\Nis})$\;
		\For{$j \gets 1$ \KwTo $\Nin$}{
			set $\eparttd{i}{j} \gets \epart{{\isind{i}}}{j}$\;
			set $\inwgttd{i}{j} \gets \inwgt{{\isind{i}}}{j}$\;
		}
	}
	\caption{Multinomial selection on the island level (SIL)}
\end{algorithm}
\bigskip
In the following we will abbreviate Algorithm~\autoref{alg:island:selection} by writing
$$
    \mbox{``}\{Ê(1, \{Ê(\eparttd{i}{j}, \inwgttd{i}{j}) \}_{j = 1}^{\Nin}) \}_{i = 1}^{\Nis} \gets \SIL \left( \{Ê(\iswgt{i}, \{Ê(\epart{i}{j}, \inwgt{i}{j}) \}_{j = 1}^{\Nin}) \}_{i = 1}^{\Nis} \right) \mbox{''} \eqsp.
$$

The following theorems state conditions under which SIL preserves consistency, exponential deviation, and asymptotic normality.
The input and output in Algorithm~\autoref{alg:island:selection}
are respectively denoted by $\{(\iswgt{i}, \{(\epart{i}{j}, \inwgt{i}{j}) \}_{j = 1}^{\Nin}) \}_{i = 1}^{\Nis}$ and $\{(1, \{(\eparttd{i}{j}, \inwgttd{i}{j}) \}_{j = 1}^{\Nin}) \}_{i = 1}^{\Nis}$ and all proofs are found in \autoref{sec:proofs}.
\begin{theorem} \label{thm:consistency:island:selection}
    Assume that $\{(\iswgt{i}, \{(\epart{i}{j}, \inwgt{i}{j}) \}_{j = 1}^{\Nin}) \}_{i = 1}^{\Nis}$ is consistent for $\targ$. Then also $\{Ê(1, \{Ê(\eparttd{i}{j}, \inwgttd{i}{j}) \}_{j = 1}^{\Nin}) \}_{i = 1}^{\Nis}$ is consistent for $\targ$.
\end{theorem}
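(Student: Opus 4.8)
The plan is to condition on the input archipelago and to view the island-level resampling in Algorithm~\autoref{alg:island:selection} as a conditionally i.i.d.\ sampling mechanism, so that the output estimator becomes a sample average whose conditional mean is exactly the (consistent) input estimator. First, the smallness condition \C{def:consistency:max} is immediate for the output: since its island weights are all equal to one, the corresponding maximal normalized weight is $\max_{i \in \intvect{1}{\Nis}} \Nis^{-1} = \Nis^{-1}$, which tends to $0$ deterministically as $\N \rightarrow \infty$. Thus the whole task reduces to verifying \C{def:consistency} for the output.

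Fix $h \in \bmf{\stfd}$ and, for each input island $i \in \intvect{1}{\Nis}$, abbreviate the self-normalized estimator
$$
	S_i(h) \eqdef \sum_{j = 1}^{\Nin} \frac{\inwgt{i}{j}}{\sum_{j' = 1}^{\Nin} \inwgt{i}{j'}} h(\epart{i}{j}) \eqsp,
$$
noting that $|S_i(h)| \leq \supn{h}$ because $S_i(h)$ is a convex combination of values of $h$. Since Algorithm~\autoref{alg:island:selection} copies \emph{entire} islands, the output island $i$ reproduces island $\isind{i}$ verbatim, and hence its self-normalized estimator equals $S_{\isind{i}}(h)$; as the output island weights are uniform, the quantity appearing in \C{def:consistency} for the output is the sample average $\Nis^{-1} \sum_{i = 1}^{\Nis} S_{\isind{i}}(h)$. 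I would then let $\mathcal{F}$ denote the $\sigma$-field generated by the input archipelago and use that, conditionally on $\mathcal{F}$, the indices $\{ \isind{i} \}_{i = 1}^{\Nis}$ are i.i.d.\ with $\prob(\isind{1} = k \mid \mathcal{F}) = \iswgt{k} / \sum_{k' = 1}^{\Nis} \iswgt{k'}$. The crucial observation is that the resulting conditional mean
$$
	\cexp{S_{\isind{1}}(h)}{\mathcal{F}} = \sum_{k = 1}^{\Nis} \frac{\iswgt{k}}{\sum_{k' = 1}^{\Nis} \iswgt{k'}} S_k(h)
$$
is precisely the input estimator in \C{def:consistency}, which converges in probability to $\targ h$ by hypothesis.

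It remains to show that the conditional fluctuation around this mean is negligible. Since the summands $S_{\isind{i}}(h)$ are, given $\mathcal{F}$, i.i.d.\ and bounded in absolute value by $\supn{h}$, their conditional variance satisfies $\operatorname{Var}(\Nis^{-1} \sum_{i = 1}^{\Nis} S_{\isind{i}}(h) \mid \mathcal{F}) = \Nis^{-1} \operatorname{Var}(S_{\isind{1}}(h) \mid \mathcal{F}) \leq \supn{h}^2 / \Nis$. Applying the conditional Chebyshev inequality and taking expectations yields, for every $\varepsilon > 0$, the unconditional bound $\prob(| \Nis^{-1} \sum_{i = 1}^{\Nis} S_{\isind{i}}(h) - \cexp{S_{\isind{1}}(h)}{\mathcal{F}} | \geq \varepsilon) \leq \supn{h}^2 / (\Nis \varepsilon^2) \to 0$. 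Combining the vanishing of this fluctuation with the convergence of the conditional mean to $\targ h$ gives \C{def:consistency} for the output, completing the proof; I would also remark in passing that the output is a genuine archipelago, its island weights being positive and its particle weights inheriting the bound $\inwgtbd$. The only delicate point is the conditional argument itself---correctly identifying the conditional mean of the resampled average with the input estimator and bounding the conditional variance uniformly in the random (island-dependent) indices---but the boundedness of $h$ makes this step routine, so I do not anticipate any serious obstacle.
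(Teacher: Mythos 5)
Your proof is correct, and it reaches the paper's conclusion by a more elementary route. The paper proceeds from the same starting point---conditioning on the input archipelago via the filtration $\filt$ and observing that the conditional mean of the resampled average is exactly the input estimator in \C{def:consistency}---but it then controls the fluctuation $\sum_{i=1}^{\Nis}\{\arr{i} - \cexp{\arr{i}}{\filt}\}$ by invoking the general triangular-array law of large numbers of Douc and Moulines (\autoref{thm:DM:A-1}), verifying the tightness condition \A{ass:DM:cons:tightness} from the convergence of the conditional means and the Lindeberg-type condition \A{ass:DM:cons:Lindeberg} via \autoref{lem:tightness} with $\mbd = \supn{h}/\Nis$ and $\bd = \bdconst = 0$. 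You instead exploit the uniform bound $|S_i(h)| \leq \supn{h}$ directly and dispatch the fluctuation with a conditional Chebyshev inequality, obtaining the explicit bound $\supn{h}^2/(\Nis \varepsilon^2)$; this is entirely rigorous (the conditional independence of $\{\isind{i}\}_{i=1}^{\Nis}$ given $\filt$ is built into the SIL algorithm, and your handling of \C{def:consistency:max}, which is trivial for uniform island weights, matches the paper's implicit treatment). What your argument buys is self-containedness and an explicit $O(1/\Nis)$ rate for the conditional fluctuation; what the paper's machinery buys is uniformity across all six single-step theorems---the same pair \autoref{thm:DM:A-1}/\autoref{lem:tightness} is reused for SiL, Mutation, and the asymptotic-normality proofs, where the arrays are no longer uniformly bounded in a way that makes a bare second-moment bound sufficient, so the Lindeberg-type formulation is needed there even though it is overkill here.
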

\begin{theorem} \label{thm:deviation:inequalities:island:selection}
    Assume that $\{(\iswgt{i}, \{(\epart{i}{j}, \inwgt{i}{j})\}_{j = 1}^{\Nin}) \}_{i = 1}^{\Nis}$ satisfies exponential deviation for $(\targ, \Hlw, \{ c_{\ell} \}_{\ell = 1}^2)$. Then also $\{(1, \{(\eparttd{i}{j}, \inwgttd{i}{j}) \}_{j = 1}^{\Nin}) \}_{i = 1}^{\Nis}$ satisfies exponential deviation for $(\targ, \Hlw, \{ c_{\ell} \}_{\ell = 1}^2)$.
\end{theorem}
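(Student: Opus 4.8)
The plan is to exploit the fact that the copy step in Algorithm~\ref{alg:island:selection} is entirely deterministic given the selection indices $\{ \isind{i} \}_{i = 1}^{\Nis}$. The crucial observation is that, for each output island $i \in \intvect{1}{\Nis}$, the unnormalized importance sampling estimator is copied verbatim from the ancestor island labelled by $\isind{i}$, so that
\begin{equation*}
\frac{1}{\Nin} \sum_{j = 1}^{\Nin} \inwgttd{i}{j} h(\eparttd{i}{j}) = \frac{1}{\Nin} \sum_{j = 1}^{\Nin} \inwgt{\isind{i}}{j} h(\epart{\isind{i}}{j}) \eqsp.
\end{equation*}
Hence the per-island deviation of the output archipelago is nothing but the deviation of ancestor island $\isind{i}$, merely relabelled (and possibly repeated across several values of $i$) rather than perturbed.

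Building on this, I would fix $h \in \bmf{\stfd}$ and $\varepsilon > 0$. By the identity above, the maximal deviation of the output archipelago equals the maximum of the ancestor deviations taken over the random sub-collection $\{ \isind{i} \}_{i = 1}^{\Nis}$ of labels. Since every selection index $\isind{i}$ belongs to $\intvect{1}{\Nis}$, this maximum is bounded pathwise---for every realization of the selection variables---by the maximum over all ancestor labels:
\begin{equation*}
\max_{i \in \intvect{1}{\Nis}} \left| \frac{1}{\Nin} \sum_{j = 1}^{\Nin} \inwgt{\isind{i}}{j} h(\epart{\isind{i}}{j}) - \Hlw \times \targ h \right| \leq \max_{k \in \intvect{1}{\Nis}} \left| \frac{1}{\Nin} \sum_{j = 1}^{\Nin} \inwgt{k}{j} h(\epart{k}{j}) - \Hlw \times \targ h \right| \eqsp.
\end{equation*}
Consequently the event that the left-hand maximum is at least $\varepsilon$ is contained in the event that the right-hand one is, and monotonicity of $\prob$ bounds the probability governing the output archipelago by the one governing the input. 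The latter is exactly what \D{def:deviation:inequality:individual} controls for the input archipelago, namely by $\Hc \Nis \exp(- \He \Nin \varepsilon^2 / \supn{h}^2)$, which is precisely the bound required for the output. As the target $\targ$ and all constants are inherited unchanged, exponential deviation is preserved for the same triple $(\targ, \Hlw, \{ c_\ell \}_{\ell = 1}^2)$.

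There is no genuine analytic obstacle here: island selection acts by a deterministic copy, so the argument reduces to a set inclusion followed by monotonicity of probability, requiring no concentration estimate beyond the one already assumed. The only point deserving mild care is to verify that the pathwise domination of the two maxima holds irrespective of the joint law of the indices $\{ \isind{i} \}_{i = 1}^{\Nis}$; this is precisely what lets us sidestep the dependence introduced by the selection step and pass directly to the input bound.
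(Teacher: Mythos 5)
Your proposal is correct and is essentially identical to the paper's own proof: the paper likewise observes that $\{ \isind{i} \}_{i = 1}^{\Nis} \subset \intvect{1}{\Nis}$, so the maximal deviation of the resampled archipelago is pathwise dominated by the maximal deviation of the ancestor archipelago, and the stated bound follows from monotonicity of $\prob$ together with \D{def:deviation:inequality:individual} for the input, with all constants inherited unchanged.
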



We impose the following assumption, guaranteeing that $\Nis$ grows only subexponentially fast with respect to $\Nin$.
\begin{hypR} \label{sub-geometric:growth}
For all $\beta > 0$, $\Nis \exp(- \beta \Nin) \rightarrow 0$ as $\N \rightarrow \infty$.
\end{hypR}

\begin{theorem} \label{thm:normality:island:selection}
    Assume \R{sub-geometric:growth} and that $\{( \iswgt{i}, \{(\epart{i}{j}, \inwgt{i}{j}) \}_{j = 1}^{\Nin} ) \}_{i = 1}^{\Nis}$ satisfies exponential deviation for $( \targ, \Hlw, \{c_{\ell}\}_{\ell=1}^2 )$ and is asymptotically normal for $(\targ, \asvar, \asvarANone, \{\condmeas{\ell} \}_{\ell = 1}^3)$. Then also $\{ (1, \{(\eparttd{i}{j}, \inwgttd{i}{j}) \}_{j = 1}^{\Nin}) \}_{i = 1}^{\Nis}$ is asymptotically normal for $(\targ, \asvartd, \asvarANone, \{ \targ, \condmeas{3}, \condmeas{3} \})$, where for all $h \in \bmf{\stfd}$,
    $$
    	\asvartd(h) = \asvar(h) + \asvarANone(h)
    $$
    (i.e. the SIL operation modifies only $\asvar$, $\condmeas{1}$, and $\condmeas{2}$).
\end{theorem}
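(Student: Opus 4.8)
The plan is to verify the six defining properties \AN{ass:AN:CLT}--\AN{def:tightness:island} for the output archipelago, exploiting the conditionally i.i.d.\ structure of the selection indices. Write $p_i \eqdef \iswgt{i} / \sum_{i' = 1}^{\Nis} \iswgt{i'}$ for the normalized input island weights and, for $h \in \bmf{\stfd}$ and $i \in \intvect{1}{\Nis}$, abbreviate the self-normalized island functionals
$$
\Lambda_i(h) \eqdef \sum_{j=1}^{\Nin} \frac{\inwgt{i}{j}}{\sum_{j'=1}^{\Nin} \inwgt{i}{j'}} \{ h(\epart{i}{j}) - \targ h \}, \quad \Phi_i(h) \eqdef \sum_{j=1}^{\Nin} \frac{\inwgt{i}{j}}{\sum_{j'=1}^{\Nin} \inwgt{i}{j'}} h(\epart{i}{j}),
$$
together with $\Psi_i(h) \eqdef \sum_{j=1}^{\Nin} (\inwgt{i}{j} / \sum_{j'=1}^{\Nin} \inwgt{i}{j'})^2 h(\epart{i}{j})$. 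Let $\mathcal{G}$ denote the $\sigma$-field generated by the input archipelago. Since the output island weights are identically $1$, each output normalized island weight equals $1/\Nis$, so \AN{def:tightness:island} holds trivially (the maximum in question is constantly $1$); moreover, because island $i$ of the output is a verbatim copy of input island $\isind{i}$, every output functional is $\Lambda_{\isind{i}}(h)$, $\Phi_{\isind{i}}(h)$, or $\Psi_{\isind{i}}(h)$, and conditionally on $\mathcal{G}$ the indices $\{\isind{i}\}_{i=1}^{\Nis}$ are i.i.d.\ with $\prob(\isind{1} = i' \mid \mathcal{G}) = p_{i'}$. In particular $\cexp{\Lambda_{\isind{1}}(h)}{\mathcal{G}} = \sum_{i'} p_{i'} \Lambda_{i'}(h)$, and likewise for $\Phi$ and $\Psi$.

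The conditions \AN{ass:AN:squared:islands}--\AN{ass:AN:mut-2} are each of the form ``$\Nis^{-1} \sum_i \zeta(\isind{i}) \plim m h$'' for a suitable scaled island functional $\zeta$ and measure $m$, and I would establish them by the decomposition
$$
\frac{1}{\Nis} \sum_{i=1}^{\Nis} \zeta(\isind{i}) = \sum_{i'=1}^{\Nis} p_{i'} \zeta(i') + \frac{1}{\Nis} \sum_{i=1}^{\Nis} \left( \zeta(\isind{i}) - \cexp{\zeta(\isind{i})}{\mathcal{G}} \right).
$$
The $\mathcal{G}$-measurable leading term is, after unwinding the scalings, exactly the corresponding input quantity: \AN{ass:AN:squared:islands} reduces to input \AN{ass:AN:squared:islands} (giving $\asvarANone(h)$), \AN{ass:AN:squared:iswgts} reduces to the consistency estimator \C{def:consistency} (giving $\targ h$, since asymptotic normality implies consistency), and both \AN{ass:AN:mut-1} and \AN{ass:AN:mut-2} collapse---because the output island weights are uniform---to the single input quantity \AN{ass:AN:mut-2} (giving $\condmeas{3} h$); these converge by hypothesis. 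The fluctuation term is a conditionally centered average of $\Nis$ i.i.d.\ summands, so it suffices that its conditional variance, bounded by $\Nis^{-1} \sum_{i'} p_{i'} \zeta(i')^2$, vanish in probability. For the bounded functional $\Phi$ this is immediate. For $\zeta = \Nin \Psi_i(h)$ and $\zeta = \Nin \Lambda_i(h)^2$ I would invoke exponential deviation: taking $h \equiv 1$ in \D{def:deviation:inequality:individual} and using \R{sub-geometric:growth} shows $\sum_{j} \inwgt{i}{j} \gtrsim \Nin$ uniformly in $i$ on an event of probability tending to $1$, whence $\Nin \Psi_i(h) = O(1)$ uniformly; similarly \autoref{lem:deviation:properly:normalized} together with \R{sub-geometric:growth} yields $\max_i |\Lambda_i(h)| \lesssim \sqrt{\log \Nis / \Nin}$ on a high-probability event, so that $\Nin \Lambda_i(h)^2 \lesssim \log \Nis$ uniformly. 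Feeding these bounds into the conditional-variance estimate produces factors $\Nis^{-1}$ or $\Nis^{-1} \log \Nis$ times quantities convergent by the input hypotheses, hence vanishing.

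For the central limit theorem \AN{ass:AN:CLT} I would use the same splitting with $\zeta(i') = \sqrt{\N} \Lambda_{i'}(h)$ and $\bar\Lambda \eqdef \sum_{i'} p_{i'} \Lambda_{i'}(h)$:
$$
\sqrt{\N} \sum_{i=1}^{\Nis} \frac{1}{\Nis} \Lambda_{\isind{i}}(h) = \sqrt{\N}\, \bar\Lambda + \frac{\sqrt{\N}}{\Nis} \sum_{i=1}^{\Nis} \left( \Lambda_{\isind{i}}(h) - \bar\Lambda \right).
$$
The leading term $\sqrt{\N} \bar\Lambda$ is $\mathcal{G}$-measurable and equals the input CLT quantity \AN{ass:AN:CLT}, hence converges in distribution to $\normdist(0, \asvar(h))$; in particular $\bar\Lambda = O_\prob(\N^{-1/2})$. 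The fluctuation is a conditionally i.i.d.\ centered triangular array whose conditional variance is $\Nin \sum_{i'} p_{i'} \Lambda_{i'}(h)^2 - \Nin \bar\Lambda^2$, the first term converging to $\asvarANone(h)$ by input \AN{ass:AN:squared:islands} and the second being $O_\prob(\Nis^{-1})$; moreover the conditional Lindeberg condition holds because each summand is bounded by $(\sqrt{\N}/\Nis)\max_i|\Lambda_{\isind{i}}(h)| \lesssim \sqrt{\log \Nis / \Nis} \to 0$ on the high-probability event above. A conditional CLT of the triangular-array type in \cite{douc:moulines:2008} then gives $\cexp{\exp(\mathrm{i} t\, \text{fluct})}{\mathcal{G}} \plim \exp(-t^2 \asvarANone(h)/2)$. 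Writing the characteristic function of the whole sum as $\esp{\exp(\mathrm{i} t \sqrt{\N} \bar\Lambda)\, \cexp{\exp(\mathrm{i} t\, \text{fluct})}{\mathcal{G}}}$ and replacing the conditional factor by its deterministic limit (the error vanishing since the factor is bounded and converges in probability), the expression factorizes in the limit into $\exp(-t^2 \asvar(h)/2)\exp(-t^2 \asvarANone(h)/2)$, the characteristic function of $\normdist(0, \asvar(h) + \asvarANone(h)) = \normdist(0, \asvartd(h))$.

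The main obstacle is this CLT step, specifically the correct combination of the $\mathcal{G}$-measurable Gaussian term $\sqrt{\N}\bar\Lambda$ with the conditionally Gaussian fluctuation: the two limits come from different mechanisms (the input CLT versus a conditional CLT with a random conditional variance) and must be married without any ready-made independence of the limit laws. The characteristic-function argument circumvents this precisely because the conditional variance of the fluctuation, and hence its conditional characteristic function, converges to the \emph{deterministic} constant $\asvarANone(h)$. Underpinning everything is the uniform-in-island control $\max_i|\Lambda_i(h)| \lesssim \sqrt{\log \Nis/\Nin}$ extracted from \autoref{lem:deviation:properly:normalized} and \R{sub-geometric:growth}; obtaining this rate---rather than the useless deterministic bound $|\Lambda_i(h)| \le \osc{h}$---is what simultaneously delivers the Lindeberg condition and the control of the second-moment remainders, and is the technical linchpin of the whole argument.
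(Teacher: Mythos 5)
Your proposal is correct and follows essentially the same route as the paper's proof: the same splitting of the post-selection estimator into its $\filt$-measurable conditional mean (exactly the input CLT quantity, contributing $\asvar(h)$) plus a conditionally i.i.d.\ centered fluctuation treated by the triangular-array CLT of \cite{douc:moulines:2008} (\autoref{thm:DM:A-3}) with conditional variance converging to $\asvarANone(h)$ and the squared-conditional-mean correction vanishing at rate $\Nis^{-1}$, the Lindeberg condition secured through the uniform control of $\max_i |\Lambda_i(h)|$ furnished by \autoref{lem:deviation:properly:normalized} together with \R{sub-geometric:growth}, and the two Gaussian contributions combined via precisely the bounded-and-convergent-in-probability characteristic-function argument that the paper formalizes as \autoref{lemma:asind}. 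The only differences are cosmetic: you verify the auxiliary conditions \AN[ass:AN:squared:islands]{ass:AN:mut-2} for the output by direct conditional-Chebyshev bounds with an explicit $\sqrt{\log \Nis/\Nin}$ envelope, whereas the paper (and its supplement) routes these same estimates through \autoref{thm:DM:A-1} and \autoref{lem:tightness}.
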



\subsubsection{Selection on the individual level}
A second operation, described in Algorithm~\autoref{alg:individual:selection}, is referred to as \emph{multinomial selection on the individual level} (SiL), and consists in converting a weighted archipelago $\{Ê(\iswgt{i}, \{Ê(\epart{i}{j}, \inwgt{i}{j}) \}_{j = 1}^{\Nin}) \}_{i = 1}^{\Nis}$ targeting some distribution $\targ$ into an archipelago $\{Ê\iswgt{i}, \{Ê(\eparttd{i}{j}, 1) \}_{j = 1}^{\Nin} \}_{i = 1}^{\Nis}$ with uniform particle weights targeting the same distribution $\targ$. This step allows particles with large/small weights to be duplicated/eliminated, respectively. Note that the island weights remain unaffected.

\begin{algorithm}[H] \label{alg:individual:selection}
	\KwData{$\{Ê(\iswgt{i}, \{Ê(\epart{i}{j}, \inwgt{i}{j}) \}_{j = 1}^{\Nin}) \}_{i = 1}^{\Nis}$}
	\KwResult{$\{Ê(\iswgt{i}, \{Ê(\eparttd{i}{j}, 1) \}_{j = 1}^{\Nin} ) \}_{i = 1}^{\Nis}$}
	\For{$i \gets 1$ \KwTo $\Nis$}{
		\For{$j \gets 1$ \KwTo $\Nin$}{
			draw $\inind{i}{j} \sim \disc(\{ \inwgt{i}{j'} \}_{j' = 1}^{\Nin})$\;
			set $\eparttd{i}{j} \gets \epart{i}{{\inind{i}{j}}}$\;
		}
	}
	\caption{Multinomial selection on the individual level (SiL)}
\end{algorithm}
\bigskip
Trivially, the particle weights are bounded by $\inwgtbd = 1$ in this case. As for the SIL operation, we will express Algorithm~\autoref{alg:individual:selection} in a compact form by writing
$$
    \mbox{``}\{(\iswgt{i}, \{Ê(\eparttd{i}{j}, 1) \}_{j = 1}^{\Nin})\}_{i = 1}^{\Nis} \gets \SiL \left( \{Ê(\iswgt{i}, \{Ê(\epart{i}{j}, \inwgt{i}{j}) \}_{j = 1}^{\Nin}) \}_{i = 1}^{\Nis} \right) \mbox{''} \eqsp.
$$

The following theorems state conditions under which SiL preserves consistency, exponential deviation inequality, and asymptotic normality. Here,
$\{(\iswgt{i}, \{(\epart{i}{j}, \inwgt{i}{j}) \}_{j = 1}^{\Nin}) \}_{i = 1}^{\Nis}$ and
$\{Ê(\iswgt{i}, \{Ê(\eparttd{i}{j}, 1) \}_{j = 1}^{\Nin}) \}_{i = 1}^{\Nis}$ denote the input and output, respectively, of Algorithm~\ref{alg:individual:selection}.

\begin{theorem} \label{thm:consistency:individual:selection}
    Assume that $\{(\iswgt{i}, \{(\epart{i}{j}, \inwgt{i}{j}) \}_{j = 1}^{\Nin}) \}_{i = 1}^{\Nis}$ is consistent for $\targ$. Then also $\{Ê(\iswgt{i}, \{Ê(\eparttd{i}{j}, 1) \}_{j = 1}^{\Nin}) \}_{i = 1}^{\Nis}$ is consistent for $\targ$.
\end{theorem}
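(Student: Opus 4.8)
The plan is to verify the two defining conditions of consistency, \C{def:consistency} and \C{def:consistency:max}, for the output archipelago. Condition \C{def:consistency:max} is immediate: the SiL operation (Algorithm~\ref{alg:individual:selection}) leaves the island weights $\{\iswgt{i}\}_{i=1}^{\Nis}$ completely untouched, so the normalized island weights $W_i \eqdef \iswgt{i} / \sum_{i'=1}^{\Nis} \iswgt{i'}$ of the output coincide with those of the input, and \C{def:consistency:max} for the output is literally the same statement as for the input. It therefore remains only to establish \C{def:consistency} for the output estimator, which, since every output particle weight equals $1$, reads $\tilde\eta^{\N} h \plim \targ h$, where I abbreviate $\tilde\eta^{\N} h \eqdef \sum_{i=1}^{\Nis} W_i \, \Nin^{-1} \sum_{j=1}^{\Nin} h(\eparttd{i}{j})$.

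Let $\Ga$ denote the $\sigma$-field generated by the input archipelago. Conditionally on $\Ga$, each resampled particle $\eparttd{i}{j} = \epart{i}{\inind{i}{j}}$ is drawn, independently over all pairs $(i,j)$, from $\disc(\{\inwgt{i}{j'}\}_{j'=1}^{\Nin})$; hence $\cexp{h(\eparttd{i}{j})}{\Ga} = \sum_{j'=1}^{\Nin} \frac{\inwgt{i}{j'}}{\sum_{j''=1}^{\Nin}\inwgt{i}{j''}} h(\epart{i}{j'})$. Summing over $j$ and over the islands, I find that $\cexp{\tilde\eta^{\N} h}{\Ga}$ equals exactly the input estimator appearing in \C{def:consistency}, which converges to $\targ h$ in probability by the assumed consistency of the input. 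I then split $\tilde\eta^{\N} h - \targ h = (\tilde\eta^{\N} h - \cexp{\tilde\eta^{\N} h}{\Ga}) + (\cexp{\tilde\eta^{\N} h}{\Ga} - \targ h)$, where the second term tends to $0$ in probability by what has just been observed.

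For the fluctuation term $S_{\N} \eqdef \tilde\eta^{\N} h - \cexp{\tilde\eta^{\N} h}{\Ga}$, I exploit the conditional independence. Since, given $\Ga$, the variables $\{h(\eparttd{i}{j})\}_{(i,j)}$ are mutually independent, the conditional variance of $S_{\N}$ is purely diagonal, and bounding each conditional variance by $\supn{h}^2$ gives
\[
    \cexp{S_{\N}^2}{\Ga} = \sum_{i=1}^{\Nis} W_i^2 \, \Nin^{-2} \sum_{j=1}^{\Nin} \operatorname{Var}(h(\eparttd{i}{j}) \mid \Ga) \leq \frac{\supn{h}^2}{\Nin} \sum_{i=1}^{\Nis} W_i^2 \leq \frac{\supn{h}^2}{\Nin},
\]
where the last step uses $\sum_{i} W_i^2 \leq \max_i W_i \sum_{i'} W_{i'} = \max_i W_i \leq 1$. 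As $\Nin \rightarrow \infty$ this deterministic bound vanishes, so $\cexp{S_{\N}^2}{\Ga} \plim 0$; a conditional Chebyshev inequality followed by bounded convergence, applied to $\prob(|S_{\N}| \geq \varepsilon) \leq \esp{\min(1, \varepsilon^{-2} \cexp{S_{\N}^2}{\Ga})}$, then yields $S_{\N} \plim 0$, completing the proof of \C{def:consistency}. The argument is essentially bookkeeping around the conditional-mean identity, and the only substantive ingredients are the conditional-independence structure of the resampling and the resulting $O(1/\Nin)$ conditional-variance bound; the mildly delicate point is precisely the passage from control of $\cexp{S_{\N}^2}{\Ga}$ to convergence in probability of the unconditionally uncentered quantity $S_{\N}$, which is handled by the dominated-convergence step above. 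In fact this is a clean special case of the SIL analysis, with the island weights inert and the individual resampling contributing only the vanishing conditional variance.
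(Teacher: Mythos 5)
Your proof is correct, and it establishes \C{def:consistency} for the output by a genuinely different route from the paper's. The paper feeds the array $\arr{i} = \iswgt{i} \bigl( \Nin \sum_{i'=1}^{\Nis} \iswgt{i'} \bigr)^{-1} \sum_{j=1}^{\Nin} h(\eparttd{i}{j})$ into the triangular-array law of large numbers (\autoref{thm:DM:A-1}); there the only substantive work is the conditional Lindeberg condition \A{ass:DM:cons:Lindeberg}\!, verified through \autoref{lem:tightness} with $\mbd = \supn{h} \max_{i \in \intvect{1}{\Nis}} \iswgt{i} / \sum_{i'=1}^{\Nis} \iswgt{i'}$, which vanishes in probability precisely \emph{because of} \C{def:consistency:max} for the input. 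You bypass that machinery with a direct conditional second-moment argument: the conditional independence of the resampled particles makes the conditional variance of your fluctuation term $S_\N$ diagonal, and the elementary bound $\sum_{i=1}^{\Nis} \bigl( \iswgt{i} / \sum_{i'=1}^{\Nis} \iswgt{i'} \bigr)^2 \leq 1$ gives the \emph{deterministic} estimate $\cexp[txt]{S_\N^2}{\Ga} \leq \supn{h}^2 / \Nin$; note that, the bound being deterministic, your dominated-convergence step is superfluous---taking expectations and applying unconditional Chebyshev via $\E [ S_\N^2 ] \leq \supn{h}^2/\Nin \rightarrow 0$ already suffices. A byproduct worth noting: your fluctuation control never invokes \C{def:consistency:max} (you only use it implicitly through $\sum_i W_i^2 \leq \max_i W_i \leq 1$, where the final crude bound holds for any normalized weights), whereas the paper's Lindeberg verification genuinely relies on it; both arguments of course still need the input's \C{def:consistency} for the conditional-mean term, which you compute exactly as the paper does. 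What the paper's heavier route buys is methodological uniformity: the \autoref{thm:DM:A-1}/\autoref{lem:tightness} toolkit is exactly the one reused in the exponential-deviation and asymptotic-normality proofs, where a simple $L^2$ bound would no longer do; for this particular consistency statement your argument is the cleaner and more elementary option.
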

\begin{theorem} \label{thm:deviation:inequalities:individual:selection}
    Assume that $\{(\iswgt{i}, \{(\epart{i}{j}, \inwgt{i}{j})\}_{j = 1}^{\Nin}) \}_{i = 1}^{\Nis}$ satisfies exponential deviation for $(\targ, \Hlw, \{ c_{\ell} \}_{\ell = 1}^2)$. Then also $\{(\iswgt{i}, \{(\eparttd{i}{j}, 1) \}_{j = 1}^{\Nin})\}_{i = 1}^{\Nis}$ satisfies exponential deviation for \\
    $(\targ, 1, \{ \tilde{c}_{\ell} \}_{\ell = 1}^2)$, where $\Hconstd{1} = 4 (1 \vee \Hcons[1])$ and $\Hconstd{2} = (1 \wedge (\Hcons[2] \Hlw^2/2))/8$.
\end{theorem}
\begin{theorem} \label{thm:normality:individual:selection}
    Assume that $\{Ê(\iswgt{i}, \{Ê(\epart{i}{j}, \inwgt{i}{j}) \}_{j = 1}^{\Nin}) \}_{i = 1}^{\Nis}$ satisfies exponential deviation for $(\targ, \Hlw, \{c_{\ell}\}_{\ell=1}^2)$ and is asymptotically normal for $(\targ, \asvar, \asvarANone, \{\condmeas{\ell} \}_{\ell = 1}^3)$.\\ Then also $\{(\iswgt{i}, \{(\eparttd{i}{j}, 1) \}_{j = 1}^{\Nin}) \}_{i = 1}^{\Nis}$ is asymptotically normal for $(\targ, \asvartd, \asvartdANone, \{\condmeas{1}, \condmeas{1}, \targ \})$, where for all $h \in \bmf{\stfd}$,
    $$
        \begin{cases}
            \asvartd(h) = \asvar(h) + \condmeas{1}\{(h - \targ h)^2 \} \eqsp, \\
            \asvartdANone(h) =\asvarANone(h) + \targ \{(h - \targ h)^2 \} \eqsp.
        \end{cases}
    $$
\end{theorem}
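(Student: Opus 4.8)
The plan is to condition throughout on the $\sigma$-field $\Fa$ generated by the input archipelago $\{(\iswgt{i}, \{(\epart{i}{j}, \inwgt{i}{j})\}_{j=1}^{\Nin})\}_{i=1}^{\Nis}$, so that the resampled individuals become, conditionally on $\Fa$, mutually independent with $\cexp[1]{h(\eparttd{i}{j})}{\Fa} = \bar h_i \eqdef \sum_{j'=1}^{\Nin}\inwgt{i}{j'} h(\epart{i}{j'})/\sum_{j''=1}^{\Nin}\inwgt{i}{j''}$, the within-island self-normalized estimator of the input. Writing $\bar\Omega_i \eqdef \iswgt{i}/\sum_{i'=1}^{\Nis}\iswgt{i'}$ for the (unchanged) normalized island weights, \AN{def:tightness:island} for the output is literally identical to \AN{def:tightness:island} for the input and requires no work. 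The whole argument rests on the elementary decomposition $h(\eparttd{i}{j}) - \targ h = \{h(\eparttd{i}{j}) - \bar h_i\} + \{\bar h_i - \targ h\}$, in which the first bracket has conditional mean zero and conditional (weighted) variance $\sigma_i^2 \eqdef \sum_{j}\inwgt{i}{j}\{h(\epart{i}{j}) - \bar h_i\}^2/\sum_{j'}\inwgt{i}{j'}$, while the second is $\Fa$-measurable and coincides with the corresponding summand of the input archipelago.

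The central and most delicate step is the CLT \AN{ass:AN:CLT}. I would write the rescaled output sum as $T = R + S$, where $R \eqdef \sqrt{\N}\sum_i \bar\Omega_i\{\bar h_i - \targ h\}$ is exactly the input quantity in \AN{ass:AN:CLT}, hence $R \dlim \normdist(0, \asvar(h))$, and $S \eqdef \sqrt{\N}\sum_i\bar\Omega_i \Nin^{-1}\sum_j\{h(\eparttd{i}{j}) - \bar h_i\}$ is a conditionally centered fluctuation. A direct computation gives $\cexp[1]{S^2}{\Fa} = \Nis\sum_i \bar\Omega_i^2\,\sigma_i^2$; expanding $\sigma_i^2$ around $\targ h$ and applying \AN{ass:AN:squared:iswgts} to the bounded function $(h - \targ h)^2$ (the residual $\Nis\sum_i\bar\Omega_i^2\{\bar h_i-\targ h\}^2$ being killed by \AN{ass:AN:squared:islands} together with a factor $\Nin^{-1}$) shows $\cexp[1]{S^2}{\Fa} \plim \condmeas{1}\{(h-\targ h)^2\}$. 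Since the individual increments of $S$ are bounded by $\osc{h}\,\max_{i}(\Nis\bar\Omega_i)/\sqrt{\N}$, which \AN{def:tightness:island} keeps stochastically bounded while the factor $\N^{-1/2}$ vanishes, a conditional Lindeberg condition holds and the triangular-array CLT of \cite{douc:moulines:2008} yields $\cexp[1]{\exp(\mathrm{i} u S)}{\Fa} \plim \exp(-u^2\condmeas{1}\{(h-\targ h)^2\}/2)$ for every $u$. Combining through $\EE[\exp(\mathrm{i} u T)] = \EE[\exp(\mathrm{i} u R)\,\cexp[1]{\exp(\mathrm{i} u S)}{\Fa}]$, bounded convergence lets me replace the conditional characteristic function by its deterministic limit, and the input \AN{ass:AN:CLT} applied to the $\Fa$-measurable factor $\exp(\mathrm{i} u R)$ produces the product of Gaussian characteristic functions, i.e. $T \dlim \normdist(0, \asvar(h) + \condmeas{1}\{(h - \targ h)^2\})$, which is \AN{ass:AN:CLT} with $\asvartd$ as claimed.

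The four remaining moment conditions \AN{ass:AN:squared:islands}--\AN{ass:AN:mut-2} I would treat uniformly by the same recipe: substitute the decomposition into each target expression, compute the conditional mean using $\cexp[1]{h(\eparttd{i}{j})}{\Fa} = \bar h_i$ and $\sigma_i^2$, and kill the conditional fluctuation and cross terms via their conditional variances. For \AN{ass:AN:squared:islands} the conditional mean of the leading term reduces, after expanding $\sigma_i^2$, to the input consistency \C{def:consistency} applied to $(h-\targ h)^2$, giving the new mass $\targ\{(h-\targ h)^2\}$, while the input \AN{ass:AN:squared:islands} supplies $\asvarANone(h)$; the cross term and the concentration of the leading term are controlled by conditional second moments of order $\max_i\bar\Omega_i$ or $\Nis^{-1}$, which vanish by \AN{def:tightness:island}. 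For \AN{ass:AN:squared:iswgts} and \AN{ass:AN:mut-1}, note that after inserting $\N = \Nis\Nin$ both statements reduce to the single expression $\Nis\sum_i\bar\Omega_i^2\,\Nin^{-1}\sum_j h(\eparttd{i}{j})$, whose $\Fa$-conditional mean is the input \AN{ass:AN:squared:iswgts} quantity (limit $\condmeas{1}h$) and whose fluctuation has conditional variance of order $\N^{-1}(\max_i\Nis\bar\Omega_i)^3$, negligible by \AN{def:tightness:island}; this simultaneously yields $\condmeas{1}$ as the new first and second measures. Finally, \AN{ass:AN:mut-2} collapses to $\sum_i\bar\Omega_i\,\Nin^{-1}\sum_j h(\eparttd{i}{j})$, whose conditional mean is the input consistency limit $\targ h$ and whose fluctuation is again negligible, giving the third measure $\targ$.

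The main obstacle is the CLT step: establishing the conditional CLT for $S$ (the conditional-variance limit together with the conditional negligibility of increments) and then legitimately splicing the conditional Gaussian limit of $S$ with the \emph{random}, $\Fa$-measurable limit law of $R$. The exponential deviation hypothesis and the island-weight tightness \AN{def:tightness:island} are precisely what make this rigorous, furnishing the uniform-over-islands control needed both for the Lindeberg condition and for showing that the various conditional second-moment remainders vanish; the remaining computations are routine and may be relegated to the technical lemmas of \autoref{appenbdix:technical:lemmas}.
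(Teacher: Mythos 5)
Your architecture is the paper's own: you condition on the input $\sigma$-field, split the rescaled output sum into the $\filt$-measurable input quantity plus a conditionally centered resampling fluctuation (this is precisely the paper's array \eqref{def:arr:AN:PS} decomposed as $\sum_i\{\arr{i}-\cexp{\arr{i}}{\filt}\}+\sum_i\cexp{\arr{i}}{\filt}$), identify the conditional variance limit $\condmeas{1}\{(h-\targ h)^2\}$ through \AN{ass:AN:squared:iswgts} applied to $(h-\targ h)^2$, and splice the conditional Gaussian limit with the input CLT through characteristic functions, which is the role of \autoref{lemma:asind}. Your disposal of the residual $\Nis\sum_i(\iswgt{i}/\sum_{i'}\iswgt{i'})^2\{\bar h_i-\targ h\}^2$ (with $\bar h_i$ the within-island self-normalized mean) via \AN{ass:AN:squared:islands} together with the factor $\Nin^{-1}$ is a harmless variant of the paper's appeal to \eqref{eq:deviation:normalized:estimator} and consistency, and your sketches of \AN{ass:AN:squared:islands}--\AN{ass:AN:mut-2}, as well as the observation that \AN{def:tightness:island} is untouched by SiL, agree with the arguments deferred to the supplement.

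The genuine gap is your Lindeberg verification. You assert that the increments of $S$ are bounded by $\osc{h}\,\max_i\bigl(\Nis\,\iswgt{i}/\sum_{i'}\iswgt{i'}\bigr)/\sqrt{\N}$, but the algebra is off by a factor of $\Nin$: the $i$th increment is $\sqrt{\N}\,\bigl(\iswgt{i}/\sum_{i'=1}^{\Nis}\iswgt{i'}\bigr)\,\Nin^{-1}\sum_{j=1}^{\Nin}\{h(\eparttd{i}{j})-\bar h_i\}$, and since $\sqrt{\N}/\Nis=\sqrt{\Nin/\Nis}$, the best deterministic bound is $\osc{h}\,\sqrt{\Nin/\Nis}\,\max_i\bigl(\Nis\,\iswgt{i}/\sum_{i'}\iswgt{i'}\bigr)$. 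This does not vanish; it diverges whenever $\Nin/\Nis\to\infty$, a regime the theorem expressly covers (see the remark following \autoref{thm:normality:mutation}: no rate condition \R{sub-geometric:growth} is assumed here). Hence \B{ass:DM:norm:Lindeberg} cannot be extracted from a sup-norm bound, and one must use that the within-island resampling fluctuation is itself small with overwhelming probability. This is exactly what the paper does: it bounds $\max_i|\arr{i}|\leq\mbd+\bdconst\bd^2$ with $\bd^2=\sqrt{\Nin/\Nis}\max_i\bigl|\Nin^{-1}\sum_j\diff{i}{j}\bigr|$, and the conditional Hoeffding inequality \eqref{eq:conditional:hoeffding:individual:selection} then yields a tail $\prob(\bd\geq\varepsilon\mid\filt)\leq 2\Nis\exp(-c\,\Nis\,\varepsilon^4)$ --- note that the rescaling trades $\Nin$ for $\Nis$ in the exponent, so the bound vanishes with no assumption on the relative rates --- after which \autoref{lem:tightness} (case $\alpha=2$) delivers \B{ass:DM:norm:Lindeberg}; the companion term $\mbd$ is controlled by \autoref{lem:deviation:properly:normalized} applied at scale $\varepsilon\propto\sqrt{\Nis/\Nin}$, again rate-free. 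Your closing paragraph correctly anticipates that the exponential-deviation hypothesis must feed the Lindeberg condition, but the argument you actually wrote does not use it there, and as stated the proof fails precisely in the unbalanced regime $\Nin\gg\Nis$ that the theorem is designed to cover.
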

Again, proofs are found in \autoref{sec:proofs}.

\subsubsection{Mutation}
\label{sec:mutation}
The last operation we consider is \emph{Mutation}, described in Algorithm~\ref{alg:mutation}. This operation converts, using importance sampling on the individual level, an archipelago $\{(\iswgt{i}, \{(\epart{i}{j}, \inwgt{i}{j}) \}_{j = 1}^{\Nin}) \}_{i = 1}^{\Nis}$ targeting $\targ \in \probmeas{\stfd}$ into another archipelago \\
$\{\iswgttd{i}, \{(\eparttd{i}{j}, \inwgttd{i}{j}) \}_{j = 1}^{\Nin} \}_{i = 1}^{\Nis}$ targeting some \emph{other} probability distribution $\targtd$, defined on another state space $(\stsptd, \stfdtd)$. The distribution $\targtd$ is related to
$\targ$ through the identity
\begin{equation} \label{eq:one-step-F-K}
\targtd h = \frac{\targ \op h}{\targ \op \1{\stsptd}} \quad (h \in \bmf{\stfdtd})
\end{equation}
where $\op : \stsp \times \stfdtd \rightarrow \rsetnonneg$ is a possibly unnormalized transition kernel. In the algorithm that follows, let $\prop : \stsp \times \stfdtd \rightarrow \rsetnonneg$ be a (normalized) transition kernel such that $\op(x, \cdot) \ll \prop(x, \cdot)$ for all $x \in \stsp$, and denote the corresponding Radon-Nikodym derivatives by
$$
	\der(x, \tilde{x}) \eqdef \frac{\rmd \op(x, \cdot)}{\rmd \prop(x, \cdot)}(\tilde{x}) \quad ((x, \tilde{x}) \in \stsp \times \stsptd) \eqsp.
$$
In the sequel, we will refer to the mapping $\der$ as the \emph{importance weight function} and assume that $\der \in \bmf{\stfd \varotimes \stfdtd}$ and $\op \1{\stsptd} \in \bmf{\stfd}$.

\begin{algorithm}[H] \label{alg:mutation}
	\KwData{$\{Ê(\iswgt{i}, \{Ê(\epart{i}{j}, \inwgt{i}{j}) \}_{j = 1}^{\Nin}) \}_{i = 1}^{\Nis}$, $\op$, $\prop$}
	\KwResult{$\{Ê(\iswgttd{i}, \{Ê(\eparttd{i}{j}, \inwgttd{i}{j}) \}_{j = 1}^{\Nin}) \}_{i = 1}^{\Nis}$}
	\For{$i \gets 1$ \KwTo $\Nis$}{
		\For{$j \gets 1$ \KwTo $\Nin$}{
			draw $\eparttd{i}{j} \sim \prop(\epart{i}{j}, \cdot)$\;
			set $\inwgttd{i}{j} \gets \der(\epart{i}{j}, \eparttd{i}{j}) \inwgt{i}{j}$\;
		}
		set $\displaystyle \iswgttd{i} \gets \iswgt{i} \frac{\sum_{j' = 1}^{\Nin} \inwgttd{i}{j'}}{\sum_{j'' = 1}^{\Nin} \inwgt{i}{j''}}$\;
	}
	\caption{Mutation}
\end{algorithm}
\bigskip
As before, we will abbreviate Algorithm~\autoref{alg:mutation} by writing
\begin{multline*}
\mbox{``} \{Ê(\iswgttd{i}, \{Ê(\eparttd{i}{j}, \inwgttd{i}{j}) \}_{j = 1}^{\Nin}) \}_{i = 1}^{\Nis} \\Ê\gets \mutation{\op}{\prop}\left( \{Ê(\iswgt{i}, \{Ê(\epart{i}{j}, \inwgt{i}{j}) \}_{j = 1}^{\Nin}) \}_{i = 1}^{\Nis}, \prop \right) \mbox{''} \eqsp,
\end{multline*}
where the kernel $\op$ is included in the notation for the sake of completeness. Note that the Mutation operation forms indeed a proper weighted archipelago with $\inwgttdbd = \inwgtbd \supn{\der}$. In conformity with the SIL and SiL operations, the Mutation operation preserves consistency, exponential deviation, and asymptotic normality. This is established below, where $\{Ê(\iswgt{i}, \{Ê(\epart{i}{j}, \inwgt{i}{j}) \}_{j = 1}^{\Nin}) \}_{i = 1}^{\Nis}$Êand $\{Ê(\iswgttd{i}, \{Ê(\eparttd{i}{j}, \inwgttd{i}{j}) \}_{j = 1}^{\Nin}) \}_{i = 1}^{\Nis}$ denote consequently the input and output of Algorithm~\ref{alg:mutation}, respectively.
\begin{theorem} \label{thm:consistency:mutation}
    Assume that $\{(\iswgt{i}, \{(\epart{i}{j}, \inwgt{i}{j}) \}_{j = 1}^{\Nin}) \}_{i = 1}^{\Nis}$ is consistent for $\targ$. \\
    Then $\{(\iswgttd{i}, \{(\eparttd{i}{j}, \inwgttd{i}{j}) \}_{j = 1}^{\Nin}) \}_{i = 1}^{\Nis}$ is consistent for $\targtd$ defined in \eqref{eq:one-step-F-K}.
\end{theorem}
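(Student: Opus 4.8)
The plan is to reduce the output self-normalized estimator to a ratio of two averages over the \emph{input} and then combine the consistency of the input with a conditional second-moment argument over the mutation draws. Writing $\Omega_i \eqdef \iswgt{i}/\sum_{i'=1}^{\Nis}\iswgt{i'}$ and $\bar\omega_i^j \eqdef \inwgt{i}{j}/\sum_{j'=1}^{\Nin}\inwgt{i}{j'}$ for the input normalized island and particle weights, the first step is purely algebraic: substituting $\iswgttd{i} = \iswgt{i}(\sum_{j'}\inwgttd{i}{j'})/(\sum_{j''}\inwgt{i}{j''})$ and $\inwgttd{i}{j} = \der(\epart{i}{j},\eparttd{i}{j})\inwgt{i}{j}$ from Algorithm~\ref{alg:mutation}, the factor $\sum_{j'}\inwgttd{i}{j'}$ cancels and the output estimator in $\C{def:consistency}$ becomes the ratio $A^\N(h)/B^\N$, where
\[
A^\N(h) = \sum_{i=1}^{\Nis}\Omega_i \sum_{j=1}^{\Nin}\bar\omega_i^j\,\der(\epart{i}{j},\eparttd{i}{j})\,h(\eparttd{i}{j}), \qquad B^\N = A^\N(\1{\stsptd}).
\]
It then suffices to prove $A^\N(h)\plim\targ\op h$ for every $h\in\bmf{\stfdtd}$ and $B^\N\plim\targ\op\1{\stsptd}$, since $\targ\op\1{\stsptd}>0$ (as $\targtd$ in \eqref{eq:one-step-F-K} is well defined), after which the continuous mapping theorem for convergence in probability yields $A^\N(h)/B^\N\plim \targ\op h/\targ\op\1{\stsptd}=\targtd h$, i.e.\ $\C{def:consistency}$ for the output.

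The core of the argument is the convergence of $A^\N(h)$, which I would split as $A^\N(h)=\Psi^\N(\op h)+R^\N(h)$, where $\Psi^\N(g)\eqdef\sum_i\Omega_i\sum_j\bar\omega_i^j\,g(\epart{i}{j})$ is the input self-normalized estimator and $R^\N(h)\eqdef\sum_i\Omega_i\sum_j\bar\omega_i^j\{\der(\epart{i}{j},\eparttd{i}{j})h(\eparttd{i}{j})-\op h(\epart{i}{j})\}$. For the first term, since $|\op h|\le\supn{h}\,\op\1{\stsptd}$ and $\op\1{\stsptd}\in\bmf{\stfd}$ by assumption, the function $\op h$ lies in $\bmf{\stfd}$, so the input's $\C{def:consistency}$ gives $\Psi^\N(\op h)\plim\targ\op h$. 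For the remainder I would condition on the $\sigma$-field $\ffilt$ generated by the input archipelago: given $\ffilt$ the mutated particles $\eparttd{i}{j}$ are independent with $\eparttd{i}{j}\sim\prop(\epart{i}{j},\cdot)$, and because $\der=\rmd\op/\rmd\prop$ each summand of $R^\N(h)$ is conditionally centered, $\EE[\der(\epart{i}{j},\eparttd{i}{j})h(\eparttd{i}{j})\mid\ffilt]=\op h(\epart{i}{j})$, with conditional variance at most $\supn{\der}^2\supn{h}^2$. Hence the conditional variance of $R^\N(h)$ is at most $\supn{\der}^2\supn{h}^2\sum_i\Omega_i^2\sum_j(\bar\omega_i^j)^2\le\supn{\der}^2\supn{h}^2\max_i\Omega_i$, using $\sum_j(\bar\omega_i^j)^2\le\max_j\bar\omega_i^j\le1$ and $\sum_i\Omega_i^2\le\max_i\Omega_i$. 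Since $\max_i\Omega_i\plim0$ by the input's $\C{def:consistency:max}$, a conditional Chebyshev inequality gives $\prob(|R^\N(h)|\ge\varepsilon\mid\ffilt)\le\min(1,\supn{\der}^2\supn{h}^2\max_i\Omega_i/\varepsilon^2)$; taking expectations and noting that this bound is at most $1$ and tends to $0$ in probability, bounded convergence yields $R^\N(h)\plim0$, and therefore $A^\N(h)\plim\targ\op h$. Taking $h=\1{\stsptd}$ gives $B^\N\plim\targ\op\1{\stsptd}$, which completes the proof of $\C{def:consistency}$.

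It remains to verify the smallness condition $\C{def:consistency:max}$ for the output weights. Writing $S_i\eqdef\sum_j\bar\omega_i^j\,\der(\epart{i}{j},\eparttd{i}{j})$ for the island-level average importance weight, the same cancellation as above gives $\iswgttd{i}/\sum_{i'}\iswgttd{i'}=\Omega_i S_i/B^\N$. Since $0\le S_i\le\supn{\der}$, we get $\max_i \iswgttd{i}/\sum_{i'}\iswgttd{i'}\le\supn{\der}\,(\max_i\Omega_i)/B^\N$, and because $\max_i\Omega_i\plim0$ while $B^\N\plim\targ\op\1{\stsptd}>0$, another application of the continuous mapping theorem yields $\max_i \iswgttd{i}/\sum_{i'}\iswgttd{i'}\plim0$. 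I expect the main obstacle to be the remainder estimate $R^\N(h)\plim0$: it is precisely here that the island-weight smallness $\C{def:consistency:max}$ of the input is indispensable, since it is what forces the conditionally independent fluctuations introduced by the mutation step to average out across the $\Nis\Nin$ particles.
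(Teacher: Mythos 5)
Your proof is correct, and it follows the same skeleton as the paper's proof of \autoref{thm:consistency:mutation}---the algebraic cancellation of $\sum_{j'} \inwgttd{i}{j'}$ via the definition of $\iswgttd{i}$ (this is exactly the paper's display \eqref{decomposition:mutation}), the observation that the conditional expectation given the input archipelago reduces to the input estimator of $\op h$ (the paper's \eqref{eq:cond:exp:mutation:wh}--\eqref{eq:unbiasedness:mutation}), the choice $h = \1{\stsptd}$ plus Slutsky for the normalization, and even the same $\supn{\der} \max_i \iswgt{i}/\sum_{i'} \iswgt{i'}$ bound for \C{def:consistency:max} of the output---but you use a genuinely different engine to kill the centered fluctuation term. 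Where the paper invokes its triangular-array law of large numbers (\autoref{thm:DM:A-1}, from Douc--Moulines), verifying the tightness condition \A{ass:DM:cons:tightness} and the Lindeberg-type condition \A{ass:DM:cons:Lindeberg} through the technical \autoref{lem:tightness} with $V_N = \supn{\der}\supn{h}\max_i \iswgt{i}/\sum_{i'}\iswgt{i'}$, you dispatch $R^\N(h)$ by a direct conditional Chebyshev bound, $\supn{\der}^2\supn{h}^2 \sum_i \Omega_i^2 \sum_j (\bar\omega_i^j)^2 \leq \supn{\der}^2\supn{h}^2 \max_i \Omega_i$, followed by bounded convergence; your variance computation, the conditional independence of the mutated particles given $\ffilt$, and the final $L^1$ step are all sound, since $\der$ and $h$ are bounded by assumption. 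Note that both routes hinge on exactly the same input hypothesis, \C{def:consistency:max}, which you correctly identify as the indispensable ingredient. What each approach buys: yours is more elementary and self-contained for this particular theorem, avoiding the array machinery entirely; the paper's choice is uniformity---\autoref{thm:DM:A-1} and \autoref{lem:tightness} are the single engine reused across all the consistency and asymptotic-normality proofs in \autoref{sec:proofs}, and for the CLT statements (e.g.\ \autoref{thm:normality:mutation}) a second-moment argument like yours cannot deliver the distributional limit, so the authors pay the setup cost once and amortize it. One small remark: you are right to flag $\targ \op \1{\stsptd} > 0$ as needed for the ratio step; the paper treats this as implicit in the well-definedness of $\targtd$ in \eqref{eq:one-step-F-K}, just as you do.
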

\begin{theorem} \label{thm:deviation:inequalities:mutation}
    Assume that $\{(\iswgt{i}, \{(\epart{i}{j}, \inwgt{i}{j})\}_{j = 1}^{\Nin}) \}_{i = 1}^{\Nis}$ satisfies exponential deviation for  $(\targ, \Hlw, \{ c_{\ell} \}_{\ell = 1}^2)$. Then $\{(\iswgttd{i}, \{(\eparttd{i}{j}, \inwgttd{i}{j}) \}_{j = 1}^{\Nin}) \}_{i = 1}^{\Nis}$ satisfies exponential deviation for $(\tilde{\targ}, \Hlwtd, \{ \tilde{c}_{\ell} \}_{\ell = 1}^2)$, where $\Hlwtd = \Hlw \times \targ \op \1{\stsptd}$,
    $\Hconstd{1} = 2 (2 \vee \Hcons[1])$, and
    $$
    	\Hconstd{2} = \frac{1}{2}\left(\frac{1}{\delta^2} \wedge \frac{\Hcons[2]}{2 \supn[txt]{\op \1{\stsptd}}^2} \right) \eqsp,
    $$
    with $\delta \eqdef \inwgttdbd + \inwgtbd \supn[txt]{\op \1{\stsptd}}$.
\end{theorem}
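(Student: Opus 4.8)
The plan is to verify the single exponential deviation inequality \D{def:deviation:inequality:individual} for the mutated archipelago directly, by splitting the deviation of each island's unnormalized estimator into a part driven by the fresh mutation randomness and a part inherited from the input archipelago. Throughout, let $\mathcal{G}$ denote the $\sigma$-field generated by the input archipelago $\{(\iswgt{i}, \{(\epart{i}{j}, \inwgt{i}{j})\}_{j=1}^{\Nin})\}_{i=1}^{\Nis}$, so that, conditionally on $\mathcal{G}$, the draws $\eparttd{i}{j} \sim \prop(\epart{i}{j}, \cdot)$ are mutually independent over all $(i,j)$. The first observation is the algebraic identity $\Hlwtd \, \targtd h = \Hlw \, \targ \op h$, which follows at once from $\Hlwtd = \Hlw \, \targ \op \1{\stsptd}$ and the definition \eqref{eq:one-step-F-K} of $\targtd$; this tells us that the correct centering constant for island $i$ is $\Hlw \, \targ \op h$. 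Accordingly, for each $i \in \intvect{1}{\Nis}$ I would use the identity
\[
\frac{1}{\Nin}\sum_{j=1}^{\Nin} \inwgttd{i}{j} h(\eparttd{i}{j}) - \Hlwtd \, \targtd h = \frac{1}{\Nin}\sum_{j=1}^{\Nin} Z_{i,j} + \left( \frac{1}{\Nin}\sum_{j=1}^{\Nin} \inwgt{i}{j}\, \op h(\epart{i}{j}) - \Hlw \, \targ \op h \right),
\]
where $Z_{i,j} \eqdef \inwgttd{i}{j} h(\eparttd{i}{j}) - \inwgt{i}{j}\, \op h(\epart{i}{j})$.

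The first sum is handled by a \emph{conditional} Hoeffding argument. Since $\der(x,\tilde{x})\,\prop(x,\rmd\tilde{x}) = \op(x,\rmd\tilde{x})$, we get $\cexp{\inwgttd{i}{j} h(\eparttd{i}{j})}{\mathcal{G}} = \inwgt{i}{j}\,\op h(\epart{i}{j})$, hence $\cexp{Z_{i,j}}{\mathcal{G}} = 0$ and, given $\mathcal{G}$, the $\{Z_{i,j}\}_{j=1}^{\Nin}$ are independent and centered. Because $|\inwgttd{i}{j} h(\eparttd{i}{j})| \le \inwgttdbd \supn{h}$ while $|\inwgt{i}{j}\,\op h(\epart{i}{j})| \le \inwgtbd \supn{\op \1{\stsptd}}\, \supn{h}$, each summand obeys $|Z_{i,j}| \le \delta\, \supn{h}$ with $\delta = \inwgttdbd + \inwgtbd \supn{\op \1{\stsptd}}$ --- this is exactly where $\delta$ enters. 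A conditional Hoeffding inequality then gives, uniformly in $\mathcal{G}$, $\prob( |\tfrac{1}{\Nin}\sum_j Z_{i,j}| \ge t \mid \mathcal{G}) \le 2\exp(-\Nin t^2 / (2\delta^2 \supn{h}^2))$; a union bound over the $\Nis$ islands, followed by taking expectations over $\mathcal{G}$ (the bound being deterministic), yields $\prob(\max_i |\tfrac{1}{\Nin}\sum_j Z_{i,j}| \ge t) \le 2\Nis \exp(-\Nin t^2 / (2\delta^2 \supn{h}^2))$.

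For the second bracket, note that $\op h \in \bmf{\stfd}$ with $\supn{\op h} \le \supn{h}\,\supn{\op \1{\stsptd}}$ and mean level $\Hlw\,\targ \op h$, so it is precisely the object controlled by the input hypothesis \D{def:deviation:inequality:individual} applied to the bounded test function $\op h$; this produces a bound with rate coefficient proportional to $\Hcons[2]/\supn{\op\1{\stsptd}}^2$, which is the source of the $\supn{\op\1{\stsptd}}$ appearing in $\Hconstd{2}$, while $\Hlwtd$ appears through the centering identity above. Combining the two pieces by allocating the level $\varepsilon$ between them via the triangle inequality and a union bound, then dominating the resulting sum of two exponentials by twice the slower-decaying one, gives a bound of the claimed form with prefactor $\Hconstd{1} = 2(2 \vee \Hcons[1])$ (since $2 + \Hcons[1] \le 2(2\vee\Hcons[1])$) and rate $\Hconstd{2}$ equal to the minimum of the two exponential coefficients; tracking the allocation yields the stated values. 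The genuinely delicate points --- rather than the routine constant bookkeeping --- are (i) justifying that conditioning on $\mathcal{G}$ simultaneously freezes the inherited centering term into a constant and renders the mutation part a sum of bounded, independent, centered variables, so that Hoeffding applies islandwise and uniformly; and (ii) verifying that $\op h$ is a legitimate bounded measurable test function with mean $\Hlw\,\targ\op h = \Hlwtd\,\targtd h$, so that the input deviation hypothesis transfers verbatim to the second bracket.
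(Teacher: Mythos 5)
Your proposal is correct and follows essentially the same route as the paper's proof: the identical decomposition into the conditionally centered increments $\difftd{i}{j} = \inwgttd{i}{j} h(\eparttd{i}{j}) - \inwgt{i}{j} \op h(\epart{i}{j})$, bounded by $\delta \supn{h}$ and handled by a conditional Hoeffding inequality plus a union bound over islands, together with the inherited term controlled by applying the input hypothesis \D{def:deviation:inequality:individual} to the test function $\op h$ after the centering identity $\Hlwtd \times \targtd h = \Hlw \times \targ \op h$. The splitting of $\varepsilon$ and the combination of the two exponential bounds into $\Hconstd{1}$ and $\Hconstd{2}$ also match the paper's bookkeeping, so nothing is missing.
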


\begin{theorem} \label{thm:normality:mutation}
    Assume that $\{(\iswgt{i}, \{(\epart{i}{j}, \inwgt{i}{j}) \}_{j = 1}^{\Nin}) \}_{i = 1}^{\Nis}$ satisfies exponential deviation for $(\targ, \Hlw, \{c_{\ell} \}_{\ell = 1}^2)$ and is asymptotically normal for $(\targ, \asvar, \asvarANone, \{\condmeas{\ell} \}_{\ell = 1}^3)$. Then the mutated archipelago $\{(\iswgttd{i}, \{(\eparttd{i}{j}, \inwgttd{i}{j}) \}_{j = 1}^{\Nin}) \}_{i = 1}^{\Nis}$ is asymptotically normal for $(\targtd, \asvartd, \asvartdANone, \{\condmeastd{\ell} \}_{\ell = 1}^3)$, where $\targtd$ is defined in  \eqref{eq:one-step-F-K} and Êfor all $h \in \bmf{\stfdtd}$,
       $$
        \begin{cases}
            \asvartd(h) = \left( \asvar \{\op (h-\targtd h) \} + \condmeas{2} \prop  \{ \der^2(h - \targtd h)^2\} - \condmeas{2} \{Ê\op^2 (h - \targtd h) \} \right) \left/ (\targ \op \1{\stsptd})^2 \right. \eqsp, \\
            \asvartdANone(h) = \left( \asvarANone \{\op (h - \targtd h) \} + \condmeas{3} \prop \{Ê\der^2(h - \targtd h)^2 \} - \condmeas{3} \{Ê\op^2(h - \targtd h) \} \right) \left/ (\targ \op \1{\stsptd})^2 \right. \eqsp, \\
            \condmeastd{1} h = \condmeas{1} \op h / \targ \op \1{\stsptd}  \eqsp, \\
            \condmeastd{2} h = \condmeas{2} \prop(\der^2 h) / (\targ \op \1{\stsptd})^2 \eqsp, \\
            \condmeastd{3} h = \condmeas{3} \prop(\der^2 h) / (\targ \op \1{\stsptd})^2 \eqsp,
        \end{cases}
    $$
    (where $\op^2 h(x) \eqdef \{Ê\op h(x) \}^2$ and $\prop(\der^2 h)(x) \eqdef \int \der^2(x, x') h(x') \, \prop(x, \rmd x')$ for all $x \in \stsp$ and $h \in \bmf{\stfd}$).
\end{theorem}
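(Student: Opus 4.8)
The plan is to work conditionally on the $\sigma$-field $\filt$ generated by the input archipelago, under which the mutated particles $\{\eparttd{i}{j}\}$ are independent across $(i,j)$ with $\eparttd{i}{j}\sim\prop(\epart{i}{j},\cdot)$. Write the normalized input and output weights as $p_i := \iswgt{i}/\sum_{i'}\iswgt{i'}$, $q_{ij} := \inwgt{i}{j}/\sum_{j'}\inwgt{i}{j'}$, $\tilde p_i := \iswgttd{i}/\sum_{i'}\iswgttd{i'}$, $\tilde q_{ij} := \inwgttd{i}{j}/\sum_{j'}\inwgttd{i}{j'}$. The first, purely algebraic, step expresses the output weights through the input ones: with $\bar{D}_i := \sum_j q_{ij}\der(\epart{i}{j},\eparttd{i}{j})$ and $\bar{D} := \sum_i p_i \bar{D}_i$ one gets $\tilde q_{ij} = q_{ij}\der(\epart{i}{j},\eparttd{i}{j})/\bar{D}_i$, $\tilde p_i = p_i\bar{D}_i/\bar{D}$, and hence $\tilde p_i\tilde q_{ij} = p_i q_{ij}\der(\epart{i}{j},\eparttd{i}{j})/\bar{D}$. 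Since $\int\der(x,x')\,\prop(x,\rmd x') = \op\1{\stsptd}(x)$, the $\filt$-conditional mean of $\bar{D}_i$ is $\sum_j q_{ij}\op\1{\stsptd}(\epart{i}{j})$, which converges to $\targ\op\1{\stsptd}$ by consistency \C{def:consistency} of the (asymptotically normal, hence consistent) input applied to $\op\1{\stsptd}\in\bmf{\stfd}$; the same holds for $\bar{D}$. Thus $\targ\op\1{\stsptd}$ is the constant into which every denominator collapses.

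For \AN{ass:AN:CLT} I would use the ratio trick, replacing $h$ by $h-\targtd h$. By the identity above the self-normalized estimator has denominator $\sum_{i,j}p_i q_{ij}\der(\epart{i}{j},\eparttd{i}{j})$, whose conditional mean $\sum_{i,j}p_i q_{ij}\op\1{\stsptd}(\epart{i}{j})\plim\targ\op\1{\stsptd}$ and whose conditional variance is at most $\supn{\der}^2\sum_{i,j}p_i^2 q_{ij}^2 = O_{\prob}(1/\N)$ by \AN{ass:AN:mut-1} (at $h\equiv 1$); hence it is $\plim\targ\op\1{\stsptd}$, and Slutsky reduces the problem to the numerator $\sqrt{\N}\sum_{i,j}p_i q_{ij}\der(\epart{i}{j},\eparttd{i}{j})\{h(\eparttd{i}{j})-\targtd h\}$. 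I split this into its $\filt$-conditional mean $A_\N := \sqrt{\N}\sum_{i,j}p_i q_{ij}\op(h-\targtd h)(\epart{i}{j})$ and a conditionally centered remainder $B_\N$. As $\targ\op(h-\targtd h)=0$, $A_\N$ is exactly the input CLT \AN{ass:AN:CLT} at $\op(h-\targtd h)\in\bmf{\stfd}$, so $A_\N\dlim\normdist(0,\asvar\{\op(h-\targtd h)\})$; meanwhile the conditional variance of $B_\N$ is $\N\sum_{i,j}p_i^2 q_{ij}^2\{\prop(\der^2(h-\targtd h)^2)-\op^2(h-\targtd h)\}(\epart{i}{j})$, which by \AN{ass:AN:mut-1} tends to the deterministic constant $\sigma_B^2:=\condmeas{2}\prop\{\der^2(h-\targtd h)^2\}-\condmeas{2}\{\op^2(h-\targtd h)\}$.

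The crux is to combine $A_\N$ and $B_\N$ into a single Gaussian, which I would do with the nested characteristic-function argument behind the triangular-array theorems of \cite{douc:moulines:2008}: writing $\esp{\exp(\mathrm{i}t(A_\N+B_\N))}=\esp{\exp(\mathrm{i}t A_\N)\,\cexp{\exp(\mathrm{i}t B_\N)}{\filt}}$, a conditional CLT gives $\cexp{\exp(\mathrm{i}t B_\N)}{\filt}\plim\exp(-t^2\sigma_B^2/2)$, and since this limit is deterministic (no mixing-normal appears) the factor decouples from $A_\N$, whose own limit comes from \AN{ass:AN:CLT}. The main obstacle is the conditional Lindeberg/negligibility condition for $B_\N$: its summands are bounded by $\supn{\der}\osc{h}$ times a normalized weight, and I would verify negligibility by truncating on $\{\sqrt{\N}\,p_i q_{ij}>\delta\}$, controlling the small-weight part by a fourth-moment bound and the large-weight part using $\max_i p_i\plim 0$ (from \AN{def:tightness:island}) together with the boundedness of $\N\sum_{i,j}p_i^2 q_{ij}^2$ supplied by \AN{ass:AN:mut-1}. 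Crucially, because Mutation injects only within-island randomness, this argument needs no uniform-over-islands law of large numbers, which is why \R{sub-geometric:growth} is not required (in contrast to the island-selection CLT, \autoref{thm:normality:island:selection}). Dividing the resulting $\normdist(0,\asvar\{\op(h-\targtd h)\}+\sigma_B^2)$ limit of the numerator by $(\targ\op\1{\stsptd})^2$ gives the stated $\asvartd(h)$.

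The remaining properties follow by the same conditional-mean/fluctuation split. In \AN{ass:AN:squared:islands} the inner island sum is $\bar{D}_i^{-1}(\mu_i+\nu_i)$ with conditional-mean part $\mu_i=\sum_j q_{ij}\op(h-\targtd h)(\epart{i}{j})$; after replacing $\bar{D}_i\bar{D}$ by $(\targ\op\1{\stsptd})^2$, the $\mu_i^2$ contribution is input \AN{ass:AN:squared:islands} at $\op(h-\targtd h)$ (yielding $\asvarANone\{\op(h-\targtd h)\}$), the $\nu_i^2$ contribution is input \AN{ass:AN:mut-2} at $\prop\{\der^2(h-\targtd h)^2\}-\op^2(h-\targtd h)$ (yielding the $\condmeas{3}$ terms), and the cross term vanishes using $\max_i p_i\plim 0$; this reproduces $\asvartdANone(h)$. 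Similarly \AN{ass:AN:squared:iswgts}, \AN{ass:AN:mut-1}, \AN{ass:AN:mut-2} reduce, after the denominator replacement, to the corresponding input properties evaluated at $\op h$, $\prop(\der^2 h)$, and $\prop(\der^2 h)$, giving $\condmeastd{1}$, $\condmeastd{2}$, $\condmeastd{3}$; and \AN{def:tightness:island} for the output follows from that of the input via $\tilde p_i=p_i\bar{D}_i/\bar{D}$ with $\bar{D}_i\le\supn{\der}$ and $\bar{D}$ bounded below. The one delicate point common to all five is that each denominator replacement ($\bar{D}_i,\bar{D}\to\targ\op\1{\stsptd}$, and likewise replacing per-island sums such as $\sum_j q_{ij}\op\1{\stsptd}(\epart{i}{j})$ by their limit) must be justified without uniform control; I would do so by expanding around $\targ\op\1{\stsptd}$ and bounding the corrections by Cauchy--Schwarz, using that $\sum_i p_i r_i^2=O_{\prob}(1/\Nin)$ for the relevant fluctuations $r_i$ (by \AN{ass:AN:squared:islands}) against the tight quantity $\Nis\sum_i p_i^2$.
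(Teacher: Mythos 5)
Your architecture is the same as the paper's: the weight identity $\tilde p_i\tilde q_{ij}=p_iq_{ij}\der(\epart{i}{j},\eparttd{i}{j})/\bar D$ is exactly the paper's decomposition \eqref{decomposition:mutation}; the split of the numerator into the $\filt$-conditional mean (handled by the input \AN{ass:AN:CLT} at $\op(h-\targtd h)$) plus a conditionally centered fluctuation whose conditional variance converges via \AN{ass:AN:mut-1} is the paper's use of \autoref{thm:DM:A-3}; and your nested characteristic-function decoupling is precisely what \autoref{lemma:asind} does, with Slutsky and \eqref{consistency:mutation:1} finishing. However, there is a genuine gap at the negligibility step, and it is located exactly where you claim strength: your proof never uses the exponential deviation hypothesis \D{def:deviation:inequality:individual}, and your assertion that ``this argument needs no uniform-over-islands law of large numbers'' is false. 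The summands of $B_\N$ are bounded by $C\sqrt{\N}\,p_iq_{ij}$, so your truncation argument requires $\max_{i,j}\sqrt{\N}\,p_iq_{ij}\plim 0$ (tightness of $\N\sum_{i,j}p_i^2q_{ij}^2$ from \AN{ass:AN:mut-1} bounds a sum of squares but cannot annihilate a single large term, and the fourth-moment route hits the same wall, since $\N^2\sum_{i,j}p_i^4q_{ij}^4\leq \bigl(\N\max_i p_i^2\,\max_{i,j}q_{ij}^2\bigr)\cdot\N\sum_{i,j}p_i^2q_{ij}^2$). But $q_{ij}\leq \inwgtbd/\sum_{j'}\inwgt{i}{j'}$, and nothing in \AN[ass:AN:CLT]{def:tightness:island} prevents one island from having a nearly vanishing weight sum, making some $q_{ij}$ of order one; then $\max_{i,j}\sqrt{\N}\,p_iq_{ij}$ is of order $\sqrt{\N}\max_i p_i\asymp\sqrt{\Nin/\Nis}$, which diverges whenever $\Nin\gg\Nis$. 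A uniform-over-islands lower bound on $\Nin^{-1}\sum_{j}\inwgt{i}{j}$ is indispensable, and only the exponential deviation hypothesis supplies it.

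The same defect undermines your ``Cauchy--Schwarz'' fix for the denominator replacements in \AN{ass:AN:squared:islands} and for the output weights in \AN{ass:AN:mut-1}--\AN{ass:AN:mut-2}: there the normalization involves $1/\bar D_i$ with $\bar D_i=\iswgttd{i}/\iswgt{i}$, which has no almost-sure lower bound, and second-moment control of $\bar D_i-\targ\op\1{\stsptd}$ (your $\sum_i p_ir_i^2=O_{\prob}(1/\Nin)$) gives no control whatsoever of the inverse on the worst island. The paper resolves exactly these two points with the exponential deviation hypothesis: for the Lindeberg condition, \autoref{lem:deviation:properly:normalized} and the conditional Hoeffding bound \eqref{eq:conditional:hoeffding:mutation} feed into \autoref{lem:tightness} through the decomposition $\max_i|U_\N(i)|\leq V_\N+X_\N Y_\N^2$; for the inverses, \autoref{lem:Hoeffding:ratio:island:weights} combined with \autoref{lem:inverse:proba} yields $\max_i|\bar D_i^{-1}-(\targ\op\1{\stsptd})^{-1}|\plim 0$. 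You are right that \R{sub-geometric:growth} is not needed, but for the wrong reason: it is not that uniform-over-islands arguments are avoided, but that the uniform deviations are only required at the scale $\sqrt{\Nis/\Nin}$, at which the Hoeffding bound $\Hc\Nis\exp(-\He\Nin\varepsilon^2/\supn{h}^2)$ becomes of order $\Nis\exp(-c\,\Nis)$ and beats the union bound over the $\Nis$ islands for any relative rate of $\Nis$ and $\Nin$. To repair the proof, reinstate the exponential deviation hypothesis in precisely these two steps, as the paper does.
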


\begin{remark}
Note that \autoref{thm:normality:individual:selection} and \autoref{thm:normality:mutation} hold true regardless of the intermutual rates by which $\Nis$ and $\Nin$ tend to infinity with $\N$. In particular, these results do not, on the contrary to \autoref{thm:normality:island:selection}, require the condition \R{sub-geometric:growth}. This is in line with what we expect, as the SiL and Mutation operations do not involve any island interaction.
\end{remark}

\begin{remark} \label{rem:AN:conditions}
As clear from the previous, the SiL, SIL, and Mutation operations modify a given archipelago by means of either resampling of islands, local, island-wise resampling of individuals or random mutation of all the individuals of the archipelago. Assumptions \AN[ass:AN:squared:islands]{ass:AN:mut-1} regulate the increase of asymptotic variance brought forth by subjecting the archipelago to each of these operations, respectively. Thus, when the archipelago is subjected to a given operation, only one of these conditions plays the active role for the propagation of the CLT in \AN{ass:AN:CLT}; however, since we want to be able to analyze arbitrary, possibly random (as in the {\DBAIS} algorithm in \autoref{sec:BASIL:algorithm}) compositions of the operations, we are required to keep a record of the incremental variances disengaged by each one. Still, the conditions \AN[ass:AN:squared:islands]{def:tightness:island} are nested intricately in the sense that for a given operation, one or several conditions play active roles for the propagation of another. In this way, the condition \AN{ass:AN:mut-2}, which does not regulate directly the increase of asymptotic variance for any of the operations, bridges the mutation and island resampling operations in the sense that it regulates the limit \AN{ass:AN:squared:islands} in the case of mutation.
\end{remark}


\section{Applications}
\label{sec:applications}
\subsection{Feynman-Kac models}
\label{sec:Feynman-Kac:models}

For a sequence of unnormalized transition kernels $\{ \op[n] \}_{n \in \nset}$ defined on some common measurable space $(\stsp, \stfd)$ and some probability distribution $\kac[0] \in \probmeas{\stfd}$, a sequence $\{ \kac[n] \}_{n \in \nset}$ of \emph{Feynman-Kac measures} is defined by
\begin{equation} \label{eq:FKmeasures}
    \kac[n] h \eqdef \frac{\unkac[n] h}{\unkac[n] \1{\stsp}} \eqsp, \quad n \in \nset, \ h \in \bmf{\stfd} \eqsp,
\end{equation}
where
$$
	\unkac[n] h \eqdef \idotsint h(x_n) \, \kac[0](\rmd x_0) \prod_{p = 0}^{n - 1} \op[p](x_p, \rmd x_{p+1}) \quad (h \in \bmf{\stfd}) \eqsp
$$
(with usual convention $\prod_{p = m}^n a_p = 1$ when $m > n$). We may express recursively the sequences of unnormalized and normalized Feynman-Kac measures as, for $h \in \bmf{\stfd}$ and $(m, n) \in \nset $ with $m \leq n$,
$$
    \unkac[n] h = \unkac[m] \op[m] \cdots \op[n - 1] h \quad \mbox{and} \quad \kac[n] h = \frac{\unkac[m] \op[m] \cdots \op[n - 1] h}{\unkac[m] \op[m] \cdots \op[n - 1] \1{\stsp}} = \frac{\kac[m] \op[m] \cdots \op[n - 1] h}{\kac[m] \op[m] \cdots \op[n - 1] \1{\stsp}} \eqsp,
$$
respectively, with the convention $\op[m] \cdots \op[\ell](x, h) = h(x)$ if $m > \ell$. In particular,
\begin{equation} \label{eq:F-K:recursion}
    \kac[n + 1] h = \frac{\kac[n] \op[n] h}{\kac[n] \op[n] \1{\stsp}} \quad (h \in \bmf{\stfd}, n \in \nset) \eqsp,
\end{equation}
which means that we may cast the model into the framework considered in  \autoref{sec:mutation}.
\begin{example} \label{ex:Feynman-Kac}
    A special instance of the previous framework is formed naturally by specifying, first, a sequence $\{ \Mk[n] \}_{n \in \nset}$ of normalized (Markov) transition kernels on $(\stsp, \stfd)$Ê with an associated initial distribution $\chi$ and, second, \emph{potential functions} $\{Ê\pot[n] \}_{n \in \nsetpos}$, where $\pot[n] : \stsp \rightarrow \rsetpos$ for all $n \in \nsetpos$, and letting $\op[n]h(x) \eqdef \Mk[n](\pot[n + 1] h)(x)$, $n \in \nsetpos$,  $x \in \stsp$, and $h \in \bmf{\stfd}$. In addition, $\kac[0] \eqdef \chi$. This setup covers a large variety of important models in probability and statistics, such as \emph{optimal filtering} in hidden Markov models (or \emph{state-space models}; see, e.g., \cite{cappe:moulines:ryden:2005}) and models for the analysis of \emph{rare events} \cite{delmoral:garnier:2005,cerou:delmoral:furon:guyader:2012}. We will return to this setting in \autoref{long-term:stability:compact:case}.
\end{example}
Using a Feynman-Kac model in practice is typically non-trivial as neither the distribution flow $\{ \unkac[n] \}_{n \in \nset}$Ê nor $\{ \kac[n] \}_{n \in \nset}$Ê can be computed in a closed-form in general (with the exception of the very specific cases of optimal filtering in linear state-space models, in which case the solution is provided by the \emph{Kalman filter}, or hidden Markov models with finite state space).

 \subsection{The double bootstrap algorithm with adaptive selection}
 \label{sec:BASIL:algorithm}
In this section, our aim is to form online a sequence of archipelagos targeting the Feynman-Kac flow $\{ \kac[n] \}_{n \in \nset}$ by using sequentially the operations described in \autoref{sec:main:results}. A special feature of the approach that we consider is that the SIL operation is not performed systematically at every iteration of the algorithm, but only when the island weights fail to satisfy some appropriately defined skewness criterion. In this way we avoid adding unnecessary variance to the estimator. More specifically, we will analyze an algorithm proposed in \cite[Algorithm~3]{verge:dubarry:delmoral:moulines:2013}, where SIL is executed on the basis of the so-called \emph{coefficient of variation} (CV; see \cite{kong:liu:wong:1994} and \cite{liu:2001}) given by
$\CV(\{Ê\iswgt{i} \}_{i = 1}^{\Nis})$, where
\begin{equation} \label{def:ESScriterion}
 \CV :  (\rsetpos)^{\Nis} \ni \{Êa(i) \}_{i = 1}^{\Nis} \mapsto \Nis \sum_{i=1}^{\Nis} \left(\frac{a(i)}{\sum_{i'=1}^{\Nis} a(i')}\right)^2 - 1 \eqsp.
\end{equation}
The CV is closely related to the \emph{efficient sample size} (ESS, proposed in \cite{liu:1996}), which is the criterion used in \cite{verge:dubarry:delmoral:moulines:2013}; nevertheless, since the ESS can be expressed as
$\Nis / [1 + \CV(\{Êa(i) \}_{i = 1}^{\Nis})]$, the two criteria are equivalent. Note that the CV is minimal (zero) when all island weights are perfectly equal and maximal ($\Nis - 1$) in the situation of maximal skewness, i.e., when the total mass of the system is carried by a single island (a situation which is however not possible in our framework, as we always assume the island weights to be strictly positive). More specifically, as long as the CV stays below a specified threshold $\CVthres > 0$, we let the $\Nis$ Êislands evolve without interaction according to mutation and selection on the individual level. However, when the island weights get too dispersed as measured by the CV criterion, the islands are rejuvenated by SIL. The scheme, referred to by us as the \emph{double bootstrap with adaptive selection on the island level} (\DBAIS), is described in Algorithm~\autoref{alg:ESS:Bootstrap}, where we have added the iteration index $p$ to the weighted archipelagos returned by the algorithm.


\begin{algorithm}[t] \label{alg:ESS:Bootstrap}
	\KwData{$\{Ê\prop[p] \}_{p = 0}^{n - 1}$, $\tau$}
	\KwResult{$\{( \iswgt[p]{i}, \{(\epart[p]{i}{j}, \inwgt[p]{i}{j}) \}_{j = 1}^{\Nin}) \}_{i = 1}^{\Nis}$, $p \in \intvect{0}{n}$}
	 \tcc{initialization}
	 \For{$i \gets 1$ \KwTo $\Nis$}{
          	\For{$j \gets 1$ \KwTo $\Nin$}{
                    	$\epart[0]{i}{j} \sim \targ[0]$\;
                    	$\inwgt[0]{i}{j} \gets 1$\;
        		}
		$\iswgt[0]{i} \gets 1$\;
        }
        $\{( \iswgt[1]{i}, \{(\epart[1]{i}{j}, \inwgt[1]{i}{j}) \}_{j = 1}^{\Nin}) \}_{i = 1}^{\Nis} \gets \mutation{\op[0]}{\prop[0]}\left( \{Ê(1, \{Ê(\epart[0]{i}{j}, 1) \}_{j = 1}^{\Nin}) \}_{i = 1}^{\Nis}, \prop[0] \right)$\;
        \tcc{main loop}
        \For{$p \gets 1$ \KwTo $n-1$}{
            \tcc{checking island weight skewness}
            \eIf{$\CV( \{Ê\iswgt[p]{i} \}_{i = 1}^{\Nis}) > \CVthres$}{
                \tcc{selection on the island level}
                $
                		\{Ê(\iswgttd[p]{i}, \{Ê(\eparttd[p]{i}{j}, \inwgttd[p]{i}{j}) \}_{j = 1}^{\Nin}) \}_{i = 1}^{\Nis} \gets \SIL\left( \{Ê(\iswgt[p]{i}, \{Ê(\epart[p]{i}{j}, \inwgt[p]{i}{j}) \}_{j = 1}^{\Nin}) \}_{i = 1}^{\Nis} \right)
                $\;
            }{
            	\tcc{no action}
                $
                		\{Ê(\iswgttd[p]{i}, \{Ê(\eparttd[p]{i}{j}, \inwgttd[p]{i}{j}) \}_{j = 1}^{\Nin}) \}_{i = 1}^{\Nis} \gets \{Ê(\iswgt[p]{i}, \{Ê(\epart[p]{i}{j}, \inwgt[p]{i}{j}) \}_{j = 1}^{\Nin}) \}_{i = 1}^{\Nis}
                $\;
            }
            \tcc{selection on the individual level}
            $
            		\{Ê(\iswgttd[p]{i}, \{Ê(\epartck[p]{i}{j}, 1) \}_{j = 1}^{\Nin}) \}_{i = 1}^{\Nis} \gets \SiL \left( \{Ê(\iswgttd[p]{i}, \{Ê(\eparttd[p]{i}{j}, \inwgttd[p]{i}{j}) \}_{j = 1}^{\Nin}) \}_{i = 1}^{\Nis} \right)
            $\;
            \tcc{mutation}
            $
            		\{Ê(\iswgt[p + 1]{i}, \{Ê(\epart[p + 1]{i}{j}, \inwgt[p + 1]{i}{j}) \}_{j = 1}^{\Nin} \}_{i = 1}^{\Nis} \gets \mutation{\op[p]}{\prop[p]}\left( \{Ê(\iswgttd[p]{i}, \{Ê(\epartck[p]{i}{j}, 1) \}_{j = 1}^{\Nin}) \}_{i = 1}^{\Nis}, \prop[p] \right)
            $\;
        }
        \caption{The {\DBAIS} algorithm}
\end{algorithm}

Using the theoretical results obtained in \autoref{sec:main:results} we may prove the following result, establishing that exponential deviation and asymptotic normality are preserved through one iteration of the {\DBAIS} algorithm. As a by product we obtain the incremental asymptotic variance caused by an iteration. Since focus is set on asymptotic normality, we provide recursive formulas describing precisely the evolution of the functionals and measures involved in \AN[ass:AN:CLT]{ass:AN:mut-2}, while leaving the derivation of the analogous formulas for the constants of the exponential deviation bound \D{def:deviation:inequality:individual} to the reader. The proof of this result provides a nice illustration of the efficiency by which the theoretical results obtained in \autoref{sec:main:results}, despite appearing somewhat involved at a first sight, can be applied for analyzing sequences of archipelagos produced by executing alternatingly the SIL, SiL, and Mutation operations in an arbitrary order.

\begin{theorem} \label{theo:bootstrap:within:ESS}
    Assume~\R{sub-geometric:growth} and that $\{( \iswgt[n]{i}, \{(\epart[n]{i}{j}, \inwgt[n]{i}{j}) \}_{j = 1}^{\Nin} ) \}_{i = 1}^{\Nis}$ satisfies exponential deviation and is asymptotically normal for $(\kac[n], \asvar[n], \asvarANone[n], \{ \condmeas[n]{\ell} \}_{\ell = 1}^3)$, $n \in \nsetpos$. Then the archipelago $\{( \iswgt[n + 1]{i}, \{(\epart[n + 1]{i}{j}, \inwgt[n + 1]{i}{j}) \}_{j = 1}^{\Nin} ) \}_{i = 1}^{\Nis}$ generated through one iteration of Algorithm~\autoref{alg:ESS:Bootstrap} is asymptotically normal for $(\kac[n + 1], \asvar[n + 1], \asvarANone[n + 1], \{ \condmeas[n + 1]{\ell} \}_{\ell = 1}^3)$, where $\kac[n + 1]$ is given by \eqref{eq:F-K:recursion} and for all $ h \in \bmf{\stfd}$,
    $$
        \begin{cases}
            \asvar[n + 1](h)
            = \displaystyle
            \frac{\asvar[n] \{ \op[n](h - \kac[n + 1] h) \}
            + \varepsilon_n \asvarANone[n] \{ \op[n](h - \kac[n + 1] h) \}}{(\kac[n] \op[n] \1{\stsp})^2} \\
            \displaystyle \hspace{29mm}  + \frac{\varepsilon_n \kac[n] \prop[n]\{Ê\der^2_n(h - \kac[n + 1] h)^2 \}
             + (1 - \varepsilon_n) \condmeas[n]{1} \prop[n]\{Ê\der^2_n(h - \kac[n + 1] h)^2 \}}{(\kac[n] \op[n] \1{\stsp})^2} \eqsp, \\
            \displaystyle \asvarANone[n + 1](h) = \frac{\asvarANone[n]\{Ê\op[n](h - \kac[n + 1] h) \} + \kac[n] \prop[n]\{Ê\der^2_n(h - \kac[n + 1] h)^2 \}}{(\kac[n] \op[n] \1{\stsp})^2} \eqsp, \\
            \condmeas[n + 1]{1} h = \displaystyle (1 - \varepsilon_n) \frac{\condmeas[n]{1} \op[n] h }{\kac[n] \op[n] \1{\stsp}} + \varepsilon_n \kac[n + 1] h \eqsp, \\
            \displaystyle \condmeas[n + 1]{2} h = (1 - \varepsilon_n) \frac{\condmeas[n]{1} \prop[n](\der_n^2 h)}{(\kac[n] \op[n] \1{\stsp})^2} + \varepsilon_n \condmeas[n + 1]{3} h \eqsp, \\
            \displaystyle \condmeas[n + 1]{3} h = \frac{\kac[n] \prop[n](\der_n^2 h)}{(\kac[n] \op[n] \1{\stsp})^2} \eqsp,
        \end{cases}
    $$
    where $\varepsilon_n \eqdef \1{\{Ê\condmeas[n]{1} \1{\stsp} > \CVthres + 1\}}$.
\end{theorem}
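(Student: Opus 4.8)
The plan is to decompose one iteration of the {\DBAIS} algorithm into a composition of the three elementary operations analyzed in \autoref{sec:main:results} and simply track, operation by operation, how the hexad $(\targ, \asvar, \asvarANone, \{\condmeas{\ell}\}_{\ell=1}^3)$ governing asymptotic normality evolves. Concretely, one iteration maps the input archipelago $\{(\iswgt[n]{i}, \{(\epart[n]{i}{j}, \inwgt[n]{i}{j})\}_{j=1}^{\Nin})\}_{i=1}^{\Nis}$ to the output through (i) a \emph{conditional} SIL step, executed only on the event $\{\CV(\{\iswgt[n]{i}\}_{i=1}^{\Nis}) > \CVthres\}$, followed by (ii) a SiL step, followed by (iii) a Mutation step with kernel $\op[n]$ and proposal $\prop[n]$. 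The first thing I would do is apply \autoref{thm:normality:island:selection}, \autoref{thm:normality:individual:selection}, and \autoref{thm:normality:mutation} \emph{in this order}, composing the three parameter-update rules to obtain the claimed recursions for $(\asvar[n+1], \asvarANone[n+1], \{\condmeas[n+1]{\ell}\}_{\ell=1}^3)$.

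\textbf{Handling the adaptivity.} The key structural obstacle is the conditional nature of the SIL step: whether SIL is applied depends on the random quantity $\CV(\{\iswgt[n]{i}\}_{i=1}^{\Nis})$. The plan is to resolve this by first showing that $\CV(\{\iswgt[n]{i}\}_{i=1}^{\Nis}) \plim[N] \condmeas[n]{1}\1{\stsp} - 1$; this follows directly from \AN{ass:AN:squared:iswgts} applied with $h = \1{\stsp}$ (together with the consistency that asymptotic normality entails), since the left-hand side of \AN{ass:AN:squared:iswgts} with $h=\1{\stsp}$ is exactly $\CV(\{\iswgt[n]{i}\}_{i=1}^{\Nis})+1$ after recognizing that the inner particle sum equals one. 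Consequently, provided $\condmeas[n]{1}\1{\stsp} - 1 \neq \CVthres$, the indicator $\1{\{\CV(\{\iswgt[n]{i}\}_{i=1}^{\Nis}) > \CVthres\}}$ converges in probability to the deterministic constant $\varepsilon_n = \1{\{\condmeas[n]{1}\1{\stsp} > \CVthres+1\}}$. Thus, asymptotically, the random branch is equivalent to a \emph{deterministic} choice: with probability tending to one, the algorithm either always performs SIL (if $\varepsilon_n = 1$) or never does (if $\varepsilon_n = 0$). This lets me replace the data-dependent operation by a fixed composition on an event of probability tending to one, so that the limits are unaffected.

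\textbf{Carrying out the composition.} On the event where $\varepsilon_n = 1$, I would compose SIL $\to$ SiL $\to$ Mutation; on the event where $\varepsilon_n = 0$, SiL $\to$ Mutation. The update rules of the three theorems are then chained, and the appearance of the convex combinations weighted by $\varepsilon_n$ and $1-\varepsilon_n$ in the statement is precisely the bookkeeping that unifies the two branches into a single formula. For instance, \autoref{thm:normality:island:selection} sends $\asvar \mapsto \asvar + \asvarANone$ and $(\condmeas{1}, \condmeas{2}) \mapsto (\targ, \condmeas{3})$ while leaving $\asvarANone, \condmeas{3}$ fixed, whereas skipping SIL leaves the whole hexad unchanged; writing the post-SIL parameters as $\asvar + \varepsilon_n \asvarANone$, $(1-\varepsilon_n)\condmeas{1} + \varepsilon_n \targ$, and $(1-\varepsilon_n)\condmeas{2} + \varepsilon_n\condmeas{3}$ captures both cases simultaneously. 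I would then feed these into the SiL update of \autoref{thm:normality:individual:selection} (with $\targ = \kac[n]$) and finally into the Mutation update of \autoref{thm:normality:mutation} (with $\targtd = \kac[n+1]$ via \eqref{eq:F-K:recursion}, noting $\targ\op[n]\1{\stsp} = \kac[n]\op[n]\1{\stsp}$ since $\kac[n]$ is a probability measure); the definition $\targtd h = \targ\op h/\targ\op\1{\stsptd}$ makes the factors $(\kac[n]\op[n]\1{\stsp})^{-2}$ emerge automatically, and after collecting terms and using $\condmeastd{2} = \condmeastd{3}$ from the Mutation step one recovers the displayed recursions.

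\textbf{The main obstacle} I anticipate is twofold. First, one must verify that \R{sub-geometric:growth} is available for the SIL step, which is why it is hypothesized; the SiL and Mutation steps require no such condition (cf. the remark following \autoref{thm:normality:mutation}). Second, and more delicately, the conditional (adaptive) SIL step means the composed operations are no longer applied to a fixed archipelago but on a random event; making the substitution ``indicator converges to a constant, hence may be frozen'' fully rigorous requires a short argument that convergence in distribution (resp.\ in probability) of the post-mutation functionals is unaffected when the branching event differs from its limiting deterministic value only on a set of vanishing probability. The cleanest route is to condition on the two complementary events and observe that each conditional limit coincides with the corresponding deterministic composition, so that the unconditional limit is the same; the borderline case $\condmeas[n]{1}\1{\stsp} - 1 = \CVthres$ is implicitly excluded (or treated as negligible), and I would flag this assumption explicitly.
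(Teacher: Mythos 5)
Your proposal follows essentially the same route as the paper's proof: using \AN{ass:AN:squared:iswgts} with $h=\1{\stsp}$ to get $\CV(\{\iswgt[n]{i}\}_{i=1}^{\Nis}) \plim \condmeas[n]{1}\1{\stsp} - 1$, hence $\1{\{\CV(\{\iswgt[n]{i}\}_{i=1}^{\Nis}) > \CVthres\}} \plim \varepsilon_n$, and then chaining \autoref{thm:normality:island:selection} (with \autoref{thm:deviation:inequalities:island:selection}), \autoref{thm:normality:individual:selection} (with \autoref{thm:deviation:inequalities:individual:selection}), and \autoref{thm:normality:mutation} (with \autoref{thm:deviation:inequalities:mutation}) to propagate the hexad through the two branches, exactly as the paper does. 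Your explicit flagging of the borderline case $\condmeas[n]{1}\1{\stsp} - 1 = \CVthres$ is a careful touch the paper passes over silently, but it does not change the argument.
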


\begin{proof}
    First, note that since the input archipelago satisfies \AN{ass:AN:squared:iswgts}, it holds that
    $$
    	\CV( \{Ê\iswgt[n]{i} \}_{i = 1}^{\Nis} ) \plim \condmeas[n]{1} \1{\stsp} - 1 \eqsp,
    $$
    which implies
    $$
    	\1{\{\CV( \{Ê\iswgt[n]{i} \}_{i = 1}^{\Nis} ) > \CVthres \}} \plim[N] \varepsilon_n \eqsp,
    $$
    where $\varepsilon_n$ is defined in the statement of the theorem. Consequently, after the \textbf{if}-\textbf{else} statement in Algorithm~\autoref{alg:ESS:Bootstrap}, the resulting archipelago $\{Ê(\iswgttd[p]{i}, \{Ê(\eparttd[p]{i}{j}, \inwgttd[p]{i}{j}) \}_{j = 1}^{\Nin}) \}_{i = 1}^{\Nis}$ satisfies, by \autoref{thm:deviation:inequalities:island:selection} and \autoref{thm:normality:island:selection}, exponential deviation and asymptotic normality, the latter for
    $$
        \begin{cases}
            (\kac[n], \asvar[n], \asvarANone[n], \{ \condmeas[n]{\ell} \}_{\ell = 1}^3) & \mbox{if } \varepsilon_n = 0 \eqsp, \\
            (\kac[n], \asvar[n] + \asvarANone[n]  , \asvarANone[n], \kac[n], \condmeas[n]{3}, \condmeas[n]{3}) & \mbox{if } \varepsilon_n = 1 \eqsp.
        \end{cases}
    $$
    Thus, the archipelago $\{Ê(\iswgttd[p]{i}, \{Ê(\epartck[p]{i}{j}, 1) \}_{j = 1}^{\Nin}) \}_{i = 1}^{\Nis}$ obtained after additional SiL satisfies, by \autoref{thm:deviation:inequalities:individual:selection} and \autoref{thm:normality:individual:selection}, exponential deviation as well as asymptotic normality, the latter for
    $$
        \begin{cases}
            (\kac[n], \asvar[n](\cdot) + \condmeas[n]{1}\{Ê(\cdot - \kac[n] \cdot)^2 \}, \asvarANone[n](\cdot) + \kac[n]\{Ê(\cdot - \kac[n] \cdot)^2 \}, \condmeas[n]{1}, \condmeas[n]{1}, \kac[n]) & \mbox{if } \varepsilon_n = 0 \eqsp, \\
            (\kac[n], \asvar[n](\cdot) + \asvarANone[n](\cdot) + \kac[n]\{Ê(\cdot - \kac[n] \cdot)^2 \},  \asvarANone[n](\cdot) + \kac[n] \{Ê(\cdot - \kac[n] \cdot)^2 \}, \kac[n], \kac[n], \kac[n]) & \mbox{if } \varepsilon_n = 1 \eqsp.
        \end{cases}
    $$
    Finally, considering also the final Mutation operation in Algorithm~\autoref{alg:ESS:Bootstrap}, and propagating, for the two different values of $\varepsilon_n$, the quantities of the previous display through the updating formulas of \autoref{thm:normality:mutation}, establishes, together with \autoref{thm:deviation:inequalities:mutation}, the statement of the theorem.
\end{proof}

\begin{corollary} \label{coro:ESS:bootstrap}
    Assume \R{sub-geometric:growth}. Then all archipelagos $\{Ê(\iswgt[n]{i}, \{Ê(\epart[n]{i}{j}, \inwgt[n]{i}{j}) \}_{j = 1}^{\Nin}) \}_{i = 1}^{\Nis}$, $n \in \nset$, produced by the {\DBAIS} algorithm satisfies exponential deviation and asymptotic normality, where for $h \in \bmf{\stfd}$ and $n \in \nsetpos$,
    $$
    \begin{cases}
        \displaystyle \asvar[n](h) = \sum_{\ell = 0}^{n - 1} \left(1 + \sum_{p = \ell + 1}^{n - 1} \varepsilon_p \right)
        \dfrac{\targ[\ell] \prop[\ell] \{Ê\der_{\ell}^2 \op[\ell + 1] \cdots \op[n - 1](h - \targ[n] h)^2 \}}{(\targ[\ell] \op[\ell] \cdots \op[n - 1] \1{\stsp})^2} \eqsp, \\
        \displaystyle \asvarANone[n](h) \displaystyle = \sum_{\ell = 0}^{n - 1} \dfrac{\targ[\ell] \prop[\ell] \{ \der_{\ell}^2 \op[\ell + 1] \cdots \op[n - 1](h - \kac[n] h)^2 \}}{(\kac[\ell] \op[\ell] \cdots \op[n - 1] \1{\stsp})^2} \eqsp, \\
        \displaystyle \condmeas[n]{1} h \displaystyle = \kac[n] h
    \end{cases}
    $$
    (under the standard conventions that $\prod_{\ell = m}^n a_\ell = 1$, $\sum_{\ell = m}^n a_\ell = 0$, and $\op[m] \cdots \op[n] = \operatorname{id}$ if $m > n$), and $\{Ê\varepsilon_n \}_{n \in \nsetpos}$ is given in \autoref{theo:bootstrap:within:ESS}. In addition, $\condmeas[0]{1} = \kac[0]$ and
    $$
   	 \asvar[0](h) =  \asvarANone[0](h) = \kac[0] \{(h - \kac[0] h)^2\} \quad (h \in \bmf{\stfd}) \eqsp.
    $$
\end{corollary}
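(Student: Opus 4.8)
The plan is to argue by induction on $n$, using \autoref{theo:bootstrap:within:ESS} as the inductive engine for the iterations of the main loop of \autoref{alg:ESS:Bootstrap}, while treating the initialization and the very first mutation separately. Exponential deviation is propagated at every step by \autoref{thm:deviation:inequalities:island:selection}, \autoref{thm:deviation:inequalities:individual:selection} and \autoref{thm:deviation:inequalities:mutation} (equivalently, the deviation part of \autoref{theo:bootstrap:within:ESS}), and, as in that theorem, I will not track the associated constants. For the base case $n = 0$, the initialization produces $\N = \Nis \Nin$ i.i.d.\ draws from $\kac[0]$ carrying unit particle and unit island weights; I would verify that this archipelago satisfies exponential deviation (by Hoeffding's inequality applied islandwise together with a union bound over the $\Nis$ islands) and is asymptotically normal for $(\kac[0], \asvar[0], \asvarANone[0], \{\condmeas[0]{\ell}\}_{\ell = 1}^3)$ with $\condmeas[0]{1} = \condmeas[0]{2} = \condmeas[0]{3} = \kac[0]$ and $\asvar[0] = \asvarANone[0] = \kac[0]\{(\cdot - \kac[0]\cdot)^2\}$. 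After inserting the uniform weights, the six conditions \AN[ass:AN:CLT]{def:tightness:island} collapse to elementary i.i.d.\ limit theorems: the classical CLT gives \AN{ass:AN:CLT}, laws of large numbers give \AN[ass:AN:squared:islands]{ass:AN:mut-2}, and \AN{def:tightness:island} is immediate since every normalized island weight equals $1/\Nis$. (Although the corollary reports only $\condmeas[0]{1}$, the values of $\condmeas[0]{2}$ and $\condmeas[0]{3}$ are needed to feed the first mutation.)

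Next I would handle the passage to $n = 1$, which in \autoref{alg:ESS:Bootstrap} is a single Mutation with kernels $\op[0], \prop[0]$ (no selection precedes it), so that \autoref{thm:deviation:inequalities:mutation} and \autoref{thm:normality:mutation} apply directly. Writing $g_1 \eqdef h - \kac[1]h$ and using $\kac[0]\op[0]g_1 = 0$ — a direct consequence of \eqref{eq:F-K:recursion} — the centering in \autoref{thm:normality:mutation} is trivial and the subtracted term $\condmeas[0]{2}\{\op[0]^2 g_1\} = \kac[0]\{(\op[0]g_1)^2\}$ exactly cancels the propagated-variance contribution $\asvar[0]\{\op[0]g_1\} = \kac[0]\{(\op[0]g_1)^2\}$, leaving $\asvar[1](h) = \kac[0]\prop[0]\{\der_0^2 g_1^2\} / (\kac[0]\op[0]\1{\stsp})^2$, which is the closed form at $n = 1$ (a single summand $\ell = 0$ with unit coefficient). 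The identical cancellation yields $\asvarANone[1]$, and $\condmeas[1]{1} = \kac[1]$ follows from the update $h \mapsto \condmeas[0]{1}\op[0]h / \kac[0]\op[0]\1{\stsp}$ together with \eqref{eq:F-K:recursion}.

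For the inductive step I would assume, for some $n \geq 1$, that the $n$-th archipelago is asymptotically normal for $(\kac[n], \asvar[n], \asvarANone[n], \{\condmeas[n]{\ell}\}_{\ell = 1}^3)$ with $\asvar[n], \asvarANone[n], \condmeas[n]{1}$ given by the stated formulas, and apply \autoref{theo:bootstrap:within:ESS} to obtain the $(n+1)$-th archipelago together with its recursions. A useful observation is that $\asvar[n+1], \asvarANone[n+1], \condmeas[n+1]{1}$ depend only on $\asvar[n], \asvarANone[n], \condmeas[n]{1}$, so this triple forms a self-contained subsystem and the measures $\condmeas[n]{2}, \condmeas[n]{3}$ (which exist by the induction hypothesis but are not reported) never need to be tracked. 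The verification that the closed forms solve these recursions rests, writing $g_{n+1} \eqdef h - \kac[n+1]h$, on three identities: (i) $\condmeas[n]{1} = \kac[n]$, which collapses the combination $\varepsilon_n \kac[n]\prop[n]\{\cdot\} + (1 - \varepsilon_n)\condmeas[n]{1}\prop[n]\{\cdot\}$ into a single term free of $\varepsilon_n$; (ii) the multiplicativity of the Feynman--Kac normalizing constants, $\kac[\ell]\op[\ell]\cdots\op[n]\1{\stsp} = (\kac[n]\op[n]\1{\stsp})\,\kac[\ell]\op[\ell]\cdots\op[n-1]\1{\stsp}$, which aligns the denominators; and (iii) $\kac[n]\op[n]g_{n+1} = 0$, which lets me replace $\op[n]g_{n+1} - \kac[n]\op[n]g_{n+1}$ by $\op[n]g_{n+1}$ inside $\asvar[n]$ and $\asvarANone[n]$ and then absorb the extra kernel via $\op[\ell+1]\cdots\op[n-1]\op[n] = \op[\ell+1]\cdots\op[n]$. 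These substitutions turn the $\ell$-th summand of $\asvar[n]\{\op[n]g_{n+1}\}/(\kac[n]\op[n]\1{\stsp})^2$ (respectively $\asvarANone[n]$) into the $\ell$-th summand of the target, the newly created mutation term $\kac[n]\prop[n]\{\der_n^2 g_{n+1}^2\}/(\kac[n]\op[n]\1{\stsp})^2$ supplies the $\ell = n$ summand, and the bookkeeping $1 + \sum_{p = \ell + 1}^{n-1}\varepsilon_p + \varepsilon_n = 1 + \sum_{p = \ell + 1}^{n}\varepsilon_p$ produces the required coefficients; the case $\condmeas[n+1]{1} = \kac[n+1]$ is immediate from \eqref{eq:F-K:recursion}. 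For step (iii) to telescope correctly one must read the operator notation $\op[\ell+1]\cdots\op[n-1](h - \kac[n]h)^2$ as ``apply the kernel product, then square''.

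The genuinely delicate part is not the invocation of \autoref{theo:bootstrap:within:ESS} but this last algebraic verification: one must line up the Feynman--Kac normalizations in the denominators, use $\condmeas[n]{1} = \kac[n]$ to purge the selection indicator $\varepsilon_n$ from the mutation contribution, and absorb the extra kernel into the operator product so that squaring commutes with the telescoping of the $\varepsilon_p$-weights. The other point requiring care is the first mutation, where a cancellation peculiar to the i.i.d.\ initialization (namely $\asvar[0]\{\op[0]\cdot\} = \condmeas[0]{2}\{\op[0]^2 \cdot\}$ on centered functions) removes the propagated-variance term and brings the $n = 1$ expression into agreement with the closed form.
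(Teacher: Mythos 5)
Your proposal is correct and follows essentially the same route as the paper's own proof: induction with \autoref{theo:bootstrap:within:ESS} as the engine, the base case settled by the classical i.i.d.\ CLT/LLN and Hoeffding's inequality, the first Mutation handled separately via \autoref{thm:deviation:inequalities:mutation} and \autoref{thm:normality:mutation}, and the closed forms verified against the recursions using exactly the two identities the paper displays (your step (iii) is the paper's first identity, since $\kac[n]\op[n](h - \kac[n+1]h) = 0$ by \eqref{eq:F-K:recursion}). You in fact supply the algebraic verification that the paper leaves as an exercise---including the correct ``apply the kernel product, then square'' reading of the notation, the self-containedness of the triple $(\asvar[n], \asvarANone[n], \condmeas[n]{1})$, and the $n = 1$ cancellation---so nothing is missing.
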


\begin{proof}
    The non-recursive expression above are verified using induction. More specifically, one assumes that the given expressions of $(\asvar[n], \asvarANone[n],  \condmeas[n]{1})$ hold true for some $n \in \nset$ (and for all $h \in \bmf{\stfd}$) and plug the same into the recursive expressions established in \autoref{theo:bootstrap:within:ESS} under repeated use of the identities
    \begin{multline*}
        \op[\ell] \cdots \op[n - 1]\{ \op[n](h - \kac[n + 1] h) - \kac[n] \op[n](h - \kac[n + 1] h) \} = \op[\ell] \cdots \op[n](h - \kac[n + 1] h) \\ (h \in \bmf{\stfd}, \ell \in \nset) \eqsp,
    \end{multline*}
    and
    \begin{equation*} \label{eq:factorize:Q:prod}
        \kac[\ell] \op[\ell] \cdots \op[n - 1] \1{\stsp} \times \kac[n] \op[n] \1{\stsp} = \kac[\ell] \op[\ell] \cdots \op[n] \1{\stsp} \quad (\ell \in \nset) \eqsp.
    \end{equation*}
We leave this to the reader. To verify the base case $n = 1$, note that the initial archipelago $\{(1, \{Ê(\epart[0]{i}{j}, 1) \}_{j = 1}^{\Nin}) \}_{i = 1}^{\Nis} \}$ is, by the standard CLT and law of large numbers of for independent random variables, asymptotically normal for $(\targ[0], \asvar[0], \asvar[0], \targ[0], \targ[0], \targ[0])$, where $\asvar[0](h) = \kac[0] \{(h - \kac[0] h)^2\}$, $h \in \bmf{\stfd}$, and satisfies, by Hoeffding's inequality, exponential deviation for $(\kac[0], 1, 2, 1/2)$. Now, by
    \autoref{thm:deviation:inequalities:mutation} and \autoref{thm:normality:mutation} also the weighted archipelago $\{Ê(\iswgt[1]{i}, \{Ê(\epart[1]{i}{j}, \inwgt[1]{i}{j}) \}_{j = 1}^{\Nin}) \}_{i = 1}^{\Nis}$, obtained by mutating the initial archipelago, satisfies exponential deviation and asymptotic normality for $\condmeas[1]{1} = \kac[1]$ and
    $$
    	\asvar[1](h) =  \asvarANone[1](h) = \dfrac{\kac[0] \prop[0] \{ \der_0^2 (h - \targ[1] h)^2 \}}{(\kac[0] \op[0] \1{\stsp})^2} \quad (h \in \bmf{\stfd}) \eqsp.
    $$
    Under the standard conventions, this is however in agreement with the formula in the statement of the theorem. This completes the proof.
\end{proof}

Of special interest is of course the special case where SIL is applied systematically at every iteration, corresponding to $\CVthres = 0$. This yields the standard {\DB} algorithm, in which case the asymptotic variance is given by the following corollary.

\begin{corollary} \label{coro:systematic:SIL}
    Assume \R{sub-geometric:growth}. Then all archipelagos $\{Ê(\iswgt[n]{i}, \{Ê(\epart[n]{i}{j}, \inwgt[n]{i}{j}) \}_{j = 1}^{\Nin}) \}_{i = 1}^{\Nis}$, $n \in \nset$, produced by the {\DB} algorithm satisfies exponential deviation and asymptotic normality, where for $h \in \bmf{\stfd}$ and $n \in \nsetpos$,
    \begin{equation} \label{eq:as:var:B2}
    	\asvar[n](h) = \sum_{\ell = 0}^{n - 1} (n - \ell)
            \dfrac{\targ[\ell] \prop[\ell] \{Ê\der_{\ell}^2 \op[\ell + 1] \cdots \op[n - 1](h - \targ[n] h)^2 \}}{(\targ[\ell] \op[\ell] \cdots \op[n - 1] \1{\stsp})^2} \eqsp,
    \end{equation}
    and
    $$
    	\asvar[0](h) = \kac[0] \{(h - \kac[0] h)^2\} \eqsp.
    $$
\end{corollary}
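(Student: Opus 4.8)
The plan is to deduce \autoref{coro:systematic:SIL} from \autoref{coro:ESS:bootstrap} by replacing the adaptive indicators $\{\varepsilon_p\}$ by the constant $1$ throughout. In the {\DB} algorithm the \textbf{if}--\textbf{else} statement of Algorithm~\autoref{alg:ESS:Bootstrap} is bypassed and SIL is executed at \emph{every} iteration $p \in \intvect{1}{n - 1}$ of the main loop; consequently the one-step analysis underlying \autoref{theo:bootstrap:within:ESS} always takes the SIL branch. I would therefore first record that the relevant one-step recursion for $(\asvar[n + 1], \asvarANone[n + 1], \{\condmeas[n + 1]{\ell}\}_{\ell = 1}^3)$ is exactly the composition of \autoref{thm:normality:island:selection} (SIL), \autoref{thm:normality:individual:selection} (SiL), and \autoref{thm:normality:mutation} (Mutation), i.e. the recursion of \autoref{theo:bootstrap:within:ESS} specialized to $\varepsilon_n = 1$; as there, \R{sub-geometric:growth} is what licenses the use of \autoref{thm:normality:island:selection} at each step.

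The delicate point, and the step I expect to require the most care, is the justification that the correct value is $\varepsilon_p = 1$ rather than the one obtained by naively substituting $\CVthres = 0$ into the limiting expression $\varepsilon_p = \1{\{\condmeas[p]{1} \1{\stsp} > \CVthres + 1\}}$ of \autoref{theo:bootstrap:within:ESS}. Since \autoref{coro:ESS:bootstrap} gives $\condmeas[p]{1} = \kac[p]$ and $\kac[p]$ is a probability measure, that expression evaluates to $\1{\{1 > 1\}} = 0$, which would wrongly produce the coefficient $1$ in place of $n - \ell$. The explanation I would give is that the convergence $\1{\{\CV(\{\iswgt[p]{i}\}_{i = 1}^{\Nis}) > \CVthres\}} \plim \varepsilon_p$ exploited in the proof of \autoref{theo:bootstrap:within:ESS} requires the limit $\condmeas[p]{1} \1{\stsp} - 1 = \kac[p] \1{\stsp} - 1 = 0$ of the coefficient of variation to be separated from the threshold $\CVthres$, which fails precisely at $\CVthres = 0$. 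For the {\DB} algorithm this boundary pathology is immaterial, because SIL is applied by construction and not through the coefficient of variation, so the indicator of the SIL branch is identically $1$. (Alternatively, one may observe that the island weights are almost surely pairwise distinct after the first mutation, so that $\CV > 0 = \CVthres$ and SIL fires at every step with probability one.)

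With $\varepsilon_p \equiv 1$ in hand, the last step is the induction already performed in the proof of \autoref{coro:ESS:bootstrap}, which I would simply rerun verbatim. Setting $\varepsilon_p = 1$ in the coefficient $1 + \sum_{p = \ell + 1}^{n - 1} \varepsilon_p$ collapses it to $1 + (n - 1 - \ell) = n - \ell$, yielding the weight in \eqref{eq:as:var:B2}; the summand, the base case $\asvar[0](h) = \kac[0]\{(h - \kac[0] h)^2\}$, and the two algebraic identities linking the kernel products $\op[\ell] \cdots \op[n]$ with their normalizations carry over unchanged. Because the statement concerns only $\asvar[n]$, nothing beyond the $\asvar$ recursion needs to be tracked, and this completes the argument.
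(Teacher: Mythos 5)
Your proposal is correct and follows the same route as the paper: the paper's entire proof of \autoref{coro:systematic:SIL} is the one-line reduction to \autoref{coro:ESS:bootstrap}, asserting that $\CVthres = 0$ implies $\varepsilon_n = 1$ for all $n \in \nsetpos$, which is exactly your final step with the coefficient $1 + \sum_{p = \ell + 1}^{n - 1} \varepsilon_p$ collapsing to $n - \ell$. Where you differ is in the care taken over the boundary case, and here your treatment is actually more scrupulous than the paper's: as you rightly observe, substituting $\CVthres = 0$ naively into the limiting indicator $\varepsilon_n = \1{\{\condmeas[n]{1} \1{\stsp} > \CVthres + 1\}}$ of \autoref{theo:bootstrap:within:ESS} gives $\1{\{1 > 1\}} = 0$, since $\condmeas[n]{1} = \kac[n]$ is a probability measure once SIL has been applied at every step, and the convergence $\1{\{\CV(\{\iswgt[n]{i}\}_{i=1}^{\Nis}) > \CVthres\}} \plim \varepsilon_n$ used in the proof of \autoref{theo:bootstrap:within:ESS} breaks down precisely when the limit of the coefficient of variation coincides with the threshold. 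Your resolution---that in the {\DB} algorithm the SIL branch is taken by construction, so \autoref{thm:normality:island:selection}, \autoref{thm:normality:individual:selection}, and \autoref{thm:normality:mutation} apply directly at each iteration (under \R{sub-geometric:growth}) with no appeal to the CV indicator, which is needed only for the adaptive scheme---is the correct reading of the paper's intent and in effect patches the gap in its one-line justification. Your parenthetical alternative (that $\CV(\{\iswgt[n]{i}\}_{i=1}^{\Nis}) > 0$ almost surely after the first mutation) requires the island weights not to be almost surely all equal, which can fail in degenerate cases (e.g., constant importance weight functions), so the by-construction argument is the one to keep.
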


\begin{proof}
    The result is an immediate consequence of \autoref{coro:ESS:bootstrap}, as $\CVthres = 0$ implies that $\varepsilon_n = 1$ for all $n \in \nsetpos$.
\end{proof}

On the other hand, letting $\varepsilon_n = 0$ for all $n \in \nsetpos$ in \autoref{coro:ESS:bootstrap}, corresponding to the case where SIL is never applied, yields the variance
\begin{equation} \label{eq:as:var:bootstrap}
	\asvar[n](h) = \sum_{\ell = 0}^{n - 1} \dfrac{\targ[\ell] \prop[\ell] \{Ê\der_{\ell}^2 \op[\ell + 1] \cdots \op[n - 1](h - \targ[n] h)^2 \}}{(\targ[\ell] \op[\ell] \cdots \op[n - 1] \1{\stsp})^2} \eqsp,
\end{equation}
which we recognize as the well-known formula for the asymptotic variance of the standard SMC algorithm (more specifically, the \emph{sequential importance sampling with resampling}, SISR, algorithm). This is completely in line with our expectations, as such an algorithm would simply propagate $\Nis$ independent (non-interacting) islands, each island evolving as a standard SMC algorithm based on $\Nin$ particles.

\subsection{Long-term stability of the double bootstrap algorithm}
\label{long-term:stability:compact:case}
As a last part of our study we establish the long-term numerical stability of the {\DB} algorithm by providing a time uniform bound on the asymptotic variance of its output. Throughout this section we will, in the spirit of \autoref{ex:Feynman-Kac}, assume that each unnormalized transition kernel $\op[p]$, $p \in \nset$, can be decomposed into a normalized transition kernel $\Mk[p] : \stsp \times \stfd \rightarrow [0, 1]$Ê and a nonnegative potential potential function $\pot[p + 1] : \stsp \rightarrow \rset_+$, i.e., for all $h \in \bmf{\stfd}$ and $x \in \stsp$,
\begin{equation} \label{eq:def:Q:potential:setting}
	\op[p]h(x) = \Mk[p](\pot[p + 1] h)(x) \eqsp.
\end{equation}
In this setting, given a sequence $\{Ê\prop[p] \}_{p \in \nset}$ of proposal kernels such that $\Mk[p](x, \cdot) \ll \prop[p](x, \cdot)$ for all $x \in \stsp$ and $p \in \nset$,Ê the importance weight function is given by
$$
    \der_p(x, x') = \pot[p + 1](x') \frac{\rmd \Mk[p](x, \cdot)}{\rmd \prop[p](x, \cdot)} \quad (x, x') \in \stsp^2 \eqsp.
$$
\begin{remark}
    Instead of letting the Feynman-Kac distribution flow be generated by the unnormalized kernel \eqref{eq:def:Q:potential:setting}, one could, as in \cite{verge:dubarry:delmoral:moulines:2013}, consider an alternative model with a flow $\{ \kactd[p] \}_{p \in \nset}$ generated by
    \begin{equation} \label{eq:def:prediction:flow}
        \optd[p]h(x) = \pot[p](x) \Mk[p]h(x) \quad (h \in \bmf{\stfd}, x \in \stsp, p \in \nsetpos) \eqsp,
    \end{equation}
    with $\optd[0] = \Mk[0]$ and $\kactd[0] = \chi$. In \cite{delmoral:2004} the two models \eqref{eq:def:Q:potential:setting} and \eqref{eq:def:prediction:flow} are referred to as  \emph{updated} and \emph{prediction} Feynman-Kac models, respectively. For the prediction model, it is, in the case of the {\DB} algorithm, possible to achieve \emph{full adaptation} (borrowing the terminology of \cite{pitt:shephard:1999}) of the algorithm, i.e., to generate archipelagos $\{Ê(1, \{Ê(1, \epart[p]{i}{j}) \}_{j = 1}^{\Nin}) \}_{i = 1}^{\Nis}$, $p \in \nset$, with uniformly weighted islands and individuals targeting the distribution sequence of interest, by letting $\prop[p] = \Mk[p]$ for all $p \in \nset$ and decomposing the dynamics \eqref{eq:def:prediction:flow}Ê into the product
    \begin{equation} \label{eq:F-K:decomp}
    	\optd[p] = G_p \Mk[p] \eqsp,
    \end{equation}
    where $G_p h(x) = \pot[p](x) h(x)$, $(x, h) \in \stsp \times \bmf{\stfd}$, is the \emph{Boltzmann multiplicative operator} associated with the potential $\pot[p]$. Now \eqref{eq:F-K:decomp} allows also the Feynman-Kac transition according to $\optd[p]$ to be decomposed into \emph{two} subsequent Feynman-Kac sub-transitions, the first according to $G_p$ and the other according to $\Mk[p]$. The former corresponds to the Mutation operation
    \begin{equation} \label{eq:weighing:operation:prediction}
    	\{Ê(\iswgtbar[p]{i}, \{Ê(\inwgtbar[p]{i}{j}, \epart[p]{i}{j}) \}_{i = 1}^{\Nin}) \}_{i = 1}^{\Nis} \leftarrow \mutation{G_p}{} \left( \{Ê(1, \{Ê(1, \epart[p]{i}{j}) \}_{j = 1}^{\Nin}) \}_{i = 1}^{\Nis}, \operatorname{id} \right) \eqsp,
    \end{equation}
    which simply assigns each particle and island the weights $\inwgtbar{i}{j} = \pot[p](\epart[p]{i}{j})$ and $\iswgtbar[p]{i} = \sum_{j = 1}^{\Nin} \pot[p](\epart[p]{i}{j}) / \Nin$, respectively (where we assumed that we start with uniformly weighted islands and individuals). After this \emph{weighing operation}, the output \eqref{eq:weighing:operation:prediction} Êis, in accordance with Algorithm~\autoref{alg:ESS:Bootstrap} (with $\CVthres = 0$), subjected to the SIL and SiL operations followed by the Mutation operation
    $$
    	\{Ê(1, \{Ê(1, \epart[p + 1]{i}{j}) \}_{i = 1}^{\Nin}) \}_{i = 1}^{\Nis} \leftarrow \mutation{\Mk[p]}{} \left( \{Ê(1, \{Ê(1, \eparttd[p]{i}{j}) \}_{j = 1}^{\Nin}) \}_{i = 1}^{\Nis}, \Mk[p] \right) \eqsp,
    $$
    yielding an archipelago with perfectly uniform island and individual weights approximating the Feynman-Kac distribution $\kactd[p + 1]$ at the next time point. Also this algorithm may be analyzed straightforwardly using our results, and carrying through this analysis retrieves exactly the variance expression obtained in \cite[Equation~43]{verge:dubarry:delmoral:moulines:2013}. We leave this as an exercise to the interested reader.

    The previous way of obtaining an archipelago with uniformly weighted islands and individuals approximating the prediction Feynman-Kac distribution flow can be viewed as a special instance of a general \emph{auxiliary double bootstrap algorithm} (extending the so-called \emph{auxiliary particle filter} proposed in \cite{pitt:shephard:1999}) based on the decomposition
    $$
    	\op[p] = T_p \check{Q}_p \eqsp,
    $$
    where $T_p h(x) = t_p(x) h(x)$, $(x, h) \in \stsp \times \bmf{\stfd}$, is a Boltzmann multiplicative operator associated with some positive auxiliary importance weight function $t_p \in \bmf{\stfd}$, and
    $$
    	\check{Q}_p(x, h) \eqdef t_p^{-1}(x) \op[p] h(x) \quad (x \in \stsp, h \in \bmf{\stfd}) \eqsp.
    $$
    In analogy with the previous, we may thus construct an alternative algorithm approximating $\{ \kac[p] \}_{p \in \nset}$ by furnishing the main loop of the {\DB} algorithm with a prefatory weighing operation
    \begin{multline} \label{eq:auxiliary:wgt:step}
    \{Ê(\iswgtbar[p]{i}, \{Ê(\inwgtbar[p]{i}{j}, \epart[p]{i}{j}) \}_{i = 1}^{\Nin}) \}_{i = 1}^{\Nis} \\Ê
    \leftarrow \mutation{T_p}{} \left( \{Ê(\iswgttd[p]{i}, \{Ê(\inwgttd[p]{i}{j}, \epart[p]{i}{j}) \}_{i = 1}^{\Nin}) \}_{i = 1}^{\Nis}, \operatorname{id} \right) \eqsp,
    \end{multline}
    and, after intermediate SIL and SiL operations, a terminating Mutation operation
    \begin{multline*}
    	\{Ê(\iswgt[p + 1]{i}, \{Ê(\inwgt[p + 1]{i}{j}, \epart[p + 1]{i}{j}) \}_{i = 1}^{\Nin}) \}_{i = 1}^{\Nis} \\Ê
    	\leftarrow \mutation{\check{Q}_p}{} \left( \{Ê(1, \{Ê(1, \eparttd[p]{i}{j}) \}_{j = 1}^{\Nin}) \}_{i = 1}^{\Nis}, \prop[p] \right) \eqsp,
    \end{multline*}
    where, consequently, the all weights are given by the importance weight function
    $$
    	\check{w}_p(x, x') = t_p^{-1}(x) \frac{\rmd \op[p](x, \cdot)}{\rmd \prop[p](x, \cdot)}(x') \quad ((x, x') \in \stsp^2) \eqsp.
    $$
    Thus, choosing $t_p(x)$ as some prediction of the value of the derivative $\rmd \op[p](x, \cdot) / \rmd \prop[p](x, \cdot)$ in the support of $\prop[p](x, \cdot)$ yields close to uniformly weighted islands and individuals (i.e., a close to fully adapted algorithm); for instance, following \cite{pitt:shephard:1999}, a possible design is $t_p(x) = \rmd \op[p](x, \cdot) / \rmd \prop[p](x, \cdot)(\prop[p] \operatorname{id}(x))$. Of course, also this algorithm can be analyzed easily using our results (we refer to \cite{douc:moulines:olsson:2008} for such an analysis of the standard auxiliary particle filter).
\end{remark}

\subsection{Time uniform convergence under the strong mixing assumption}

When studying the numerical stability of the {\DB} algorithm we will first work under the following \emph{strong mixing condition}.

\begin{hypMG} \label{ass:strong:mixing:condition}
    \begin{enumerate}[(i)]
        \item There exist constants $0 < \lbmarkov < \ubmarkov  <\infty$ and $\minmeas \in \probmeas{\stfd}$ such that for all $p \in \nset$, $x \in \stsp$, and $\set{A} \in \stfd$,
        $$
        		\lbmarkov \minmeas(\set{A}) \leq \Mk[p](x, \set{A}) \leq \ubmarkov \minmeas(\set{A}) \eqsp.
        $$
        \item It holds that $w_+ \eqdef \sup_{p \in \nset} \supn{\der_p} < \infty$.
        \item It holds that $\lbop \eqdef \inf_{(p, x) \in \nset \times \stsp} \op[p] \1{\stsp}(x) > 0$.
    \end{enumerate}
\end{hypMG}

The assumption \MG{ass:strong:mixing:condition}(i), implying that each $\Mk[p]$ allows the whole state space $\stsp$ as a \emph{$1$-small set}, is rather restrictive and  requires typically the state space $\stsp$ to be a compact set. Still, it plays a vital role in the literature of SMC analysis \cite[see, e.g.,][]{delmoral:guionnet:2001,delmoral:2004,cappe:moulines:ryden:2005,douc:moulines:olsson:2008,olsson:cappe:douc:moulines:2006,douc:moulines:stoffer:2014}. On the other hand, the weaker assumption \MG{ass:strong:mixing:condition}(ii) is satisfied for most applications and \MG{ass:strong:mixing:condition}(iii) does not require the potential functions to be uniformly bounded from below; the latter is a condition that appears frequently in the literature. Under \MG{ass:strong:mixing:condition}, denote
\begin{equation} \label{eq:def:rho}
\rho \eqdef 1 - \frac{\lbmarkov}{\ubmarkov} \eqsp;
\end{equation}
then the previous assumptions allow the following explicit time uniform bound to be derived.

\begin{corollary} \label{cor:bound:of:the:variance}
    Suppose \MG{ass:strong:mixing:condition}. Then the sequence of asymptotic variances of the {\DB} algorithm (see \autoref{coro:systematic:SIL}) satisfies, for all $n \in \nset$ and $h \in \bmf{\stfd}$,
    \begin{equation} \label{def:time:uniform:bound}
        \asvar[n](h) \leq w_+ \dfrac{\osc[2]{h}}{(1 - \rho)^2 (1 - \rho^2)^2 \lbop} \eqsp,
    \end{equation}
    where $\rho$ is defined in \eqref{eq:def:rho}.
\end{corollary}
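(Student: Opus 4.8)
The plan is to start from the explicit, non-recursive expression \eqref{eq:as:var:B2} for $\asvar[n](h)$ provided by \autoref{coro:systematic:SIL}, together with the boundary value $\asvar[0](h) = \kac[0]\{(h - \kac[0]h)^2\}$, and to bound it uniformly in $n$. The boundary term is immediate: $\asvar[0](h) = \kac[0]\{(h-\kac[0]h)^2\} \le \osc[2]{h}/4$, and since $\lbop \le \op[p]\1{\stsp}(x) = \int \der_p(x,x')\,\prop[p](x,\rmd x') \le w_+$ we have $w_+/\lbop \ge 1$, while $(1-\rho)^2(1-\rho^2)^2 \le 1$; hence $\asvar[0](h)$ already respects the claimed bound. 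For $n \ge 1$ the essential point is that the prefactor $(n-\ell)$ appearing in \eqref{eq:as:var:B2} must be absorbed by a geometric decay of order $\rho^{2(n-1-\ell)}$ extracted from the mixing condition, so that the resulting series converges to a constant independent of $n$.

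First I would dispose of the importance weights in each summand. Put $\psi \eqdef h - \targ[n]h$. Using $\der_\ell \le w_+$ from \MG{ass:strong:mixing:condition}(ii) for one of the two factors of $\der_\ell^2$, together with the defining identity $\int \der_\ell(x,x')\,g(x')\,\prop[\ell](x,\rmd x') = \op[\ell]g(x)$ (so that $\int \der_\ell(x,x')\,\prop[\ell](x,\rmd x') = \op[\ell]\1{\stsp}(x)$), one bounds the numerator of the $\ell$-th term by
\begin{equation*}
\targ[\ell]\prop[\ell]\bigl\{\der_\ell^2 (\op[\ell+1]\cdots\op[n-1]\psi)^2\bigr\} \le w_+\,\supn{\op[\ell+1]\cdots\op[n-1]\psi}^2\,\targ[\ell]\op[\ell]\1{\stsp} \eqsp.
\end{equation*}
It then remains to control $\supn{\op[\ell+1]\cdots\op[n-1]\psi}$, and this is where \MG{ass:strong:mixing:condition}(i) enters. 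Factoring through the normalized semigroup $P_{\ell+1,n}h \eqdef \op[\ell+1]\cdots\op[n-1]h / (\op[\ell+1]\cdots\op[n-1]\1{\stsp})$, we have $\op[\ell+1]\cdots\op[n-1]\psi = (\op[\ell+1]\cdots\op[n-1]\1{\stsp})\,(P_{\ell+1,n}h - \targ[n]h)$; since $\targ[n]h$ is a weighted average of the values $P_{\ell+1,n}h(x)$ (by consistency of the Feynman-Kac flow), it lies between their extrema, so $|P_{\ell+1,n}h - \targ[n]h| \le \osc{P_{\ell+1,n}h}$. Invoking the standard Dobrushin-type contraction of normalized Feynman-Kac semigroups under a one-step minorization (see \cite{delmoral:2004}), namely $\osc{P_{\ell+1,n}h} \le \rho^{n-1-\ell}\osc{h}$ with $\rho = 1 - \lbmarkov/\ubmarkov$, yields $\supn{\op[\ell+1]\cdots\op[n-1]\psi} \le \rho^{n-1-\ell}\,\osc{h}\,\supn{\op[\ell+1]\cdots\op[n-1]\1{\stsp}}$.

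Next I would control the remaining ratio of normalizing functions. The two-sided bounds in \MG{ass:strong:mixing:condition}(i), applied through the decomposition $\op[p]\1{\stsp} = \Mk[p]\pot[p+1]$, give (after a single mutation) $\supn{\op[\ell+1]\cdots\op[n-1]\1{\stsp}} \le (\ubmarkov/\lbmarkov)\inf_x \op[\ell+1]\cdots\op[n-1]\1{\stsp}(x)$, as well as $\targ[\ell]\op[\ell]\1{\stsp} \le \ubmarkov\,\minmeas\pot[\ell+1]$ and $\targ[\ell]\op[\ell]g \ge \lbmarkov\,(\inf_x g(x))\,\minmeas\pot[\ell+1]$ for $g = \op[\ell+1]\cdots\op[n-1]\1{\stsp}$; combined with \MG{ass:strong:mixing:condition}(iii), which forces $\minmeas\pot[\ell+1] \ge \lbop/\ubmarkov$, these show that
\begin{equation*}
\frac{\supn{\op[\ell+1]\cdots\op[n-1]\1{\stsp}}^2\,\targ[\ell]\op[\ell]\1{\stsp}}{(\targ[\ell]\op[\ell]\cdots\op[n-1]\1{\stsp})^2}
\end{equation*}
is bounded uniformly in $\ell$ and $n$ by a constant built from the factors $(1-\rho)^{-2}$ and $\lbop^{-1}$. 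Collecting the three estimates, each summand of \eqref{eq:as:var:B2} is at most a constant multiple of $(n-\ell)\rho^{2(n-1-\ell)}$, and the geometric series
\begin{equation*}
\sum_{\ell=0}^{n-1}(n-\ell)\,\rho^{2(n-1-\ell)} = \sum_{k=0}^{n-1}(k+1)\,\rho^{2k} \le \frac{1}{(1-\rho^2)^2}
\end{equation*}
supplies the factor $(1-\rho^2)^{-2}$ and, being bounded uniformly in $n$, delivers the time-uniform conclusion.

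The main obstacle is twofold. The conceptual core is the geometric contraction $\osc{P_{\ell+1,n}h} \le \rho^{n-1-\ell}\osc{h}$, i.e.\ the stability of the normalized Feynman-Kac semigroup under the one-step minorization \MG{ass:strong:mixing:condition}(i); this is classical but must be set up carefully so that the rate is exactly $\rho$. The more delicate part in practice is the multiplicative bookkeeping of the normalizing functions: one must track the ratios of the various $\op$-products so that the powers of $(1-\rho)$ and $(1-\rho^2)$ and the single factor $\lbop^{-1}$ emerge with precisely the exponents claimed, rather than with the cruder (larger) powers that a naive chain of inequalities would produce.
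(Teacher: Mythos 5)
Your overall architecture is the same as the paper's: start from the explicit formula \eqref{eq:as:var:B2}, peel off one factor of $\der_\ell$ via $\supn{\der_\ell} \leq w_+$ so that $\targ[\ell] \prop[\ell] \der_\ell = \targ[\ell]\op[\ell]\1{\stsp}$ appears, extract the geometric factor $\rho^{2(n - \ell - 1)}$ from the Dobrushin contraction of the normalized Feynman-Kac semigroup, and sum $\sum_{k \geq 0} (k + 1)\rho^{2k} = (1 - \rho^2)^{-2}$. Your weight-stripping step, your contraction argument (including the observation that $\targ[n] h$ is a weighted average of the values of $P_{\ell + 1, n} h$), the series computation, and your explicit treatment of the case $n = 0$ (which the paper leaves implicit) are all correct. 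The genuine flaw is in the normalization bookkeeping of your third step --- precisely the loss you warn against in your closing paragraph. Writing $g \eqdef \op[\ell + 1] \cdots \op[n - 1] \1{\stsp}$, your stated ingredients are $\supn{g} \leq (\ubmarkov / \lbmarkov) \inf g$, $\targ[\ell]\op[\ell]\1{\stsp} \leq \ubmarkov \, \minmeas \pot[\ell + 1]$, $\targ[\ell]\op[\ell] g \geq \lbmarkov (\inf g) \, \minmeas \pot[\ell + 1]$, and $\minmeas \pot[\ell + 1] \geq \lbop / \ubmarkov$. Chaining them yields
\begin{equation*}
    \frac{\supn{g}^2 \, \targ[\ell]\op[\ell]\1{\stsp}}{(\targ[\ell]\op[\ell] g)^2}
    \leq \frac{(\ubmarkov / \lbmarkov)^2 (\inf g)^2 \, \ubmarkov \, \minmeas \pot[\ell + 1]}{\lbmarkov^2 (\inf g)^2 (\minmeas \pot[\ell + 1])^2}
    \leq \frac{\ubmarkov^4}{\lbmarkov^4 \, \lbop} = \frac{1}{(1 - \rho)^4 \, \lbop} \eqsp,
\end{equation*}
i.e.\ the exponent $4$ rather than the claimed $2$ on $(1 - \rho)^{-1}$, and no rearrangement of these four inequalities does better (routing the lower bound on $\targ[\ell]\op[\ell] g$ through the minorization and $\minmeas \pot[\ell + 1]$ costs a factor $\ubmarkov / \lbmarkov$ twice). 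As written, your argument therefore proves only the weaker bound $w_+ \osc[2]{h} / \{ (1 - \rho)^4 (1 - \rho^2)^2 \lbop \}$, not the stated one.

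The fix --- which is the paper's actual move --- is to not lower-bound $\targ[\ell]\op[\ell] g$ through the minorization at all, but to use the exact factorization $\targ[\ell]\op[\ell] \cdots \op[n - 1] \1{\stsp} = \targ[\ell]\op[\ell]\1{\stsp} \times \targ[\ell + 1] g$ (equivalently, the trivial pointwise bound $\op[\ell] g \geq (\inf g) \, \op[\ell]\1{\stsp}$), so that one full factor of $\targ[\ell]\op[\ell]\1{\stsp}$ cancels, and then $\targ[\ell + 1] g \geq \inf g \geq (1 - \rho) \supn{g}$ gives
\begin{equation*}
    \frac{\supn{g}^2 \, \targ[\ell]\op[\ell]\1{\stsp}}{(\targ[\ell]\op[\ell] \cdots \op[n - 1] \1{\stsp})^2}
    = \frac{1}{\targ[\ell]\op[\ell]\1{\stsp}} \left( \frac{\supn{g}}{\targ[\ell + 1] g} \right)^2
    \leq \frac{1}{(1 - \rho)^2 \, \lbop} \eqsp,
\end{equation*}
with a single use of the mixing ratio and a single use of \MG{ass:strong:mixing:condition}(iii). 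The paper packages exactly this, together with the oscillation contraction you derive, into the stability estimate \eqref{proof:stability:1} (proved in the supplement), which bounds $\supn{\op[\ell + 1] \cdots \op[n - 1](h - \targ[n] h) / \targ[\ell + 1] g}$ by $\rho^{n - \ell - 1} \osc{h} / (1 - \rho)$; your Dobrushin argument correctly reproves the contraction half of that estimate, so once you replace the lossy denominator treatment by the factorization identity, your proof delivers the claimed constant.
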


The proof is found in \autoref{proof_bound:var}. In addition, by comparing the formulas of the asymptotic variances of the {\DBAIS} and {\DB} algorithms provided by \autoref{coro:ESS:bootstrap} and \autoref{coro:systematic:SIL}, respectively, we conclude that at each time step, the asymptotic variance of the latter algorithm is always bounded by the that of the former (as the indicator variables $\{ \varepsilon_p \}_{p \in \nsetpos}$, determining the asymptotic island selection schedule of the {\DBAIS} algorithm, are either zero or one for all $p$). The following corollary is hence immediate.
\begin{corollary}
Suppose \MG{ass:strong:mixing:condition}. Then also the asymptotic variances of the {\DBAIS} algorithm (see \autoref{coro:ESS:bootstrap}) satisfy the bound \eqref{def:time:uniform:bound}.
\end{corollary}

\subsection{Time uniform convergence under a local Doeblin condition}
The explicitness and simplicity of the variance bound in \autoref{cor:bound:of:the:variance} are obtained at the cost of restrictive model assumptions that are rarely satisfied in real-world applications. Thus, in this section we will discuss how the assumptions of \MG{ass:strong:mixing:condition} can be lightened considerably and turned into easily verifiable conditions, satisfied for many models of interest, by considering assumptions under which the asymptotic variance is \emph{stochastically bounded} (tight) rather than bounded by a deterministic constant. Since the asymptotic variance \eqref{eq:as:var:B2} of the {\DB} algorithm differs only from that of the SISR algorithm (see \eqref{eq:as:var:bootstrap}) by the factors $n - \ell$, the results obtained in this section will rely heavily on similar results obtained in \cite{douc:moulines:olsson:2014} for the standard bootstrap particle filter. For this purpose, assume that each potential function depends on time through some random parameter only, i.e., for all $p \in \nsetpos$, $\pot[p] = \pot \langle \randpar{p} \rangle$, where $\{ \randpar{p} \}_{p \in \nset}$ is some stochastic process taking values in some state space $(\Zset, \Zfd)$ and $\pot \langle z \rangle \in \bmf{\stfd}$ for all $z \in \Zset$. Moreover, we assume that the normalized transition kernels of the model are time homogeneous, i.e., $\Mk[p] = \Mk$ for all $p \in \nset$, and that Mutation is based on the underlying dynamics of the model, i.e., $\prop[p] = \prop = \Mk$, and, consequently, $\der_p(x, x') = \der_p \langle \randpar{p + 1} \rangle (x, x') = \pot \langle \randpar{p + 1} \rangle(x')$ for all $p \in \nset$ and $(x, x') \in \stsp^2$. Thus, in this case the model generates a parameter dependent Feynman-Kac flow $\{Ê\kac[p] \langle \randpar{0:p} \rangle \}_{p \in \nset}$. (For instance, in the case of a hidden Markov model, the sequence $\{ \randpar{p} \}_{p \in \nset}$ plays the role of noisy observations of some Markov chain (the \emph{state process}) $\{ÊX_p \}_{p \in \nset}$Ê with transition kernel $\Mk$ on $(\stsp, \stfd)$. Conditionally on the state process, the observations are assumed to be independent and such that the conditional density $\Zset \ni z \mapsto \pot \langle z \rangle(x)$ of each $\randpar{p}$ depends on the corresponding state $X_p = x \in \stsp$ only. In this important framework, $\kac[p] \langle \randpar{0:p} \rangle$ is the so-called \emph{filter distribution} at time $p$, i.e., the conditional distribution of the latent state $X_p$ given the observations $\randpar{0:p}$.) In this case, the asymptotic variance $\asvar[n](h)$ of the {\DB} algorithm is a function of the random vector $\randpar{0:n}$, and we write $\asvar[n] \langle \randpar{0:n} \rangle(h)$ to emphasize this fact. We will replace the condition  \MG{ass:strong:mixing:condition}(i) by a considerably weaker condition of the following type.

\begin{definition} \label{def:local:Doeblin}
    A set $\set{C} \in \stfd$ is \emph{local Doeblin} with respect to $\Mk$ if there is $\minmeas_{\set{C}} \in \probmeas{\stfd}$ with $\minmeas_{\set{C}}(\set{C}) = 1$ and constants $0 < \sigma_{\set{C}}^- < \sigma_{\set{C}}^+$ such that for all $x \in \set{C}$Ê and $\set{A} \in \stfd$,
     $$
            	\sigma_{\set{C}}^- \minmeas_{\set{C}}(\set{A}) \leq \Mk(x, \set{A} \cap \set{C}) \leq  \sigma_{\set{C}}^+ \minmeas_{\set{C}}(\set{A}) \eqsp.
    $$
\end{definition}

Now, impose the following assumption.

\begin{hypLD} \label{ass:local:Doeblin}
    The process $\{ \randpar{p} \}_{p \in \nset}$ is strictly stationary and ergodic. Moreover, there exists a set $\set{K} \in \Zfd$ such that the following holds.
    \begin{enumerate}[(i)]
        \item $\prob\left( \randpar{0} \in \set{K} \right) > 2/3$.
        \item For all $\varepsilon > 0$ there exists a local Doeblin set $\set{C}$ such that for all $z \in \Zset$,
        $$
            \sup_{x \in \set{C}^c} \pot \langle z \rangle(x) \leq \varepsilon \supn{\pot \langle z \rangle} < \infty \eqsp.
        $$
        \item There exists a set $\set{D} \in \stfd$ such that $\inf_{x \in \set{D}} \Mk(x, \set{D}) > 0$ and
        $$
        	\E \left[Ê\ln^- \inf_{x \in \set{D}}  \pot \langle \randpar{0} \rangle(x) \right] < \infty \eqsp.
        $$
    \end{enumerate}
\end{hypLD}
In \LD{ass:local:Doeblin}(iii), $\ln^-$ denotes the negative part of the natural logarithm. The condition \LD{ass:local:Doeblin} can be checked easily for a large variety of models; see \cite[Section~4]{douc:moulines:olsson:2014} for examples.
\begin{remark}
    The condition \LD{ass:local:Doeblin} can be weakened further by requiring the local Doeblin condition to hold only for some iterate $\op \langle z_1 \rangle \cdots \op \langle z_r \rangle$, $z_{1:r} \in \Zset^r$, with a minorizing measure $\minmeas_{\set{C}}$ and constants $\sigma_{\set{C}}^-$,Ê $\sigma_{\set{C}}^+$ possibly depending on the block $z_{1:r}$; we refer to \cite{douc:moulines:olsson:2014} for details. In this paper we have however chosen to state the most basic version of the condition (corresponding to $r = 1$) for simplicity.
\end{remark}

Under~\LD{ass:local:Doeblin}, define $\probmeas{\stfd, \set{D}} \subset \probmeas{\stfd}$ as the set of all $\Xinit \in \probmeas{\stfd}$ for which there exists $\set{D}' \in \stfd$ such that (i) $\inf_{x \in \set{D}'} \Mk(x, \set{D}) > 0$, (ii) $\E[ \ln^- \inf_{x \in \set{D}'}  \pot \langle \randpar{0} \rangle(x)] < \infty$, and (iii) $\Xinit(\set{D}') > 0$. Then the following holds true.
\begin{corollary} \label{cor:non-compact:case}
    Assume {\LD{ass:local:Doeblin}} and suppose in addition that $\kac[0] \in \probmeas{\stfd, \set{D}}$. Then for all $h \in \bmf{\stfd}$, the sequence $\{ \asvar[n] \langle \randpar{0:n} \rangle(h) \}_{n \in \nset}$ of asymptotic variances of the output of the {\DB} algorithm is tight, i.e., it satisfies
    $$
        \lim_{\lambda \rightarrow \infty} \sup_{n \in \nset} \prob \left( \asvar[n] \langle \randpar{0:n} \rangle(h) \geq \lambda \right) = 0 \eqsp.
    $$
\end{corollary}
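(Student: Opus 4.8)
The plan is to start from the closed-form expression for the asymptotic variance of the {\DB} algorithm established in \autoref{coro:systematic:SIL} and to exploit the observation, already highlighted in the text preceding \eqref{eq:as:var:bootstrap}, that it coincides with the asymptotic variance of the SISR algorithm up to the extra factors $n - \ell$. Writing, for $0 \leq \ell < n$,
$$
T_{\ell, n}(h) \eqdef \frac{\targ[\ell] \prop[\ell] \{ \der_{\ell}^2 \op[\ell + 1] \cdots \op[n - 1](h - \targ[n] h)^2 \}}{(\targ[\ell] \op[\ell] \cdots \op[n - 1] \1{\stsp})^2} \eqsp,
$$
we have $\asvar[n] \langle \randpar{0:n} \rangle(h) = \sum_{\ell = 0}^{n - 1} (n - \ell) T_{\ell, n}(h)$ by \eqref{eq:as:var:B2}, whereas the corresponding SISR variance \eqref{eq:as:var:bootstrap} is $\sum_{\ell = 0}^{n - 1} T_{\ell, n}(h)$, whose tightness under \LD{ass:local:Doeblin} is exactly the content of the results of \cite{douc:moulines:olsson:2014}. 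Thus the whole additional difficulty lies in absorbing the linear weights $n - \ell$.

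The first step is to quantify the forgetting of the centered, propagated test function. Specializing to the present homogeneous setting ($\Mk[p] = \prop[p] = \Mk$ and $\der_p = \pot \langle \randpar{p + 1} \rangle$), I would import the local Doeblin machinery of \cite{douc:moulines:olsson:2014}: using \LD{ass:local:Doeblin}, each visit of the underlying dynamics to a local Doeblin set $\set{C}$ at a time where the driving parameter falls in the favorable set $\set{K}$ produces a genuine contraction of the oscillation semi-norm of the \emph{normalized} propagated function $\op[\ell + 1] \cdots \op[n - 1](h - \targ[n] h) / \op[\ell + 1] \cdots \op[n - 1] \1{\stsp}$. Concretely, this yields a bound of the form
$$
T_{\ell, n}(h) \leq \osc[2]{h} \, \Lambda_{\ell, n} \eqsp,
$$
where $\Lambda_{\ell, n}$ is a \emph{random} coefficient, built as a product over the lag window $\intvect{\ell}{n}$ of per-block quantities that contract strictly whenever a favorable Doeblin event occurs and whose residual fluctuations (together with the non-degeneracy of the normalizing constants $\targ[\ell] \op[\ell] \cdots \op[n - 1] \1{\stsp}$, guaranteed via $\kac[0] \in \probmeas{\stfd, \set{D}}$) are controlled by the log-integrability condition \LD{ass:local:Doeblin}(iii). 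This is the step that replaces the uniform minorization \MG{ass:strong:mixing:condition}(i) used in \autoref{cor:bound:of:the:variance} by its much weaker local counterpart.

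The second step turns these per-term bounds into tightness of the weighted sum. By strict stationarity and ergodicity of $\{ \randpar{p} \}_{p \in \nset}$ together with \LD{ass:local:Doeblin}(i), the favorable Doeblin blocks occur with a strictly positive asymptotic frequency, so that $\Lambda_{\ell, n}$ decays geometrically in $n - \ell$ along almost every trajectory, albeit with a random rate. Because the extra factor $n - \ell$ is only polynomial, it is absorbed by this geometric decay. Schematically, I would reorganize the sum through the elementary identity $n - \ell = \sum_{m = \ell}^{n - 1} 1$, which gives $\asvar[n] \langle \randpar{0:n} \rangle(h) = \sum_{m = 0}^{n - 1} \sum_{\ell = 0}^{m} T_{\ell, n}(h)$, and then bound each inner partial sum by a geometrically small tail, $\sum_{\ell = 0}^{m} \Lambda_{\ell, n} \lesssim \Lambda_{m, n} \times (\text{summable series})$, so that the outer sum over $m$ is again controlled by a convergent geometric series; transcribing the maximal and tightness estimates of \cite{douc:moulines:olsson:2014} then makes the random geometric rate uniform over $n$.

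The main obstacle is precisely this last point. The contraction rate governing $\Lambda_{\ell, n}$ is random and, since the state space is non-compact and the potentials $\pot \langle z \rangle$ possibly unbounded, it is \emph{not} bounded away from one by any deterministic constant, so the clean deterministic argument behind \autoref{cor:bound:of:the:variance} is unavailable. One must instead control, uniformly over \emph{all} suffixes $\intvect{\ell}{n}$ and uniformly in $n$, the number of favorable Doeblin blocks, and show that amplification by the polynomial weight $n - \ell$ does not spoil this uniformity. Making the ergodic-theorem control quantitative enough for this purpose --- rather than merely almost sure --- is where the bulk of the work resides, and is exactly the place where the tightness machinery of \cite{douc:moulines:olsson:2014} has to be re-run with the additional polynomial factor carried through. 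Once that uniform control is secured, tightness of $\{ \asvar[n] \langle \randpar{0:n} \rangle(h) \}_{n \in \nset}$ follows.
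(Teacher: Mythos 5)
Your starting point is right---the variance \eqref{eq:as:var:B2} from \autoref{coro:systematic:SIL} differs from the SISR variance \eqref{eq:as:var:bootstrap} only by the factors $n - \ell$, and the task is to absorb them into the bounds of \cite{douc:moulines:olsson:2014}---but your proposal stalls precisely at the crux and mischaracterizes the structure of those bounds. You assert that the contraction rate is random and ``not bounded away from one by any deterministic constant,'' so that a quantitative ergodic-theorem control, uniform over all suffixes $\intvect{\ell}{n}$ and in $n$, must be developed, and you defer that development (``the bulk of the work resides'' there). In fact, \cite[Equation~39]{douc:moulines:olsson:2014} already delivers a bound of the form $\Delta_\ell \langle h \rangle \leq d \beta^\ell$, $\prob$-a.s., with $\beta < 1$ a \emph{deterministic} constant and only the prefactor $d$ random ($\prob$-a.s.\ finite); this is exactly where the quantitative condition $\prob(\randpar{0} \in \set{K}) > 2/3$ in \LD{ass:local:Doeblin}(i) earns its keep, and it is what makes the polynomial weight harmless, since $\sum_{\ell} \ell \beta^\ell < \infty$ deterministically. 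No re-running of the tightness machinery with the polynomial factor ``carried through'' is needed; your inner-sum reorganization via $n - \ell = \sum_{m = \ell}^{n-1} 1$ and the estimate $\sum_{\ell = 0}^{m} \Lambda_{\ell, n} \lesssim \Lambda_{m, n} \times (\text{summable series})$ would in any case presuppose such a deterministic ratio between consecutive terms, and once one has it the reorganization is superfluous.

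The second idea you are missing---and the one the paper actually uses to get uniformity in $n$---is a stationarity re-indexing. The paper first embeds $\{ \randpar{p} \}_{p \in \nset}$ into a doubly infinite stationary process $\{ \randpar{p} \}_{p \in \zset}$ via Kolmogorov's extension theorem, then bounds $\asvar[n] \langle \randpar{0:n} \rangle(h)$ by $c \sum_{\ell = 0}^{n} (n - \ell) \Delta_{n - \ell} \langle h \rangle (\randpar{-\infty:\ell - 1}, \randpar{\ell:n})$ using \cite[Equations~34--35]{douc:moulines:olsson:2014}, and finally invokes stationarity to conclude that this sum has the \emph{same distribution} as $\sum_{\ell = 0}^{n} \ell \, \Delta_\ell \langle h \rangle (\randpar{-\infty:-\ell - 1}, \randpar{-\ell:0})$, i.e., a sum anchored at time $0$ rather than at time $n$. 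Only after this shift does the bound $c d \sum_{\ell = 0}^{\infty} \ell \beta^\ell < \infty$, $\prob$-a.s., apply with the \emph{same} a.s.\ finite random variables $c, d$ for every $n$; stochastic domination of the whole sequence by a single a.s.\ finite random variable then gives tightness immediately. Without this re-indexing the random prefactor would a priori depend on the anchor $n$, and you would indeed face exactly the uniformity problem you describe---so your diagnosis of where the difficulty lies is a symptom of the missing step. As written, the proposal does not close.
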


The previous result is obtained by inspecting the proof of \cite[Theorem~11]{douc:moulines:olsson:2014}; see \autoref{sec:proof:non-compact} for some details.


\section{Proofs}
\label{sec:proofs}
As mentioned in the introduction, our consistency and asymptotic normality proofs rely on limit theorems for triangular arrays developed in \cite{douc:moulines:2008}. More specifically, \cite{douc:moulines:2008} developed \autoref{thm:DM:A-1} and \autoref{thm:DM:A-3}, which are re-stated in \autoref{appenbdix:technical:lemmas} for completeness, for the purpose of proving consistency and asymptotic normality for weighted samples of \emph{particles}, whereas we use the same results for establishing these properties for weighted samples of particle \emph{islands}. However, whereas the elements of the arrays considered in \cite{douc:moulines:2008} correspond mainly to single particles, the arrays defined by us will be considerably more complex with each element being generally itself a weighted average of particles associated with an island. In this section we will use repeatedly the same notation $\{ \arr{i} \}_{i = 1}^{\Nis}$ to denote triangular arrays (in the sense of \autoref{thm:DM:A-1} and \autoref{thm:DM:A-3}), even though the roles of these arrays change throughout the proofs.


\subsection{Proof of \autoref{thm:consistency:island:selection}}
\label{sec:proof:consistency:island:selection}

We apply \autoref{thm:DM:A-1}. For this purpose, define the triangular array and filtration
\begin{equation} \label{eq:def:filt}
	\begin{split}
		\arr{i} &\eqdef \sum_{j = 1}^{\Nin} \frac{\inwgt{{\isind{i}}}{j}}{\Nis \sum_{j' = 1}^{\Nin} \inwgt{{\isind{i}}}{j'}} h(\epart{{\isind{i}}}{j}) \quad (i \in \intvect{1}{\Nis[\N]}, \N \in \nsetpos) \eqsp, \\
		\filt &\eqdef \sigma\left(\{Ê(\iswgt{i}, \{Ê(\epart{i}{j}, \inwgt{i}{j}) \}_{j = 1}^{\Nin}) \}_{i = 1}^{\Nis} \right) \quad (\N \in \nsetpos) \eqsp,
	\end{split}
\end{equation}
respectively. Now, since the island indices $\{ \isind{i} \}_{i = 1}^{\Nis}$ are, conditionally on $\filt$, i.i.d. with common distribution $\disc(\{ \iswgt{i'} \}_{i' = 1}^{\Nis})$ it holds, as the ancestor sample is assumed to be consistent,
\begin{align} \label{eq:unbiasedness:island:selection}
	\sum_{i = 1}^{\Nis} \cexp{\arr{i}}{\filt} &= \Nis \cexp{\arr{1}}{\filt} \\
	&= \sum_{i = 1}^{\Nis} \frac{\iswgt{i}}{\sum_{i' = 1}^{\Nis} \iswgt{i'}} \sum_{j = 1}^{\Nin} \frac{\inwgt{i}{j}}{\sum_{j' = 1}^{\Nin} \inwgt{i}{j'}} h(\epart{i}{j}) \plim \targ h \eqsp. \nonumber
\end{align}
Thus, since $|\arr{i}| \leq \supn{h} / \Nis < \infty$ for all $i \in \intvect{1}{\Nis}$, it is enough to check the conditions \A{ass:DM:cons:tightness} and \A{ass:DM:cons:Lindeberg} in \autoref{thm:DM:A-1}. The tightness condition \A{ass:DM:cons:tightness} is straightforwardly satisfied as sequences that converge in probability are tight.
Moreover, to check \A{ass:DM:cons:Lindeberg} we may apply \autoref{lem:tightness} with $\mbd = \supn{h} / \Nis$ and $\bd = \bdconst = 0$. Thus, the limits, in probability as $\N \rightarrow \infty$, of the series $\sum_{i = 1}^{\Nis} \arr{i}$ and $\sum_{i = 1}^{\Nis} \cexp{\arr{i}}{\filt}$ coincide, which completes the proof.

\subsection{Proof of \autoref{thm:deviation:inequalities:island:selection}}
\label{sec:proof:deviation:inequalities:island:selection}

Trivially, since $\{ \isind{i}Ê\}_{i = 1}^{\Nis} \subset \intvect{1}{\Nis}$ it holds for all $\varepsilon > 0$,
\begin{multline*}
	\prob \left( \max_{i \in \intvect{1}{\Nis}} \left| \frac{1}{\Nin} \sum_{j = 1}^{\Nin} \inwgt{\isind{i}}{j} h(\epart{\isind{i}}{j}) - \Hlw \times \targ h \right| \geq \varepsilon \right) \\
	\leq \prob \left( \max_{i \in \intvect{1}{\Nis}} \left| \frac{1}{\Nin} \sum_{j = 1}^{\Nin} \inwgt{i}{j} h(\epart{i}{j}) - \Hlw \times \targ h \right| \geq \varepsilon \right) \eqsp,
\end{multline*}
where the right hand side has an exponential bound by assumption. This completes the proof.

\subsection{Proof of \autoref{thm:normality:island:selection}}
\label{sec:proof:normality:island:selection}

In order to check \AN{ass:AN:CLT} using \autoref{thm:DM:A-3}, define the array
\begin{equation} \label{eq:island:selection:array:AN}
	\arr{i} \eqdef \sqrt{\frac{\Nin}{\Nis}} \sum_{j = 1}^{\Nin} \frac{\inwgt{{\isind{i}}}{j}}{\sum_{j' = 1}^{\Nin} \inwgt{{\isind{i}}}{j'}} \{ h(\epart{{\isind{i}}}{j}) -  \targ h \}\quad (i \in \intvect{1}{\Nis}, \N \in \nsetpos) \eqsp,
\end{equation}
and let $\{\filt \}_{\N \in \nsetpos}$ be the filtration \eqref{eq:def:filt}. We first note that $\cexp[txt]{\arr[2]{i}}{\filt} \leq 4 \Nin \supn{h}^2 / \Nis < \infty$ for all $i \in \intvect{1}{\Nis}$. Along the lines of \eqref{eq:unbiasedness:island:selection} (note however that the definition \eqref{eq:def:filt} of the triangular array in \eqref{eq:unbiasedness:island:selection} differs slightly from that of the array \eqref{eq:island:selection:array:AN} considered here),
\begin{multline} \label{eq:decomposition:CLT:island:selection}
	\sum_{i = 1}^{\Nis} \arr{i} = \sum_{i = 1}^{\Nis} \left\{ \arr{i} - \cexp{\arr{i}}{\filt} \right\} \\
	+ \sqrt{N} \sum_{i = 1}^{\Nis} \frac{\iswgt{i}}{\sum_{i' = 1}^{\Nis} \iswgt{i'}} \sum_{j = 1}^{\Nin} \frac{\inwgt{i}{j}}{\sum_{j' = 1}^{\Nin} \inwgt{i}{j'}} \{h(\epart{i}{j}) - \targ h \} \eqsp.
\end{multline}
By assumption, the second term on the right hand side of \eqref{eq:decomposition:CLT:island:selection} converges in distribution to a Gaussian random variable with zero mean and variance $\asvar(h)$. To treat the first term using \autoref{thm:DM:A-3} we first consider
\begin{multline} \label{eq:island:selection:second:moment}
    \sum_{i = 1}^{\Nis} \cexp{\arr[2]{i}}{\filt}
    = \Nin \sum_{i = 1}^{\Nis} \frac{\iswgt{i}}{\sum_{i' = 1}^{\Nis} \iswgt{i'}} \left( \sum_{j = 1}^{\Nin} \frac{\inwgt{i}{j}}{\sum_{j' = 1}^{\Nin} \inwgt{i}{j'}} \{ h(\epart{i}{j}) - \targ h \} \right)^2 \\
    \plim \asvarANone(h) \eqsp,
\end{multline}
where the limit holds as the ancestor archipelago is assumed to satisfy \AN{ass:AN:squared:islands}. In addition,
\begin{multline*}
    \sum_{i = 1}^{\Nis} \E^2 \left[Ê\arr{i} \mid \filt \right]
    = \frac{1}{\Nis} \left(\sqrt{N} \sum_{i = 1}^{\Nis} \frac{\iswgt{i}}{\sum_{i' = 1}^{\Nis} \iswgt{i'}} \sum_{j = 1}^{\Nin} \frac{\inwgt{i}{j}}{\sum_{j' = 1}^{\Nin} \inwgt{i}{j'}} \{h(\epart{i}{j}) - \targ h \} \right)^2 \\
    \plim 0 \eqsp,
\end{multline*}
as the right hand side tends, as the ancestor archipelago satisfies \AN{ass:AN:CLT}, in distribution to a scaled $\chi^2$-distributed random variable as $\N \rightarrow \infty$. Combining the two previous displays shows that the condition \B{ass:DM:norm:asymptotic:variance} in \autoref{thm:DM:A-3} holds with limit $\varsigma^2(h) = \asvarANone(h)$. To check the condition \B{ass:DM:norm:Lindeberg} in the same lemma, we note that $\max_{i \in \intvect{1}{\Nis}} | \arr{i} | \leq \mbd + \bdconst \bd^2$, with $\bdconst = \bd = 0$ and
\begin{equation} \label{eq:mbd_island_selection}
    \mbd = \max_{i \in \intvect{1}{\Nis}} \sqrt{\frac{\Nin}{\Nis}}  \left| \sum_{j = 1}^{\Nin} \frac{\inwgt{i}{j}}{\sum_{j' = 1}^{\Nin} \inwgt{i}{j'}} \{ h(\epart{i}{j}) - \targ h \} \right| \quad (\N \in \nsetpos) \eqsp.
\end{equation}
Note that the sequence $\{Ê\mbd \}_{\N \in \nsetpos}$ is $\filt$-adapted and vanishes in probability as $\N \rightarrow \infty$ since the ancestor {\WA} is assumed to satisfy~\D{def:deviation:inequality:individual} and thus \autoref{eq:deviation:normalized:estimator}.  We may then apply \autoref{lem:tightness} to check the condition~\B{ass:DM:norm:Lindeberg} in \autoref{thm:DM:A-3}, implying that for all $u \in \rset$,
$$
    \E \left[\exp \left( \operatorname{i} u \sum_{i = 1}^{\Nis} \left\{ \arr{i} - \cexp{\arr{i}}{\filt} \right\} \right) \mid \filt \right] \plim \exp(- u^2 \asvarANone(h) / 2) \eqsp.
$$
Now, using this limit, the decomposition \eqref{eq:decomposition:CLT:island:selection}, and the hypothesis that the ancestor archipelago satisfies \AN{ass:AN:CLT}, we conclude, via \autoref{lemma:asind}, that for all $u \in \rset$,
$$
    \E \left[\exp \left( \operatorname{i} u \sum_{i = 1}^{\Nis} \arr{i} \right) \right] \plim \exp(- u^2 \{ \asvar(h) + \asvarANone(h) \}/ 2) \eqsp,
$$
which concludes the proof of \AN{ass:AN:CLT}.

We establish Assumption~\AN{ass:AN:squared:islands} by applying \autoref{thm:DM:A-1}, this time to the array $\arr[']{i} \eqdef \arr[2]{i}$, $i \in \intvect{1}{\Nis}, \N \in \nsetpos$, where $\{ \arr{i} \}_{i = 1}^{\Nis}$ is defined by \autoref{eq:island:selection:array:AN}, and the filtration $\{Ê\filt \}_{\N \in \nsetpos}$ is defined by \autoref{eq:def:filt}. To prove that $\sum_{i = 1}^{\Nis} \arr[']{i}$ converges in probability, we first note that the sum $\sum_{i = 1}^{\Nis} \cexp[txt]{\arr[']{i}}{\filt}$ converges in probability to $\asvarANone(h)$ by \autoref{eq:island:selection:second:moment}. Moreover, the two conditions \A{ass:DM:cons:tightness} and \A{ass:DM:cons:Lindeberg} in \autoref{thm:DM:A-1} are straightforwardly satisfied for the array under consideration, the latter condition by \autoref{lem:tightness} as $\max_{i \in \intvect{1}{\Nis}} | \arr[']{i} | \leq \mbd[2]$, where $\mbd$ is defined in \autoref{eq:mbd_island_selection} and $\mbd[2]$ vanishes in probability by \autoref{lem:deviation:properly:normalized}. Consequently, \AN{ass:AN:squared:islands} is satisfied with $\asvartdANone = \asvarANone$.

In the case of multinomial island selection, \AN{ass:AN:squared:iswgts} coincides with the consistency property, which is implied by \AN{ass:AN:CLT}; thus, $\condmeastd{1} = \targ$.

We preface the proof of \AN{ass:AN:mut-1} by the following lemma.
 \begin{lemma} \label{lem:deviation:inequality:individual}
    Assume \D{def:deviation:inequality:individual} and \R{sub-geometric:growth}. Then
    \begin{equation*}
        \lim_{\lambda \rightarrow \infty} \sup_{\N \in \nsetpos} \prob \left( \max_{(i, j) \in \intvect{1}{\Nis} \times \intvect{1}{\Nin}} \Nin \frac{\inwgt{i}{j}}{\sum_{j' = 1}^{\Nin}\inwgt{i}{j'}} \geq \lambda\right) = 0 \eqsp.
    \end{equation*}
\end{lemma}
\begin{proof}
Using the boundedness of the particle weights,
$$
	\max_{(i, j) \in \intvect{1}{\Nis} \times \intvect{1}{\Nin}} \Nin \frac{\inwgt{i}{j}}{\sum_{j'=1}^{\Nin}\inwgt{i}{j'}} \leq \inwgtbd \max_{i \in \intvect{1}{\Nis}} \left( \frac{1}{\Nin}\sum_{j = 1}^{\Nin}\inwgt{i}{j} \right)^{-1} \eqsp,
$$
where the quantity on right hand side is tight as it converges in probability to $\inwgtbd/\Hlw$ by  \D{def:deviation:inequality:individual}, \R{sub-geometric:growth} and \autoref{lem:inverse:proba}.
\end{proof}
Now, to check \AN{ass:AN:mut-1} we apply \autoref{thm:DM:A-1} to the array
$$
	\arr{i} \eqdef \frac{\Nin}{\Nis} \sum_{j = 1}^{\Nin} \left( \dfrac{\inwgt{\isind{i}}{j}}{\sum_{j' = 1}^{\Nin}\inwgt{\isind{i}}{j'}} \right)^2 h(\epart{\isind{i}}{j}) \quad (i \in \intvect{1}{\Nis}, \N \in \nsetpos) \eqsp.
$$
associated with the same filtration as previously. Since the ancestor archipelago satisfies \AN{ass:AN:mut-2},
\begin{equation} \label{eq:AN4:cond:limit}
    \sum_{i = 1}^{\Nis} \cexp{\arr{i}}{\filt} = \Nin \sum_{i = 1}^{\Nis} \dfrac{\iswgt{i}}{\sum_{i' = 1}^{\Nis} \iswgt{i'}} \sum_{j = 1}^{\Nin} \left( \dfrac{\inwgt{i}{j}}{\sum_{j' = 1}^{\Nin}\inwgt{i}{j'}}
    \right)^2 h(\epart{i}{j}) \plim \condmeas{3} h \eqsp,
\end{equation}
and we show that $\sum_{i = 1}^{\Nis} \arr{i}$ tends to the same limit by using \autoref{thm:DM:A-1}.
First, $\arr{i} \leq {\Nin}^2 \supn{h} / \Nis < \infty$ for all $i \in \intvect{1}{\Nis}$ and $\N \in \nsetpos$; moreover, be reusing \eqref{eq:AN4:cond:limit} for $|h|$ we check \A{ass:DM:cons:tightness}. To check the Lindeberg condition \A{ass:DM:cons:Lindeberg}, we bound, using $\{\isind{i} \}_{i = 1}^{\Nis} \subset \intvect{1}{\Nis}$,
\begin{multline*}
    \max_{i \in \intvect{1}{\Nis}} |\arr{i}| \leq \max_{(i, j) \in \intvect{1}{\Nis} \times \intvect{1}{\Nin} } \Nin \dfrac{\inwgt{i}{j}}{\sum_{j' = 1}^{\Nin} \inwgt{i}{j'}} \\
     \times \max_{i \in \intvect{1}{\Nis}} \frac{1}{\Nis} \left| \sum_{j = 1}^{\Nin} \frac{\inwgt{i}{j}}{\sum_{j' = 1}^{\Nin} \inwgt{i}{j'}} h(\epart{i}{j}) \right| \eqsp,
\end{multline*}
where the first factor on the right hand side is tight by \autoref{lem:deviation:inequality:individual} and the second term is bounded by $\supn{h} / \Nis$, which tends to zero when $\N$ tends to infinity. Thus, \autoref{lem:inverse:proba} can be applied for checking \A{ass:DM:cons:Lindeberg}, which establishes that $\condmeastd{2} = \condmeas{3}$.

Finally, in the case of selection on the island level, \AN{ass:AN:mut-2} coincides with \AN{ass:AN:mut-1} and \AN{def:tightness:island} is trivially satisfied.


\subsection{Proof of \autoref{thm:consistency:individual:selection}}
\label{sec:proof:consistency:individual:selection}

We first note that \C{def:consistency:max} is trivially satisfied. In order to check \C{def:consistency} we apply \autoref{thm:DM:A-1} to the array
$$
	\arr{i} \eqdef \frac{\iswgt{i}}{\Nin \sum_{i' = 1}^{\Nis} \iswgt{i'}} \sum_{j = 1}^{\Nin} h(\epart{i}{\inind{i}{j}}) \quad (i \in \intvect{1}{\Nis}, \N \in \nsetpos)
$$
associated with $\{ \filt \}_{\N \in \nn}$ given by \eqref{eq:def:filt}. Note that all indices $\{Ê\inind{i}{j} \in \intvect{1}{\Nis} \times \intvect{1}{\Nin} \}$ are conditionally independent given $\filt$. Moreover, for all $i \in \intvect{1}{\Nis}$ it holds that $\{ \inind{i}{j} \}_{j = 1}^{\Nin} \sim \disc(\{ \inwgt{i}{j'} \}_{j' = 1}^{\Nin})^{\varotimes \Nin}$. Hence,
$$
    \sum_{i = 1}^{\Nis} \cexp{\arr{i}}{\filt} =  \sum_{i = 1}^{\Nis} \frac{\iswgt{i}}{\sum_{i' = 1}^{\Nis} \iswgt{i'}} \sum_{j = 1}^{\Nin} \frac{\inwgt{i}{j}}{\sum_{j' = 1}^{\Nin} \inwgt{i}{j'}} h(\epart{i}{j}) \plim \targ h \eqsp,
$$
where convergence holds by assumption. First, note that $\arr{i} \leq \supn{h} < \infty$ for all $i \in \intvect{1}{\Nis}$ and $\N \in \nsetpos$. Moreover, \A{ass:DM:cons:tightness} is trivially satisfied. Thus, consistency is established by showing that \A{ass:DM:cons:Lindeberg} is satisfied, which is an immediate implication of \autoref{lem:tightness} with $X_N = Y_N = 0$ and $V_N = \supn{h} \max_{i \in \intvect{1}{\Nis}} \iswgt{i} / \sum_{i' = 1}^{\Nis} \iswgt{i'}$, which is $\filt$-adapted and tends to zero in probability thanks to \C{def:consistency:max}. Hence, by \autoref{thm:DM:A-1}, the series $\sum_{i = 1}^{\Nis} \arr{i}$ and $\sum_{i = 1}^{\Nis} \cexp{\arr{i}}{\filt}$ have the same limit $\targ h$ in probability. This completes the proof.

\subsection{Proof of \autoref{thm:deviation:inequalities:individual:selection}}
\label{sec:proof:deviation:inequalities:individual:selection}

We may bound the quantity of interest according to
\begin{multline*} \label{eq:max:decomposition}
    \max_{i \in \intvect{1}{\Nis}} \left|  \frac{1}{\Nin} \sum_{j = 1}^{\Nin} h(\epart{i}{\inind{i}{j}}) - \targ h \right| \\
    \leq \max_{i \in \intvect{1}{\Nis}} \frac{1}{\Nin} \left| \sum_{j = 1}^{\Nin} \diff{i}{j} \right|Ê+ \max_{i \in \intvect{1}{\Nis}} \left| \sum_{j = 1}^{\Nin} \frac{\inwgt{i}{j}}{\sum_{j' = 1}^{\Nin} \inwgt{i}{j'}} h(\epart{i}{j}) - \targ h \right| \eqsp,
\end{multline*}
where we have set
\begin{equation} \label{def:diff:ind:selection}
    \diff{i}{j} \eqdef h(\epart{i}{\inind{i}{j}}) - \sum_{j' = 1}^{\Nin} \frac{\inwgt{i}{j'}}{\sum_{j'' = 1}^{\Nin} \inwgt{i}{j''}} h(\epart{i}{j'}) \quad ((i, j) \in \intvect{1}{\Nis} \times \intvect{1}{\Nin}) \eqsp.
\end{equation}
By \autoref{lem:deviation:properly:normalized}, the second term on the right hand side satisfies
$$
    \prob \left( \max_{i \in \intvect{1}{\Nis}} \left| \sum_{j = 1}^{\Nin} \frac{\inwgt{i}{j}}{\sum_{j' = 1}^{\Nin} \inwgt{i}{j'}} h(\epart{i}{j}) - \targ h \right| \geq \varepsilon/2 \right)
    \leq 2 \Hc \Nis \exp{\left( - \He \Nin \frac{\varepsilon^2 \Hlw^2}{16 \supn{h}^2} \right)} \eqsp.
 $$
For each $i \in \intvect{1}{\Nis}$, the variables $\{ \diff{i}{j} \}_{j = 1}^{\Nin}$ are, conditionally on $\filt$, independent and identically distributed with zero mean; moreover, as $|\diff{i}{j}| \leq 2\supn{h}$ for all $(i, j) \in \intvect{1}{\Nis} \times \intvect{1}{\Nin}$, Hoeffding's inequality implies that for all $\varepsilon > 0$,
\begin{equation} \label{eq:conditional:hoeffding:individual:selection}
    \prob \left( \max_{i \in \intvect{1}{\Nis}} \frac{1}{\Nin} \left| \sum_{j = 1}^{\Nin} \diff{i}{j} \right| \geq \varepsilon/2 \mid \filt \right) \leq 2 \Nis \exp \left( -  \Nin \frac{ \varepsilon^2}{8 \supn{h}^2} \right) \eqsp.
\end{equation}
Combining the previous two displays show that \D{def:deviation:inequality:individual} is satisfied with the choice of $\Hconstd{1}$ and $\Hconstd{2}$ given in the theorem.

\subsection{Proof of \autoref{thm:normality:individual:selection}}
\label{sec:proof:normality:individual:selection}

We start with \AN{ass:AN:CLT}. In order to apply \autoref{thm:DM:A-3}, define the array
\begin{equation} \label{def:arr:AN:PS}
    \arr{i} \eqdef \sqrt{\frac{\Nis}{\Nin}} \frac{\iswgt{i}}{\sum_{i' = 1}^{\Nis} \iswgt{i'}} \sum_{j=1}^{\Nin} \{h(\epart{i}{\inind{i}{j}}) - \targ h\} \quad ((i, j) \in \intvect{1}{\Nis} \times \intvect{1}{\Nin}) \eqsp,
\end{equation}
equipped with the usual filtration $\{\filt \}_{\N \in \nn}$ given by \autoref{eq:def:filt}. We first note that $\arr{i} \leq 2 \sqrt{\N} \supn{h} < \infty$ for all $i \in \intvect{1}{\Nis}$ and $\N \in \nsetpos$. In order to check \B{ass:DM:norm:asymptotic:variance}\!, write, following the arguments of the proof of \autoref{thm:consistency:individual:selection},
\begin{multline*}
    \sum_{i = 1}^{\Nis} \cexp{\arr[2]{i}}{\filt} = \Nis \sum_{i = 1}^{\Nis} \left( \frac{\iswgt{i}}{\sum_{i' = 1}^{\Nis} \iswgt{i'}} \right)^2 \sum_{j = 1}^{\Nin} \frac{\inwgt{i}{j}}{\sum_{j' = 1}^{\Nin} \inwgt{i}{j'}} \{h(\epart{i}{j}) - \targ h \}^2 \\
    + \Nis (\Nin - 1) \sum_{i = 1}^{\Nis} \left( \frac{\iswgt{i}}{\sum_{i' = 1}^{\Nis} \iswgt{i'}} \right)^2 \left( \sum_{j = 1}^{\Nin} \frac{\inwgt{i}{j}}{\sum_{j' = 1}^{\Nin} \inwgt{i}{j'}} \{h(\epart{i}{j}) - \targ h \} \right)^2 \eqsp.
\end{multline*}
Moreover, since
\begin{multline*}
    \sum_{i = 1}^{\Nis} \E^2\left[\arr{i} \mid \filt \right] \\
    = \N \sum_{i = 1}^{\Nis} \left( \frac{\iswgt{i}}{\sum_{i' = 1}^{\Nis} \iswgt{i'}} \right)^2 \left( \sum_{j = 1}^{\Nin} \frac{\inwgt{i}{j}}{\sum_{j' = 1}^{\Nin} \inwgt{i}{j'}} \{h(\epart{i}{j})- \targ h \} \right)^2 \eqsp,
\end{multline*}
we obtain
\begin{align} \label{eq:ass:B1:indS}
    \lefteqn{\sum_{i = 1}^{\Nis} \left\{Ê\cexp{\arr[2]{i}}{\filt} - \E^2\left[\arr{i} \mid \filt \right] \right\}} \\
    &= \Nis \sum_{i = 1}^{\Nis} \left( \frac{\iswgt{i}}{\sum_{i' = 1}^{\Nis} \iswgt{i'}} \right)^2 \sum_{j = 1}^{\Nin} \frac{\inwgt{i}{j}}{\sum_{j' = 1}^{\Nin} \inwgt{i}{j'}} \{ h(\epart{i}{j}) - \targ h \}^2 \nonumber \\
    &- \Nis \sum_{i = 1}^{\Nis} \left( \frac{\iswgt{i}}{\sum_{i' = 1}^{\Nis} \iswgt{i'}} \right)^2 \left( \sum_{j = 1}^{\Nin} \frac{\inwgt{i}{j}}{\sum_{j' = 1}^{\Nin} \inwgt{i}{j'}} \{h(\epart{i}{j}) -  \targ h \} \right)^2 \nonumber \eqsp.
\end{align}
Since the ancestor archipelago satisfies \eqref{eq:deviation:normalized:estimator} and is consistent for $\targ$, we deduce that
\begin{equation*} \label{ass:AN:squared:iswgts:islands}
    \Nis \sum_{i = 1}^{\Nis} \left( \frac{\iswgt{i}}{\sum_{i' = 1}^{\Nis} \iswgt{i'}} \right)^2 \left( \sum_{j = 1}^{\Nin} \frac{\inwgt{i}{j}}{\sum_{j' = 1}^{\Nin} \inwgt{i}{j'}} \{h(\epart{i}{j}) - \targ h \} \right)^2 \plim[N] 0 \eqsp.
\end{equation*}
Then, since the ancestor archipelago also satisfies \AN{ass:AN:squared:iswgts} we conclude that the variance \eqref{eq:ass:B1:indS} tends in probability to $\condmeas{1}\{ (h - \targ h)^2\}$. Consequently, the triangular array satisfies Assumption~\B{ass:DM:norm:asymptotic:variance} with limit $\condmeas{1}\{ (h - \targ h)^2\}$. In order to check Assumption~\B{ass:DM:norm:Lindeberg} we may apply \autoref{lem:tightness} by bounding
\begin{equation*} \label{eq:ass:B2:indS}
	\max_{i \in \intvect{1}{\Nis}} |\arr{i}| \leq \mbd + \bdconst \bd^2 \quad (\N \in \nsetpos) \eqsp,
\end{equation*}
with, for $\N \in \nsetpos$,
$$
\begin{cases}
    \displaystyle \mbd = \sqrt{\N} \max_{i \in \intvect{1}{\Nis}} \frac{\iswgt{i}}{\sum_{i' = 1}^{\Nis} \iswgt{i'}} \max_{i \in \intvect{1}{\Nis}} \left| \sum_{j = 1}^{\Nin} \dfrac{\inwgt{i}{j}}{\sum_{j' = 1}^{\Nin} \inwgt{i}{j'}} \{ h(\epart{i}{j}) - \targ h \} \right| \eqsp, \\
    \displaystyle \bdconst = \Nis \max_{i \in \intvect{1}{\Nis}}  \frac{\iswgt{i}}{\sum_{i' = 1}^{\Nis} \iswgt{i'}} \eqsp, \\
    \displaystyle \bd^2 = \sqrt{\frac{\Nin}{\Nis}} \max_{i \in \intvect{1}{\Nis}} \bigg| \frac{1}{\Nin} \sum_{j = 1}^{\Nin} \diff{i}{j} \bigg| \eqsp,
\end{cases}
$$
where the $\delta_{\N}$s are defined in \eqref{def:diff:ind:selection}. Here $\{Ê\mbd \}_{\N \in \nsetpos}$ is $\filt$-adapted and tends to zero in probability by
\AN{def:tightness:island} and \autoref{lem:deviation:properly:normalized}. In addition, $\{Ê\bdconst \}_{\N \in \nsetpos}$ is $\filt$-adapted and, by \AN{def:tightness:island}\!, tight. Moreover, for all $\N \in \nsetpos$, $\bd$ has, by \eqref{eq:conditional:hoeffding:individual:selection}, a tail of the type
$$
	\prob(Y_N \geq \varepsilon \mid \filt) \leq  2 \Nis \exp\left( -\Nis \frac{\varepsilon^4}{2 \supn{h}^2} \right) \eqsp.
$$
Thus, \autoref{lem:tightness} applies, which establishes \B{ass:DM:norm:Lindeberg}. Finally, we may conclude, by using \autoref{lemma:asind}, the proof of \AN{ass:AN:CLT}  to obtain that $\asvartd(h) = \asvar(h) + \condmeas{1}\{(h - \targ h)^2 \}$.

To check \AN{ass:AN:squared:islands} and prove that the series $\sum_{i = 1}^{\Nis} \arr{i}$, where
$$
    \arr{i} \eqdef \frac{\iswgt{i}}{\Nin \sum_{i' = 1}^{\Nis} \iswgt{i'}} \left( \sum_{j = 1}^{\Nin} \{h(\epart{i}{\inind{i}{j}}) - \targ h \} \right)^2 \quad (i \in \intvect{1}{\Nis}, \N \in \nsetpos) \eqsp,
$$
converges in probability as $\N \rightarrow \infty$, let $\{ \filt \}_{\N \in \nn}$ be defined as in \eqref{eq:def:filt} and consider the sum
\begin{multline} \label{eq:ind:sel:development:AN2}
    \sum_{i = 1}^{\Nis} \cexp{\arr{i}}{\filt} = \sum_{i = 1}^{\Nis} \frac{\iswgt{i}}{\sum_{i' = 1}^{\Nis} \iswgt{i'}} \sum_{j = 1}^{\Nin} \frac{\inwgt{i}{j}}{\sum_{j' = 1}^{\Nin} \inwgt{i}{j'}} \{h(\epart{i}{j}) - \targ h \}^2 \\
    + (\Nin - 1) \sum_{i = 1}^{\Nis} \frac{\iswgt{i}}{\sum_{i' = 1}^{\Nis} \iswgt{i'}} \left( \sum_{j = 1}^{\Nin} \frac{\inwgt{i}{j}}{\sum_{j' = 1}^{\Nin} \inwgt{i}{j'}} \{h(\epart{i}{j}) - \targ h \} \right)^2 \eqsp,
\end{multline}
where we used, as previously, that for each $i \in \intvect{1}{\Nis}$, the variables $\{ h(\epart{i}{\inind{i}{j}}) \}_{j = 1}^{\Nin}$ are, conditionally on $\filt$,
independent and identically distributed with common mean
$\sum_{j = 1}^{\Nin} \inwgt{i}{j}  h(\epart{i}{j})/\sum_{j' = 1}^{\Nin} \inwgt{i}{j'}$.
The first term of the right hand side of \eqref{eq:ind:sel:development:AN2} tends in probability to $\targ \{(h- \targ h)^2\}$ by consistency, while the second term tends in probability to $\asvarANone(h)$ by \AN{ass:AN:squared:islands}. Since this establishes the condition \A{ass:DM:cons:tightness} in \autoref{thm:DM:A-1}, the series $\sum_{i = 1}^{\Nis} \arr{i}$ and $\sum_{i = 1}^{\Nis} \cexp{\arr{i}}{\filt}$ have the same limit $\targ \{(h- \targ h)^2\} + \asvarANone(h)$ in probability as soon as the condition \A{ass:DM:cons:Lindeberg} in the same theorem can be checked for the array in question. However, write
$$
	\max_{i \in \intvect{1}{\Nis}} |\arr{i}| \leq \mbd + \bdconst \bd^2 \quad (\N \in \nsetpos) \eqsp,
$$
where, for $\N \in \nsetpos$,
$$
        \begin{cases}
                \mbd = \displaystyle 2 \Nin \max_{i \in \intvect{1}{\Nis}} \frac{\iswgt{i}}{\sum_{i' = 1}^{\Nis} \iswgt{i'}} \left( \max_{i \in \intvect{1}{\Nis}} \left| \sum_{j = 1}^{\Nin} \frac{\inwgt{i}{j}}{\sum_{j' = 1}^{\Nin} \inwgt{i}{j'}} h(\epart{i}{j}) \right| \right)^2 \eqsp, \\
                \bdconst = \displaystyle 2 \max_{i \in \intvect{1}{\Nis}} \Nis \frac{\iswgt{i}}{\sum_{i' = 1}^{\Nis} \iswgt{i'}} \eqsp, \\
                \bd = \displaystyle \max_{i \in \intvect{1}{\Nis}} \frac{1}{\sqrt{\N}} \bigg| \sum_{j = 1}^{\Nin} \diff{i}{j} \bigg| \eqsp,
        \end{cases}
$$
and the $\delta_\N$s  are defined in \autoref{def:diff:ind:selection}; then, since $\mbd$ tends to zero in probability (by \AN{def:tightness:island} and \autoref{lem:deviation:properly:normalized}), $\bdconst$ is tight, and $\bd$ has an exponential tail (by \autoref{eq:conditional:hoeffding:individual:selection}), \autoref{lem:tightness} applies, establishing that the array satisfies Assumption~\A{ass:DM:cons:Lindeberg}. Consequently, we obtain that $\asvartdANone(h) = \targ \{(h - \targ h)^2 \} + \asvarANone(h)$.

To verify \AN{ass:AN:squared:iswgts} we retain to the previous machinery and study the array
$$
    \arr{i} \eqdef \frac{\Nis}{\Nin} \left( \frac{\iswgt{i}}{\sum_{i' = 1}^{\Nis} \iswgt{i'}} \right)^2 \sum_{j = 1}^{\Nin} h(\epart{i}{\inind{i}{j}}) \quad (i \in \intvect{1}{\Nis}, \N \in \nsetpos)
$$
associated with the filtration $\{\filt \}_{\N \in \nsetpos}$ defined in \autoref{eq:def:filt}. To establish the convergence of $\sum_{i = 1}^{\Nis} \arr{i}$ we reapply \autoref{thm:DM:A-1} and consider
\begin{equation} \label{eq:AN3:cond:limit}
    \sum_{i = 1}^{\Nis} \cexp{\arr{i}}{\filt} = \Nis \sum_{i = 1}^{\Nis} \left( \frac{\iswgt{i}}{\sum_{i' = 1}^{\Nis} \iswgt{i'}} \right)^2 \sum_{j = 1}^{\Nin} \frac{\inwgt{i}{j}}{\sum_{j' = 1}^{\Nin} \inwgt{i}{j'}} h(\epart{i}{j}) \plim \condmeas{1} h \eqsp,
\end{equation}
where convergence follows since the ancestor archipelago satisfies \AN{ass:AN:squared:iswgts}. By reusing \eqref{eq:AN3:cond:limit} for $|h|$ we check that the condition \A{ass:DM:cons:tightness} in \autoref{thm:DM:A-1} is satisfied. Moreover, since
$$
    \max_{i \in \intvect{1}{\Nis}} |\arr{i}| \leq  \supn{h} \left( \sqrt{\Nis} \frac{\iswgt{i}}{\sum_{i' = 1}^{\Nis} \iswgt{i'}} \right)^2 \eqsp,
$$
where the right hand side vanishes in probability by \AN{def:tightness:island}, \autoref{lem:tightness} implies that the array satisfies \A{ass:DM:cons:Lindeberg} as well. Thus, \AN{ass:AN:squared:iswgts} holds true with $\condmeastd{1} = \condmeas{1}$.

In addition, since Assumption~\AN{ass:AN:mut-1} coincides with \AN{ass:AN:squared:iswgts} in the case of uniform particle weights, we obtain immediately that $\condmeastd{2} = \condmeas{1}$. Moreover, \AN{ass:AN:mut-2} coincides precisely with \C{def:consistency}, 
which is satisfied as the output satisfies the stronger condition \AN{ass:AN:CLT}, and we obtain $\condmeastd{3} = \targ$. Finally, \AN{def:tightness:island} holds trivially true.


\subsection{Proof of \autoref{thm:consistency:mutation}}
\label{sec:proof:consistency:mutation}


First, note that
\begin{multline} \label{decomposition:mutation}
    \sum_{i = 1}^{\Nis} \frac{\iswgttd{i}}{\sum_{i' = 1}^{\Nis} \iswgttd{i'}} \sum_{j = 1}^{\Nin} \frac{\inwgttd{i}{j}}{\sum_{j' = 1}^{\Nin} \inwgttd{i}{j'}} h(\eparttd{i}{j}) \\
    = \dfrac{\sum_{i' = 1}^{\Nis} \iswgt{i'}}{\sum_{i'' = 1}^{\Nis} \iswgttd{i''}} \sum_{i = 1}^{\Nis} \frac{\iswgt{i}}{\sum_{i' = 1}^{\Nis} \iswgt{i'}} \sum_{j = 1}^{\Nin} \frac{\inwgttd{i}{j}}{\sum_{j' = 1}^{\Nin} \inwgt{i}{j'}} h(\eparttd{i}{j}) \eqsp,
\end{multline}
using the definition of $\{ \iswgttd{i} \}_{i = 1}^{\Nis}$ in Algorithm~\ref{alg:mutation}. In order to determine the limit in probability of this quantity we apply \autoref{thm:DM:A-1} to the array
$$
    \arr{i} \eqdef \frac{\iswgt{i}}{\sum_{i' = 1}^{\Nis} \iswgt{i'}} \sum_{j = 1}^{\Nin} \frac{\inwgttd{i}{j}}{\sum_{j' = 1}^{\Nin} \inwgt{i}{j'}} h(\eparttd{i}{j}) \quad (i \in \intvect{1}{\Nis}, \N \in \nsetpos)
$$
associated with the filtration $\{Ê\filt \}_{\N \in \nsetpos}$ given in \eqref{eq:def:filt}. For each $(i, j) \in \intvect{1}{\Nis} \times \intvect{1}{\Nin}$, the conditional distribution of $\eparttd{i}{j}$ given $\filt$ is $\prop(\epart{i}{j}, \cdot)$; thus,
\begin{equation} \label{eq:cond:exp:mutation:wh}
    \begin{split}
        \cexp{\inwgttd{i}{j}h(\eparttd{i}{j})}{\filt} &= \inwgt{i}{j} \int \der(\epart{i}{j}, \tilde{x})h(\tilde{x}) \, \prop(\epart{i}{j}, \rmd \tilde{x}) \\
        &= \inwgt{i}{j} \op h (\epart{i}{j}) \eqsp,
    \end{split}
\end{equation}
implying that
\begin{multline} \label{eq:unbiasedness:mutation}
    \sum_{i = 1}^{\Nis} \cexp{\arr{i}}{\filt} = \sum_{i = 1}^{\Nis} \frac{\iswgt{i}}{\sum_{i' = 1}^{\Nis} \iswgt{i'}} \sum_{j = 1}^{\Nin} \frac{\inwgt{i}{j}}{\sum_{j' = 1}^{\Nin} \inwgt{i}{j'}} \op h (\epart{i}{j}) \plim \targ \op h \eqsp,
\end{multline}
where convergence holds since the ancestor archipelago satisfies Assumption~\C{def:consistency}. This implies~\A{ass:DM:cons:tightness}. To check also the condition \A{ass:DM:cons:Lindeberg} we apply \autoref{lem:tightness} with $X_N = Y_N = 0$ and $V_N = \supn{\der} \supn{h} \max_{i \in \intvect{1}{\Nis}} \iswgt{i} / \sum_{i' = 1}^{\Nis} \iswgt{i'}$, where $V_N$ is $\filt$-adapted and tends to zero in probability by the assumption~\C{def:consistency:max}\!. Hence, \autoref{thm:DM:A-1} ensures that the two series $\sum_{i = 1}^{\Nis} \arr{i}$ and $\sum_{i = 1}^{\Nis} \cexp{\arr{i}}{\filt}$ have the same limit $\targ \op h$ in probability. Moreover, by setting $h$ is equal to the constant function $\1{\stsptd}$ we deduce that
\begin{equation} \label{consistency:mutation:1}
	\frac{\sum_{i = 1}^{\Nis} \iswgt{i}}{\sum_{i' = 1}^{\Nis} \iswgttd{i'}} \plim \dfrac{1}{\targ \op \1{\stsptd}} \eqsp,
\end{equation}
which allows us to complete the proof of \C{def:consistency} using Slutsky's lemma.


Finally, Assumption~\C{def:consistency:max} is checked straightforwardly by just noting that
$$
    \max_{i \in \intvect{1}{\Nis}} \dfrac{\iswgttd{i}}{\sum_{i' = 1}^{\Nis} \iswgttd{i'}} \leq  \supn{\der} \max_{i \in \intvect{1}{\Nis}} \dfrac{\iswgt{i}}{\sum_{i' = 1}^{\Nis} \iswgt{i'}} \dfrac{\sum_{i' = 1}^{\Nis} \iswgt{i'}}{\sum_{i'' = 1}^{\Nis} \iswgttd{i''}} \eqsp,
$$
where the right hand side tends to zero in probability by \eqref{consistency:mutation:1} and the fact that the ancestor archipelago satisfies~\C{def:consistency}.

\subsection{Proof of \autoref{thm:deviation:inequalities:mutation}}
\label{sec:proof:deviation:inequalities:mutation}
Note that $\Hlwtd \times \targtd h = \Hlw \times \targ \op h$ 
and bound the quantity of interest according to
\begin{multline} \label{eq:max:decomposition}
    \max_{i \in \intvect{1}{\Nis}} \left| \frac{1}{\Nin} \sum_{j = 1}^{\Nin} \inwgttd{i}{j} h(\eparttd{i}{j}) - \Hlw \times \targ \op h \right| \\
    \leq \max_{i \in \intvect{1}{\Nis}} \frac{1}{\Nin} \left| \sum_{j = 1}^{\Nin} \difftd{i}{j} \right| + \max_{i \in \intvect{1}{\Nis}} \frac{1}{\Nin} \left| \sum_{j = 1}^{\Nin} \inwgt{i}{j} \op h(\epart{i}{j}) - \Hlw \times \targ \op h \right| \eqsp,
\end{multline}
where
\begin{equation} \label{eq:diff:def}
    \difftd{i}{j} \eqdef \inwgttd{i}{j} h(\eparttd{i}{j}) - \inwgt{i}{j} \op h(\epart{i}{j}) \quad ((i, j) \in \intvect{1}{\Nis} \times \intvect{1}{\Nin}) \eqsp.
\end{equation}
Since the input archipelago satisfies~\D{def:deviation:inequality:individual} it holds that
\begin{multline*}
    \prob \left(  \max_{i \in \intvect{1}{\Nis}} \frac{1}{\Nin} \left| \sum_{j = 1}^{\Nin} \{ \inwgt{i}{j} \op h(\epart{i}{j}) - \Hlw \times \targ \op h \} \right|  \geq \varepsilon/2 \right) \\
    \leq \Nis \Hc \exp{\left( - \He \Nin \frac{\varepsilon^2}{4 \supn{\op \1{\stsp}}^2 \supn{h}^2} \right)} \eqsp.
\end{multline*}
For each $i \in \intvect{1}{\Nis}$, the random variables $\{ \difftd{i}{j} \}_{j = 1}^{\Nin}$ are, conditionally on $\filt$, independent and, by \eqref{eq:cond:exp:mutation:wh}, zero mean. Moreover, since for all $(i, j) \in \intvect{1}{\Nis} \times \intvect{1}{\Nin}$, $|\difftd{i}{j}| \leq \delta \supn{h}$, where $\delta$ is defined in the statement of theorem, Hoeffding's inequality implies that for all $\varepsilon > 0$,
\begin{equation} \label{eq:conditional:hoeffding:mutation}
    \prob \left( \max_{i \in \intvect{1}{\Nis}} \frac{1}{\Nin} \left| \sum_{j = 1}^{\Nin} \difftd{i}{j} \right| \geq \varepsilon/2 \mid \filt \right) \leq 2 \Nis \exp \left( - \Nin \frac{ \varepsilon^2}{2 \delta^2 \supn{h}^2} \right) \eqsp.
\end{equation}
By combining the two previous displays we may conclude that \D{def:deviation:inequality:individual} is satisfied with $\Hconstd{1}$ and $\Hconstd{2}$ defined as in the theorem statement.

\subsection{Proof of \autoref{thm:normality:mutation}}
\label{sec:proof:normality:mutation}
We preface the proof by the following auxiliary result, which is obtained as a straightforward extension of the generalized hoeffding inequality in \cite[Lemma~4]{douc:garivier:moulines:olsson:2010}.
\begin{lemma} \label{lem:Hoeffding:ratio:island:weights}
    Let the assumptions of \autoref{thm:deviation:inequalities:mutation} hold. Then for all $\Nis \in \nsetpos$, $\Nin \in \nsetpos$, and $\varepsilon > 0$, \begin{equation} \label{convergence:proba:inverse}
        \prob \left( \max_{i \in \intvect{1}{\Nis}} \left| \frac{\iswgttd{i}}{\iswgt{i}} - \targ \op \1{\stsptd} \right| \geq \varepsilon \right) \leq \Nis \Hconstck{1} \exp \left( - \Hconstck{2} \Nin \varepsilon^2 \right) \eqsp,
    \end{equation}
    where $\Hconstck{1} \eqdef 2 (\Hc \vee \Hconstd{1})$ and $\Hconstck{2} \eqdef \{Ê(\He / \supn{\der}^2) \wedge \Hconstd{2} \} \Hlw^2/4$.
\end{lemma}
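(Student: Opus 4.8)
The plan is to recognize $\iswgttd{i}/\iswgt{i}$ as a self-normalized ratio: from Algorithm~\ref{alg:mutation},
\[
\frac{\iswgttd{i}}{\iswgt{i}} = \frac{\sum_{j=1}^{\Nin}\inwgttd{i}{j}}{\sum_{j=1}^{\Nin}\inwgt{i}{j}} = \sum_{j=1}^{\Nin}\frac{\inwgt{i}{j}}{\sum_{j'=1}^{\Nin}\inwgt{i}{j'}}\der(\epart{i}{j},\eparttd{i}{j}) \eqsp.
\]
Writing $A_i \eqdef \Nin^{-1}\sum_{j}\inwgttd{i}{j}$ and $B_i \eqdef \Nin^{-1}\sum_{j}\inwgt{i}{j}$, and using $\cexp{\der(\epart{i}{j},\eparttd{i}{j})}{\filt} = \op\1{\stsptd}(\epart{i}{j})$ together with the fact that the input archipelago concentrates the empirical averages of $\op\1{\stsptd}$ around $\targ\op\1{\stsptd}$ and of $\1{\stsp}$ around $1$, the target value $\targ\op\1{\stsptd}$ is exactly the ratio $(\Hlw\times\targ\op\1{\stsptd})/(\Hlw\times\targ\1{\stsp})$ of the limiting numerator and denominator means. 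Thus the statement is a uniform (over the $\Nis$ islands) ratio-concentration inequality, and I would prove it by extending the generalized Hoeffding inequality of \cite[Lemma~4]{douc:garivier:moulines:olsson:2010} exactly as was done for \autoref{lem:deviation:properly:normalized}.

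I would avoid a crude additive split of the ratio (which loses constant factors) and instead test, for fixed $\varepsilon$ and island $i$, the two one-sided events $\{A_i \geq (\targ\op\1{\stsptd}+\varepsilon)B_i\}$ and $\{A_i \leq (\targ\op\1{\stsptd}-\varepsilon)B_i\}$. The first reads $\Nin^{-1}\sum_{j}\{\inwgttd{i}{j} - (\targ\op\1{\stsptd}+\varepsilon)\inwgt{i}{j}\} \geq 0$, and the crucial observation is that the $\filt$-measurable weight $(\targ\op\1{\stsptd}+\varepsilon)\inwgt{i}{j}$ cancels in centering: the $\filt$-conditional mean of the summand is $\inwgt{i}{j}\{\op\1{\stsptd}(\epart{i}{j}) - \targ\op\1{\stsptd} - \varepsilon\}$, while the conditionally centered part is precisely $\difftd{i}{j}$ specialized to $h=\1{\stsptd}$ (the identity $\difftd{i}{j} = \inwgt{i}{j}\{\der(\epart{i}{j},\eparttd{i}{j}) - \op\1{\stsptd}(\epart{i}{j})\}$, i.e.\ \eqref{eq:diff:def} at $h=\1{\stsptd}$). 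This is the decoupling that turns the random-denominator ratio into a one-sided Hoeffding problem with deterministic signal.

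Two ingredients then close the estimate. First, the conditionally centered fluctuation $\Nin^{-1}\sum_{j}\difftd{i}{j}$ is, given $\filt$, a sum of independent zero-mean variables bounded by $\delta$, so \eqref{eq:conditional:hoeffding:mutation} gives a conditional subgaussian tail of rate $1/(2\delta^2)$; this supplies the $\Hconstd{2}$ contribution to $\Hconstck{2}$. Second, by the input exponential deviation \D{def:deviation:inequality:individual} with $h=\op\1{\stsptd}$ (legitimate since $\op\1{\stsptd}\in\bmf{\stfd}$) and with $h=\1{\stsp}$, the conditional-mean signal equals $-\varepsilon B_i$ up to $O(\varepsilon\Hlw)$ input-level deviations, with $B_i$ concentrated at the level $\Hlw$; weakening $\supn{\op\1{\stsptd}}$ to the larger $\supn{\der}$ (using $\op\1{\stsptd}(x)=\int\der(x,x')\,\prop(x,\rmd x')\leq\supn{\der}$) produces the admissible rate $(\He/\supn{\der}^2)$. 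The denominator level $\Hlw$ is exactly what rescales $\varepsilon$ to $\varepsilon\Hlw$ and hence yields the factor $\Hlw^2/4$. Taking the minimum of the two rates, applying a union bound over the $\Nis$ islands (the prefactor $\Nis$), and collecting the two constants into $\Hconstck{1}=2(\Hc\vee\Hconstd{1})$ delivers the claim; the lower tail is symmetric.

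The main obstacle is the random denominator: the quotient must be controlled uniformly over all $\Nis$ islands while the numerator fluctuation is only \emph{conditionally} (given $\filt$) subgaussian, so the denominator randomness and the post-mutation randomness must be disentangled — which is precisely why the result is phrased as an extension of \cite[Lemma~4]{douc:garivier:moulines:olsson:2010}, and why the one-sided-event formulation above is preferable to a direct bound on the ratio. A secondary but delicate point in the bookkeeping is that the ``denominator-too-small / input-deviation-too-large'' contributions produced in the bad case are $\varepsilon$-independent and must be kept at the required $\varepsilon^2$ rate; this is handled by the a priori bound $|\iswgttd{i}/\iswgt{i} - \targ\op\1{\stsptd}| \leq \supn{\der}$ (both quantities lie in $[0,\supn{\der}]$), which lets me discard the regime $\varepsilon > \supn{\der}$, where the probability vanishes identically. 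Carrying the constants through this argument is routine, but is where all the care lies.
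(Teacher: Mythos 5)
Your structural reading of the problem is right (the ratio $\iswgttd{i}/\iswgt{i} = A_i/B_i$ with $A_i \eqdef \Nin^{-1}\sum_j \inwgttd{i}{j}$, $B_i \eqdef \Nin^{-1}\sum_j \inwgt{i}{j}$, uniform control over islands, extension of \cite[Lemma~4]{douc:garivier:moulines:olsson:2010}), but your proof rejects exactly the argument that produces the stated constants, and your own bookkeeping does not close. The paper's proof is the one-liner it announces: apply the ratio-type Hoeffding lemma with the numerator tail given by the \emph{output} archipelago's exponential deviation at $h = \1{\stsptd}$ --- already established in \autoref{thm:deviation:inequalities:mutation}, with mean level $\Hlwtd = \Hlw \, \targ \op \1{\stsptd}$ and constants $\Hconstd{1}, \Hconstd{2}$ --- the denominator tail given by the input \D{def:deviation:inequality:individual} at $h = \1{\stsp}$ (mean $\Hlw$, constants $\Hc, \He$), and the algebraic identity
\begin{equation*}
    \frac{A_i}{B_i} - \targ \op \1{\stsptd}
    = \frac{A_i - \Hlw \, \targ \op \1{\stsptd}}{\Hlw}
    + \frac{A_i}{B_i} \cdot \frac{\Hlw - B_i}{\Hlw} \eqsp,
\end{equation*}
in which the random ratio is bounded \emph{almost surely} by $\supn{\der}$. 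Splitting the budget $\varepsilon \Hlw$ into two halves then yields exactly $\Hconstd{1} \Nis \exp(-\Hconstd{2} \Nin \varepsilon^2 \Hlw^2/4) + \Hc \Nis \exp(-\He \Nin \varepsilon^2 \Hlw^2/(4 \supn{\der}^2))$, i.e.\ precisely $\Hconstck{1} = 2(\Hc \vee \Hconstd{1})$ and $\Hconstck{2} = \{(\He/\supn{\der}^2) \wedge \Hconstd{2}\} \Hlw^2/4$. So the ``crude additive split'' you dismiss as losing constant factors is not an inferior alternative here: the lemma's constants are tailored to it, and the appearance of $\Hconstd{1}, \Hconstd{2}$ in $\Hconstck{1}, \Hconstck{2}$ is the footprint of the output-level deviation bound, which already packages your two ingredients (conditional Hoeffding for $\sum_j \difftd{i}{j}$ plus input-\D{def:deviation:inequality:individual} at $h = \op \1{\stsptd}$) into a single inequality.

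The concrete gap is in your final ``routine'' step. Your one-sided event $\{A_i \geq (\targ \op \1{\stsptd} + \varepsilon) B_i\}$ forces a \emph{three}-way split of the budget $\varepsilon \Hlw$ (conditional fluctuation, numerator signal, denominator deviation), and the denominator deviation enters with the multiplier $\targ \op \1{\stsptd} + \varepsilon \leq 2\supn{\der}$ (even after your regime reduction $\varepsilon \leq \supn{\der}$) rather than with the multiplier $\supn{\der}$ that the identity above attaches. Chasing the numbers: to secure the rate $\Hconstd{2} \Hlw^2 \varepsilon^2 / 4$ from the conditional Hoeffding term you must allot it $\varepsilon \Hlw / 2$, and the numerator signal needs at least $\varepsilon \Hlw / 4$, leaving the denominator at most $v \leq \varepsilon \Hlw / (8 \supn{\der})$, whence the rate $\He \Hlw^2 \varepsilon^2 / (64 \supn{\der}^2)$ --- a factor $16$ short of the claimed $\He \Hlw^2 \varepsilon^2 / (4 \supn{\der}^2)$; even an optimal two-way one-sided split is a factor $4$ short because of the $2\supn{\der}$ multiplier. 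So your argument proves the inequality only with strictly worse constants than those asserted in the lemma, and your closing claim that it ``delivers'' $\Hconstck{1}, \Hconstck{2}$ is false as written. The fix is simply to invoke the output of \autoref{thm:deviation:inequalities:mutation} for $A_i$ at $h = \1{\stsptd}$ and use the displayed identity, which is what the paper means by a straightforward extension of the generalized Hoeffding inequality; your $\varepsilon > \supn{\der}$ regime reduction then becomes unnecessary, since every term already carries an $\varepsilon^2$ rate.
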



To check \AN{ass:AN:CLT}\!, take $h \in \bmf{\stfd}$ and assume without loss of generality that $\targtd h = 0$ and, consequently, $\targ \op h = 0$. We again rewrite the estimator according to \eqref{decomposition:mutation} and apply \autoref{thm:DM:A-3} to the second factor. For this purpose, define the array
\begin{equation} \label{eq:mutation:array:AN}
    \arr{i} \eqdef \sqrt{\N} \frac{\iswgt{i}}{\sum_{i' = 1}^{\Nis} \iswgt{i'}} \sum_{j = 1}^{\Nin} \frac{\inwgttd{i}{j}}{\sum_{j' = 1}^{\Nin} \inwgt{i}{j'}} h(\eparttd{i}{j}) \quad (i \in \intvect{1}{\Nis}, \N \in \nsetpos) \eqsp,
\end{equation}
and furnish the same with the filtration $\{Ê\filt \}_{\N \in \nsetpos}$ defined in \eqref{eq:def:filt}. We may now write
\begin{multline} \label{eq:decomposition:CLT:mutation}
    \sum_{i = 1}^{\Nis} \arr{i} = \sum_{i = 1}^{\Nis} \{ \arr{i} - \cexp{\arr{i}}{\filt} \} \\
    + \sqrt{\N} \sum_{i = 1}^{\Nis} \frac{\iswgt{i}}{\sum_{i' = 1}^{\Nis} \iswgt{i'}} \sum_{j = 1}^{\Nin} \frac{\inwgt{i}{j}}{\sum_{j' = 1}^{\Nin} \inwgt{i}{j'}} \op h(\epart{i}{j}) \eqsp,
\end{multline}
where, by assumption, since $\supn{\op h} \leq \supn{h} \supn[txt]{\op \1{\stsptd}}$, the second term on the right hand side satisfies the CLT
$$
    \sqrt{\N} \sum_{i = 1}^{\Nis} \frac{\iswgt{i}}{\sum_{i' = 1}^{\Nis} \iswgt{i'}} \sum_{j = 1}^{\Nin} \frac{\inwgt{i}{j}}{\sum_{j' = 1}^{\Nin} \inwgt{i}{j'}} \op h(\epart{i}{j}) \dlim \normdist(0, \asvar(\op h)) \eqsp.
$$
Our main challenge will be to handle the first term on the right hand side of \eqref{eq:decomposition:CLT:mutation}. Since all individuals of the mutated archipelago are conditionally independent given $\filt$, we notice that for all $i \in \intvect{1}{\Nis}$,
\begin{multline*}
    \cexp{\left( \sum_{j = 1}^{\Nin} \frac{\inwgttd{i}{j}}{\sum_{j' = 1}^{\Nin} \inwgt{i}{j'}} h(\eparttd{i}{j}) \right)^2}{\filt}
    = \sum_{j = 1}^{\Nin} \left( \frac{\inwgt{i}{j}}{\sum_{j' = 1}^{\Nin} \inwgt{i}{j'}}\right)^2 \prop(\der^2 h^2)(\epart{i}{j}) \\
    + \left( \sum_{j = 1}^{\Nin} \frac{\inwgt{i}{j}}{\sum_{j' = 1}^{\Nin} \inwgt{i}{j'}} \op h(\epart{i}{j}) \right)^2
    - \sum_{j = 1}^{\Nin} \left( \frac{\inwgt{i}{j}}{\sum_{j' = 1}^{\Nin} \inwgt{i}{j'}} \right)^2  (\op h)^2(\epart{i}{j}) \eqsp.
\end{multline*}
Using this, we turn to the variance and deduce the expression
\begin{multline} \label{eq:ass:B1:mut}
    \sum_{i = 1}^{\Nis} \left\{Ê\cexp{\arr[2]{i}}{\filt} - \E^2 \left[\arr{i} \mid \filt \right] \right\} \\
    = \N \sum_{i = 1}^{\Nis} \left( \frac{\iswgt{i}}{\sum_{i' = 1}^{\Nis} \iswgt{i'}} \right)^2 \sum_{j = 1}^{\Nin} \left( \frac{\inwgt{i}{j}}{\sum_{j' = 1}^{\Nin} \inwgt{i}{j'}}\right)^2 \prop(\der^2 h^2)(\epart{i}{j}) \nonumber \\
    - \N \sum_{i = 1}^{\Nis} \left( \frac{\iswgt{i}}{\sum_{i' = 1}^{\Nis} \iswgt{i'}} \right)^2 \sum_{j = 1}^{\Nin} \left( \frac{\inwgt{i}{j}}{\sum_{j' = 1}^{\Nin} \inwgt{i}{j'}} \right)^2  (\op h)^2(\epart{i}{j}) \eqsp, \nonumber
\end{multline}
which tends in probability to $\condmeas{2} \prop(\der^2 h^2) - \condmeas{2}(\op h)^2$ as the input archipelago satisfies Assumption~\AN{ass:AN:mut-1}. This implies that Assumption~\B{ass:DM:norm:asymptotic:variance} in \autoref{thm:DM:A-3} holds with the same limit. To verify the Lindeberg condition \B{ass:DM:norm:Lindeberg} in \autoref{thm:DM:A-3}, note that proceeding as in \eqref{eq:max:decomposition} yields
\begin{equation} \label{eq:ass:B2:mut}
    \max_{i \in \intvect{1}{\Nis}} |\arr{i}|
    \leq \mbd + \bdconst \bd^2 \quad (\N \in \nsetpos) \eqsp,
\end{equation}
where, for $\N \in \nsetpos$,
\[
    \begin{cases}
        \mbd = \displaystyle \sqrt{\N} \max_{i \in \intvect{1}{\Nis}} \frac{\iswgt{i}}{\sum_{i' = 1}^{\Nis} \iswgt{i'}} \max_{i \in \intvect{1}{\Nis}}  \left| \sum_{j = 1}^{\Nin} \frac{\inwgt{i}{j}}{\sum_{j' = 1}^{\Nin} \inwgt{i}{j'}} \op h(\epart{i}{j}) \right| \eqsp, \\
        \bdconst = \displaystyle \Nis \max_{i \in \intvect{1}{\Nis}} \frac{\iswgt{i}}{\sum_{i = 1}^{\Nis} \iswgt{i'}} \eqsp, \\
        \bd^2 = \displaystyle \sqrt{\frac{\Nin}{\Nis}}
        \max_{i \in \intvect{1}{\Nis}} \left| \frac{\sum_{j = 1}^{\Nin}\difftd{i}{j}}{\sum_{j' = 1}^{\Nin} \inwgt{i}{j'}} \right| \eqsp.
    \end{cases}
\]
Here $\{ÊV_N \}_{\N \in \nsetpos}$ is $\filt$-adapted and tends to zero in probability by \AN{def:tightness:island} and \autoref{lem:deviation:properly:normalized}, $\{ÊX_N \}_{\N \in \nsetpos}$ is $\filt$-adapted and tight by \AN{def:tightness:island}\!, and $\bd$ has, by \eqref{eq:conditional:hoeffding:mutation}, \D{def:deviation:inequality:individual}, and the extension of Hoeffding's inequality in \cite[Lemma~4]{douc:garivier:moulines:olsson:2010}, a tail of the form \eqref{eq:lemma:exp:tail} (with $\alpha = 2$). Thus, by \autoref{lem:tightness}, \B{ass:DM:norm:Lindeberg} holds true, and we may conclude the proof of \AN{ass:AN:CLT} using first \autoref{lemma:asind} and then Slutsky's lemma.


We turn to \AN{ass:AN:squared:islands} and decompose the quantity under consideration according to
\begin{multline} \label{eq:mutation:decomposition:AN2}
    \Nin \sum_{i = 1}^{\Nis} \frac{\iswgttd{i}}{\sum_{i' = 1}^{\Nis} \iswgttd{i'}} \left( \sum_{j = 1}^{\Nin} \frac{\inwgttd{i}{j}}{\sum_{j' = 1}^{\Nin} \inwgttd{i}{j'}} h(\eparttd{i}{j}) \right)^2 \\
    = \Nin \dfrac{\sum_{i'' = 1}^{\Nis} \iswgt{i''}}{\sum_{i' = 1}^{\Nis} \iswgttd{i'}} \sum_{i = 1}^{\Nis} \frac{\iswgt{i}}{\sum_{i'' = 1}^{\Nis} \iswgt{i''}} \left(\frac{\iswgt{i}}{\iswgttd{i}} - \frac{1}{\targ \op \1{\stsptd}}\right) \left( \sum_{j = 1}^{\Nin} \frac{\inwgttd{i}{j}}{\sum_{j' = 1}^{\Nin} \inwgt{i}{j'}} h(\eparttd{i}{j}) \right)^2 \\
    + \Nin \frac{1}{\targ \op \1{\stsptd}} \dfrac{\sum_{i'' = 1}^{\Nis} \iswgt{i''}}{\sum_{i' = 1}^{\Nis} \iswgttd{i'}} \sum_{i = 1}^{\Nis} \frac{\iswgt{i}}{\sum_{i'' = 1}^{\Nis} \iswgt{i''}} \left( \sum_{j = 1}^{\Nin} \frac{\inwgttd{i}{j}}{\sum_{j' = 1}^{\Nin} \inwgt{i}{j'}} h(\eparttd{i}{j}) \right)^2 \eqsp.
\end{multline}
The convergence in probability of the second term on the right hand side will now to be established using \autoref{thm:DM:A-1}. For this purpose, define the triangular array
\begin{equation} \label{eq:def:array:AN2}
    \arr{i} \eqdef \Nin \frac{\iswgt{i}}{\sum_{i' = 1}^{\Nis} \iswgt{i'}} \left( \sum_{j = 1}^{\Nin} \frac{\inwgttd{i}{j}}{\sum_{j' = 1}^{\Nin} \inwgt{i}{j'}} h(\eparttd{i}{j}) \right)^2 \quad (i \in \intvect{1}{\Nis}, \N \in \nsetpos) \eqsp,
\end{equation}
and associate the same with the $\sigma$-field $\filt$ defined in \eqref{eq:def:filt}. We now apply the previous machinery and study the convergence of the series
\begin{multline*}
    \sum_{i = 1}^{\Nis} \cexp{\arr{i}}{\filt} = \Nin \sum_{i = 1}^{\Nis} \frac{\iswgt{i}}{\sum_{i' = 1}^{\Nis} \iswgt{i'}}
    \sum_{j = 1}^{\Nin} \left(\frac{\inwgt{i}{j}}{\sum_{j' = 1}^{\Nin} \inwgt{i}{j'}} \right)^2 \prop(\der^2 h^2) (\epart{i}{j}) \\
    + \Nin \sum_{i = 1}^{\Nis} \frac{\iswgt{i}}{\sum_{i' = 1}^{\Nis} \iswgt{i'}}
    \left( \sum_{j = 1}^{\Nin} \frac{\inwgt{i}{j}}{\sum_{j' = 1}^{\Nin} \inwgt{i}{j'}} \op h (\epart{i}{j}) \right)^2 \\
    - \Nin \sum_{i = 1}^{\Nis} \frac{\iswgt{i}}{\sum_{i' = 1}^{\Nis} \iswgt{i'}}
    \sum_{j = 1}^{\Nin} \left( \frac{\inwgt{i}{j}}{\sum_{j' = 1}^{\Nin} \inwgt{i}{j'}} \right)^2 (\op h)^2 (\epart{i}{j}) \eqsp,
\end{multline*}
which tends in probability to $\asvarANone(\op h) + \condmeas{3} \prop(\der^2h^2) -  \condmeas{3}(\op^2 h)$ as the ancestor archipelago satisfies \AN{ass:AN:squared:islands} and \AN{ass:AN:mut-2}. Thus, the condition  \A{ass:DM:cons:tightness} in \autoref{thm:DM:A-1} is checked. In addition, \A{ass:DM:cons:Lindeberg} is checked using \autoref{lem:tightness}, as
\begin{equation*}
	\max_{i \in \intvect{1}{\Nis}} |\arr{i}| \leq \mbd + \bdconst \bd^2 \quad (\N \in \nsetpos) \eqsp,
\end{equation*}
where for $\N \in \nsetpos$, $\mbd = 0$ and
$$
    \begin{cases}
        \bdconst = \displaystyle \Nis \max_{i \in \intvect{1}{\Nis}} \frac{\iswgt{i}}{\sum_{i = 1}^{\Nis} \iswgt{i'}} \eqsp, \\
        \bd = \displaystyle \sqrt{\frac{\Nin}{\Nis}} \max_{i \in \intvect{1}{\Nis}} \left| \sum_{j = 1}^{\Nin} \frac{\inwgttd{i}{j}}{\sum_{j' = 1}^{\Nin} \inwgt{i}{j'}} h(\eparttd{i}{j}) \right| \eqsp,
    \end{cases}
$$
where $\{Ê\bdconst \}_{\N \in \nsetpos}$ is $\filt$-adapted and tight by \AN{def:tightness:island} and each
$Y_N$ has, by \cite[Lemma~4]{douc:garivier:moulines:olsson:2010}, since the input and output archipelagos satisfy \D{def:deviation:inequality:individual}\!, a tail of the form \eqref{eq:lemma:exp:tail} (with $\alpha = 1$). Thus, \A{ass:DM:cons:tightness} holds true, and we may conclude that the series $\sum_{i = 1}^{\Nis} \arr{i}$ and $\sum_{i = 1}^{\Nis} \cexp{\arr{i}}{\filt}$ tend to the same limit in probability.

We turn to the first term of \eqref{eq:mutation:decomposition:AN2} and show that this tends to zero in probability. Indeed, note that the absolute value of the same is, up to the factor $\sum_{i' = 1}^{\Nis} \iswgt{i'} / \sum_{i'' = 1}^{\Nis} \iswgttd{i''}$, which converges in probability by \eqref{consistency:mutation:1}, bounded by
$$
    \max_{i' \in \intvect{1}{\Nis}} \left| \frac{\iswgt{i'}}{\iswgttd{i'}} - \frac{1}{\targ \op \1{\stsptd}} \right| \Nin \sum_{i = 1}^{\Nis} \frac{\iswgt{i}}{\sum_{i'' = 1}^{\Nis} \iswgt{i''}} \left( \sum_{j = 1}^{\Nin} \frac{\inwgttd{i}{j}}{\sum_{j' = 1}^{\Nin} \inwgt{i}{j'}} h(\eparttd{i}{j}) \right)^2 \eqsp,
$$
where the first factor vanishes in probability by \autoref{lem:Hoeffding:ratio:island:weights} and \autoref{lem:inverse:proba}, and the convergence of the second factor was established above. This establishes \AN{ass:AN:squared:islands}.

To check Assumption~\AN{def:tightness:island}, consider the bound
\begin{multline} \label{mutation:deviation:decompo}
    \Nis \max_{i \in \intvect{1}{\Nis}} \dfrac{\iswgttd{i}}{\sum_{i' = 1}^{\Nis} \iswgttd{i'}} \\Ê\leq \supn{\der} \Nis \max_{i \in \intvect{1}{\Nis}} \dfrac{\iswgt{i}}{\sum_{i' = 1}^{\Nis} \iswgt{i'}} \left( \left|Ê\dfrac{\sum_{i' = 1}^{\Nis} \iswgt{i'}}{\sum_{i'' = 1}^{\Nis} \iswgttd{i''}} - \frac{1}{\targ \op \1{\stsptd}} \right|Ê+\frac{1}{\targ \op \1{\stsptd}} \right) \eqsp,
\end{multline}
where the second factor on the right hand side is tight as the ancestor archipelago is assumed to satisfy \AN{def:tightness:island}. Moreover, as the third factor tends to $1 / \targ \op \1{\stsptd}$ in probability by \eqref{consistency:mutation:1} we conclude that \AN{def:tightness:island} holds true also for the output.

In order to check \AN{ass:AN:squared:iswgts}, pick $h \in \bmf{\stfd}$ and decompose the quantity of interest according to
\begin{multline} \label{eq:AN3:key:decomposition}
    \Nis \sum_{i = 1}^{\Nis} \left( \frac{\iswgttd{i}}{\sum_{i' = 1}^{\Nis} \iswgttd{i'}} \right)^2 \sum_{j = 1}^{\Nin} \frac{\inwgttd{i}{j}}{\sum_{j' = 1}^{\Nin} \inwgttd{i}{j'}} h(\eparttd{i}{j}) \\
    = \Nis \left(\dfrac{\sum_{i = 1}^{\Nis} \iswgt{i}}{\sum_{i' = 1}^{\Nis} \iswgttd{i'}} \right)^2
    \sum_{i = 1}^{\Nis} \left( \frac{\iswgt{i}}{\sum_{i' = 1}^{\Nis} \iswgt{i'}} \right)^2 \left( \dfrac{\iswgttd{i}}{\iswgt{i}} - \targ \op \1{\stsptd}\right) \sum_{j = 1}^{\Nin} \frac{\inwgttd{i}{j}}{\sum_{j' = 1}^{\Nin} \inwgt{i}{j'}} h(\eparttd{i}{j}) \\
    + \Nis \targ \op \1{\stsptd} \left(\dfrac{\sum_{i = 1}^{\Nis} \iswgt{i}}{\sum_{i' = 1}^{\Nis} \iswgttd{i'}} \right)^2 \sum_{i = 1}^{\Nis} \left( \frac{\iswgt{i}}{\sum_{i' = 1}^{\Nis} \iswgt{i'}} \right)^2 \sum_{j = 1}^{\Nin} \frac{\inwgttd{i}{j}}{\sum_{j' = 1}^{\Nin} \inwgt{i}{j'}} h(\eparttd{i}{j}) \eqsp.
\end{multline}
In order to handle the second term of this decomposition, we apply \autoref{thm:DM:A-1} to the array
$$
    \arr{i} \eqdef \Nis \left( \frac{\iswgt{i}}{\sum_{i' = 1}^{\Nis} \iswgt{i'}} \right)^2 \sum_{j = 1}^{\Nin} \frac{\inwgttd{i}{j}}{\sum_{j' = 1}^{\Nin} \inwgt{i}{j'}} h(\eparttd{i}{j}) \quad (i \in \intvect{1}{\Nis}, \N \in \nsetpos)
$$
furnished with the filtration $\{Ê\filt \}_{\N \in \nsetpos}$ given by \autoref{eq:def:filt}. First, we observe that
$$
    \sum_{i = 1}^{\Nis} \cexp{\arr{i}}{\filt} = \Nis \sum_{i = 1}^{\Nis} \left( \frac{\iswgt{i}}{\sum_{i' = 1}^{\Nis} \iswgt{i'}} \right)^2 \sum_{j = 1}^{\Nin} \frac{\inwgt{i}{j}}{\sum_{j' = 1}^{\Nin} \inwgt{i}{j'}} \op h(\epart{i}{j}) \plim \condmeas{1} \op h \eqsp,
$$
as the ancestor archipelago satisfies Assumption~\AN{ass:AN:squared:iswgts}. Thus, the condition \A{ass:DM:cons:tightness} in \autoref{thm:DM:A-1} holds true. In addition, as
$$
    \max_{i \in \intvect{1}{\Nis}} |\arr{i}| \leq \supn{\der} \supn{h} \left( \sqrt{\Nis} \max_{i \in \intvect{1}{\Nis}} \dfrac{\iswgt{i}}{\sum_{i' = 1}^{\Nis} \iswgt{i'}} \right)^2 \eqsp,
$$
also \A{ass:DM:cons:Lindeberg} is verified by \autoref{lem:tightness} (applied with $X_N = Y_N = 0$) and the fact that the input archipelago satisfies \AN{def:tightness:island}. Consequently, the also series $ \sum_{i = 1}^{\Nis} \arr{i}$ tends in probability to the limit $\condmeas{1} \op h$, which, by \autoref{consistency:mutation:1}, implies that the second term of \eqref{eq:AN3:key:decomposition} tends to $\condmeas{1} \op h / \targ \op \1{\stsptd}$. To treat the first term of \eqref{eq:AN3:key:decomposition}, note that this is, up to the factor $\sum_{i' = 1}^{\Nis} \iswgt{i'} / \sum_{i'' = 1}^{\Nis} \iswgttd{i''}$, which converges in probability by \autoref{consistency:mutation:1}, bounded by
$$
    \Nis \max_{i \in \intvect{1}{\Nis}} \left| \dfrac{\iswgttd{i}}{\iswgt{i}} - \targ \op \1{\stsptd} \right|
    \sum_{i = 1}^{\Nis} \left( \frac{\iswgt{i}}{\sum_{i' = 1}^{\Nis} \iswgt{i'}} \right)^2 \sum_{j = 1}^{\Nin} \frac{\inwgttd{i}{j}}{\sum_{j' = 1}^{\Nin} \inwgt{i}{j'}} |h|(\eparttd{i}{j}) \eqsp,
$$
which tends to zero in probability by the previous computation and \autoref{lem:Hoeffding:ratio:island:weights}. This completes the proof of \AN{ass:AN:squared:iswgts}.


In order to prove \AN{ass:AN:mut-1}, introduce the array
$$
    \arr{i} \eqdef \N \left( \frac{\iswgt{i}}{\sum_{i' = 1}^{\Nis} \iswgt{i'}} \right)^2 \sum_{j = 1}^{\Nin} \left( \frac{\inwgttd{i}{j}}{\sum_{j' = 1}^{\Nin} \inwgt{i}{j'}} \right)^2 h(\eparttd{i}{j}) \quad (i \in \intvect{1}{\Nis}, \N \in \nsetpos)
$$
and equip the same with usual filtration $\{Ê\filt \}_{\N \in \nsetpos}$. With this notation, the quantity of interest in \AN{ass:AN:mut-1} can be written as
$$
    \left( \dfrac{\sum_{i = 1}^{\Nis} \iswgt{i}}{\sum_{i' = 1}^{\Nis} \iswgttd{i'}} \right)^2 \sum_{i = 1}^{\Nis} \arr{i} \eqsp,
$$
where the first factor tends to $1 / (\targ \op \1{\stsptd})^2$ by \autoref{lem:Hoeffding:ratio:island:weights}. Thus, it is enough to show that the second factor tends to $\condmeas{2} \prop(\der^2 h)$ in probability, and for this purpose we use \autoref{thm:DM:A-1}. As the ancestor archipelago satisfies \AN{ass:AN:mut-1}, the quantity
$$
    \sum_{i = 1}^{\Nis} \cexp{\arr{i}}{\filt} = \N \sum_{i = 1}^{\Nis} \left( \frac{\iswgt{i}}{\sum_{i' = 1}^{\Nis} \iswgt{i'}} \right)^2 \sum_{j = 1}^{\Nin} \left( \frac{\inwgt{i}{j}}{\sum_{j' = 1}^{\Nin} \inwgt{i}{j'}} \right)^2 R(\der^2 h)(\epart{i}{j})
$$
tends in probability to the desired limit $\condmeas{2} \prop(\der^2 h)$.
This implies the condition \A{ass:DM:cons:tightness} in \autoref{thm:DM:A-1}. In addition, \A{ass:DM:cons:Lindeberg} is checked using \autoref{lem:tightness}; indeed,
$$
    \max_{i \in \intvect{1}{\Nis}} |\arr{i}| \leq \inwgtbd \supn{\der}^2 \supn{h} \left( \sqrt{\Nis} \max_{i \in \intvect{1}{\Nis}} \dfrac{\iswgt{i}}{\sum_{i' = 1}^{\Nis} \iswgt{i'}} \right)^2 \max_{i \in \intvect{1}{\Nis}} \dfrac{\Nin}{\sum_{j = 1}^{\Nin} \inwgt{i}{j}} \eqsp,
$$
where the the right hand side is adapted to $\{Ê\filt \}_{\N \in \nsetpos}$ and vanishes in probability by \autoref{lem:inverse:proba}, as the ancestor archipelago satisfies \AN{def:tightness:island} and \D{def:deviation:inequality:individual}. This shows \AN{ass:AN:mut-1}.


Finally, in order to prove \AN{ass:AN:mut-2} we decompose the quantity of interest according to
\begin{multline} \label{eq:AN5:key:decomposition}
    \Nin \sum_{i = 1}^{\Nis} \frac{\iswgttd{i}}{\sum_{i' = 1}^{\Nis} \iswgttd{i'}} \sum_{j = 1}^{\Nin} \left( \frac{\inwgttd{i}{j}}{\sum_{j' = 1}^{\Nin} \inwgttd{i}{j'}} \right)^2 h(\eparttd{i}{j}) = \\
    \Nin \left( \dfrac{\sum_{i' = 1}^{\Nis} \iswgt{i'}}{\sum_{i'' = 1}^{\Nis} \iswgttd{i''}} \right)
    \sum_{i = 1}^{\Nis} \frac{\iswgt{i}}{\sum_{i' = 1}^{\Nis} \iswgt{i'}} \left( \frac{\iswgt{i}}{\iswgttd{i}} - \frac{1}{\targ \op \1{\stsptd}} \right) \sum_{j = 1}^{\Nin} \left( \frac{\inwgttd{i}{j}}{\sum_{j' = 1}^{\Nin} \inwgt{i}{j'}} \right)^2 h(\eparttd{i}{j}) \\
    +  \Nin \frac{1}{\targ \op \1{\stsptd}} \left( \dfrac{\sum_{i' = 1}^{\Nis} \iswgt{i'}}{\sum_{i'' = 1}^{\Nis} \iswgttd{i''}} \right) \sum_{i = 1}^{\Nis} \frac{\iswgt{i}}{\sum_{i' = 1}^{\Nis} \iswgt{i'}}  \sum_{j = 1}^{\Nin} \left( \frac{\inwgttd{i}{j}}{\sum_{j' = 1}^{\Nin} \inwgt{i}{j'}} \right)^2 h(\eparttd{i}{j}) \eqsp.
\end{multline}
To deal with the second term we reapply \autoref{thm:DM:A-1}, this time to the array
$$
    \arr{i} = \Nin \frac{\iswgt{i}}{\sum_{i' = 1}^{\Nis} \iswgt{i'}}  \sum_{j = 1}^{\Nin} \left( \frac{\inwgttd{i}{j}}{\sum_{j' = 1}^{\Nin} \inwgt{i}{j'}} \right)^2 h(\eparttd{i}{j}) \quad (i \in \intvect{1}{\Nis}, \N \in \nsetpos) \eqsp.
$$
As usual, we study first the series
\begin{multline*}
    \sum_{i = 1}^{\Nis} \cexp{\arr{i}}{\filt} = \Nin \sum_{i = 1}^{\Nis} \frac{\iswgt{i}}{\sum_{i' = 1}^{\Nis} \iswgt{i'}}  \sum_{j = 1}^{\Nin} \left( \frac{\inwgt{i}{j}}{\sum_{j' = 1}^{\Nin} \inwgt{i}{j'}} \right)^2 \prop(\der^2 h)(\epart{i}{j}) \\
    \plim \condmeas{3} \prop(\der^2 h) \eqsp,
\end{multline*}
where the limit is a consequence of the fact that the ancestor archipelago satisfies \AN{ass:AN:mut-2}. \\
This establishes \A{ass:DM:cons:tightness} in \autoref{thm:DM:A-1}. To check also \A{ass:DM:cons:Lindeberg}, consider the upper bound
$$
    \max_{i \in \intvect{1}{\Nis}} |\arr{i}| \leq \inwgtbd \supn{\der}^2 \supn{h} \max_{i \in \intvect{1}{\Nis}} \dfrac{\iswgt{i}}{\sum_{i' = 1}^{\Nis} \iswgt{i'}} \max_{i \in \intvect{1}{\Nis}} \dfrac{\Nin}{\sum_{j = 1}^{\Nin} \inwgt{i}{j}} \eqsp,
$$
which is $\{Ê\filt \}_{\N \in \nsetpos}$-adapted and tends to zero in probability by \autoref{lem:inverse:proba}, as the ancestor archipelago satisfies \C{def:consistency:max} and \D{def:deviation:inequality:individual}. Now, \autoref{thm:DM:A-1} guarantees that $\sum_{i = 1}^{\Nis} \arr{i}$ Êand $\sum_{i = 1}^{\Nis} \cexp{\arr{i}}{\filt}$Ê have the same limit $\condmeas{3} \prop(\der^2 h)$ in probability. Moreover, note that the second term of \eqref{eq:AN5:key:decomposition} is, up to the factor $\sum_{i' = 1}^{\Nis} \iswgt{i'} / \sum_{i'' = 1}^{\Nis} \iswgttd{i''}$, bounded by
\begin{multline*}
    \Nin \max_{i \in \intvect{1}{\Nis}} \left| \frac{\iswgt{i}}{\iswgttd{i}} - \frac{1}{\targ \op \1{\stsptd}} \right| \sum_{i = 1}^{\Nis} \frac{\iswgt{i}}{\sum_{i' = 1}^{\Nis} \iswgt{i'}} \sum_{j = 1}^{\Nin} \left( \frac{\inwgttd{i}{j}}{\sum_{j' = 1}^{\Nin} \inwgt{i}{j'}} \right)^2 |h|(\eparttd{i}{j}) \eqsp,
\end{multline*}
which tends to zero in probability by \eqref{convergence:proba:inverse} and \autoref{lem:inverse:proba}. Thus, also  \AN{ass:AN:mut-2} holds true.


\subsection{Proof of \autoref{cor:bound:of:the:variance}}
\label{proof_bound:var}
First, a prefatory lemma.
\begin{lemma} \label{lem:prefatory:stability}
Assume \MG{ass:strong:mixing:condition}. Then for all $(\ell, n) \in \nset^2$ such that $\ell \leq n$ and $h \in \bmf{\stfd}$,
\begin{equation} \label{proof:stability:1}
\supn{\dfrac{ \op_{\ell} \cdots \op_{n-1}(h-\targ[n]h)}{\targ[\ell] \op_{\ell} \cdots  \op_{n-1} \1{\stsp}}} \leq \dfrac{\rho^{n-\ell}}{1-\rho} \osc{h} \eqsp,
\end{equation}
where $\rho$ is defined in \eqref{eq:def:rho}.
\end{lemma}
\begin{proof}
For $x \in \stsp$, write
\begin{align}
    \lefteqn{\dfrac{ \op_{\ell} \cdots \op_{n-1}(h-\targ[n]h)(x)}{\targ[\ell] \op_{\ell} \cdots  \op_{n-1} \1{\stsp}}} \nonumber \\
    &= \dfrac{ \op_{\ell} \cdots \op_{n-1} h(x) }{\targ[\ell] \op_{\ell} \cdots \op_{n-1} \1{\stsp}}
    - \dfrac{\op_{\ell} \cdots \op_{n-1} \1{\stsp}(x) }{\targ[\ell] \op[\ell] \cdots \op_{n-1} \1{\stsp}} \targ[n]h \nonumber \\
    &= \dfrac{\op_{\ell} \cdots \op_{n-1} \1{\stsp}(x) }{\targ[\ell] \op_{\ell} \cdots \op_{n-1} \1{\stsp}} \label{eq:stab:comp-1}
    \left[ \dfrac{\op_{\ell} \cdots \op_{n-1} h(x)}{ \op_{\ell} \cdots \op_{n-1} \1{\stsp}(x)}
    - \dfrac{\targ[\ell] \op_{\ell} \cdots \op_{n-1}h}{\targ[\ell] \op_{\ell} \cdots \op_{n-1} \1{\stsp}}
    \right] \eqsp.
\end{align}
Note that since $\op_{\ell} \dots \op_{n-1} h(x) = \delta_x \op_{\ell} \dots \op_{n-1} h$ (where $\delta_x$ denotes the Dirac mass located at $x$) we may, under \MG{ass:strong:mixing:condition}, apply \cite[Proposition~10.20]{douc:moulines:stoffer:2014}, yielding the uniform bound
\begin{equation} \label{eq:stab:comp-2}
    \left| \dfrac{\delta_x \op[\ell] \cdots \op[n] h}{\delta_x \op[\ell] \cdots \op[n] \1{\stsp}}
    - \dfrac{\targ[\ell] \op[\ell] \cdots \op[n] h}{\targ[\ell] \op[\ell] \cdots \op[n] \1{\stsp}} \right|
    \leq \rho^{n - \ell} \osc{h} \quad (x \in \stsp) \eqsp.
\end{equation}
Combining \eqref{eq:stab:comp-1} and \eqref{eq:stab:comp-2} with the uniform bound
$$
\dfrac{\op_{\ell} \cdots \op_{n - 1} \1{\stsp}(x) }{\targ[\ell] \op_{\ell} \cdots \op_{n - 1} \1{\stsp}} \leq \dfrac{\ubmarkov}{\lbmarkov} = \dfrac{1}{1 - \rho} \quad (x \in \stsp)
$$
yields (\ref{proof:stability:1}).
\end{proof}

For arbitrary $(\ell, n) \in \nset^2$ with $\ell \leq n$, combining the identity
$$
    \targ[\ell] \op[\ell] \cdots \op[n - 1] \1{\stsp} = \targ[\ell] \op[\ell] \1{\stsp} \times \targ[\ell + 1] \op[\ell + 1] \cdots \op[n - 1] \1{\stsp}
$$
with the bound $\targ[\ell] \op[\ell - 1] \1{\stsp} \geq \lbop$Ê(the latter implied by  \MG{ass:strong:mixing:condition}(iii)) yields
\begin{equation*}
    \dfrac{\targ[\ell] \prop[\ell] \{Ê\der_{\ell}^2 \op[\ell + 1] \cdots \op[n - 1](h - \targ[n] h)^2 \}}{(\targ[\ell] \op[\ell] \cdots \op[n - 1] \1{\stsp})^2} \leq \lbop^{-1} \supn{\der_{\ell}} \supn{\dfrac{\op[\ell + 1] \cdots \op[n - 1](h - \targ[n] h)}{\targ[\ell + 1] \op[\ell + 1] \cdots \op[n - 1] \1{\stsp}}}^2 \eqsp.
\end{equation*}
Now, using \autoref{lem:prefatory:stability} we obtain
$$
    \dfrac{\targ[\ell] \prop[\ell] \{Ê\der_{\ell}^2 \op[\ell + 1] \cdots \op[n - 1](h - \targ[n] h)^2 \}}{(\targ[\ell] \op[\ell] \cdots \op[n - 1] \1{\stsp})^2} \leq w_+ \dfrac{ \rho^{2(n - \ell - 1)} }{(1 - \rho)^2 \lbop} \osc[2]{h} \eqsp.
$$
Finally, the proof of \autoref{cor:bound:of:the:variance} is concluded by summing up the terms.

\subsection{Proof of \autoref{cor:non-compact:case}}
\label{sec:proof:non-compact}

Since the $\ell$th terms of the asymptotic variances \eqref{eq:as:var:B2} and \eqref{eq:as:var:bootstrap} differ only by the multiplicative constant $n - \ell$, the proof follows straightforwardly by direct inspection of the proof of the analogous result for the standard bootstrap particle filter given in \cite[Theorem~11]{douc:moulines:olsson:2014} (which in turn is an adaptation of the proof of Theorem~10 in the same paper, providing the analogous result for the particle predictor). More specifically, the result is obtained by
\begin{itemize}
    \item embedding, using a trivial extension of Kolmogorov's extension theorem, the stationary sequence $\{ \randpar{p} \}_{p \in \nset}$ into a stationary process $\{ \randpar{p} \}_{p \in \zset}$ with doubly infinite time.
    \item bounding, for a given $n \in \nset$, using \cite[Equations~34--35]{douc:moulines:olsson:2014}, $\asvar[n] \langle \randpar{0:n} \rangle(h)$ by a quantity of form $c \sum_{\ell = 0}^n (n - \ell) \Delta_{n - \ell} \langle h \rangle (\randpar{-\infty:\ell - 1}, \randpar{\ell:n})$, where $c$ is a $\prob$-a.s. finite random variable and Êeach function $\Delta_m\langle h \rangle : \Zset^\infty \rightarrow \rset_+$, $m \in \nset$, is of the same type as the terms of the sum in \cite[Equation~35]{douc:moulines:olsson:2014}.
    \item using the stationarity to conclude that $\sum_{\ell = 0}^n (n - \ell) \Delta_{n - \ell} \langle h \rangle (\randpar{-\infty:\ell - 1}, \randpar{\ell:n})$ has the same distribution as $\sum_{\ell = 0}^n  \ell \Delta_\ell \langle h \rangle (\randpar{-\infty:- \ell - 1}, \randpar{- \ell:0})$.
    \item bounding, using \cite[Equation~39]{douc:moulines:olsson:2014}, each term of the sum as $\Delta_\ell \langle h \rangle (\randpar{-\infty:- \ell - 1}, \randpar{- \ell:0}) \leq d \beta^\ell$, $\prob$-a.s., where $d$ is a $\prob$-a.s. finite random variable and $\beta < 1$ is a constant. This shows that $\asvar[n] \langle \randpar{0:n} \rangle(h) \leq cd\sum_{\ell = 0}^\infty \ell \beta^\ell < \infty$, $\prob$-a.s., which concludes the proof.
\end{itemize}

\section*{Acknowledgment} The authors thank the editor and the anonymous referee for insightful comments that improved the presentation of the paper.


\appendix

\section{Technical results}
\label{appenbdix:technical:lemmas}

We first recall two results, obtained in \cite{douc:moulines:2008}, which are essential for the developments of the present paper.

\begin{theorem}[\cite{douc:moulines:2008}] \label{thm:DM:A-1}
    Let $(\Omega, \mathcal{A}, \{\genfd[\N] \}_{\N \in \nn}, \prob)$ be a filtered probability space. In addition, let, for a given sequence $\{\M[\N] \}_{\N \in \nn}$ of integers such that $\M[\N] \rightarrow \infty$ as $\N \rightarrow \infty$, $\{\arr{i} \}_{i = 1}^{\M[\N]}$, $\N \in \nn$, be a triangular array of random variables on $(\Omega, \mathcal{A}, \prob)$ such that for all $\N \in \nn$, the variables $\{Ê\arr{i} \}_{i = 1}^{\M[\N]}$ are conditionally independent given $\genfd[\N]$ with $\cexp[txt]{|\arr{i}|}{\genfd[\N]} < \infty$, $\prob\mbox{-a.s.}$, for all $i \in \intvect{1}{\M[\N]}$. Moreover, assume that
    \begin{hypA} \label{ass:DM:cons:tightness}
        $
        		\displaystyle \quad \lim_{\lambda \rightarrow \infty} \sup_{\N \in \nn} \prob \left( \sum_{i = 1}^{\M[\N]} \cexp[txt]{|\arr{i}|}{\genfd[\N]} \geq \lambda \right) = 0 \eqsp.
        $
    \end{hypA}
    \begin{hypA} \label{ass:DM:cons:Lindeberg}
        For all $\varepsilon > 0$, as $\N \rightarrow \infty$,
        $$
        		\sum_{i = 1}^{\M[\N]} \cexp{|\arr{i}| \1{\{Ê|\arr{i}| \geq \varepsilon \}}}{\genfd[\N]} \plim 0 \eqsp.
        $$
    \end{hypA}
    Then, as $\N \rightarrow \infty$,
    $$
        \max_{m \in \intvect{1}{\M[\N]}} \left|Ê\sum_{i = 1}^m \arr{i} - \sum_{i = 1}^m \cexp{\arr{i}}{\genfd[\N]} \right| \plim 0 \eqsp.
    $$
\end{theorem}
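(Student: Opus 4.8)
The plan is to exploit that, conditionally on $\genfd[\N]$, the centered partial sums $S_m \eqdef \sum_{i = 1}^m \{ \arr{i} - \cexp{\arr{i}}{\genfd[\N]} \}$ form a martingale in $m$: the $\arr{i}$ being conditionally independent given $\genfd[\N]$, each increment has zero mean given $\genfd[\N] \vee \sigma(\arr{1}, \ldots, \arr{i - 1})$. The whole argument then reduces to a single truncation at a level $a > 0$ to be fixed last, which separates the ``bulk'' (bounded) increments, controlled by a conditional $L^2$ maximal inequality, from the ``large'' increments, controlled by the Lindeberg-type hypothesis \A{ass:DM:cons:Lindeberg}.

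First I would split each increment, which is licit since $\cexp{|\arr{i}|}{\genfd[\N]} < \infty$, as
$$
\arr{i} - \cexp{\arr{i}}{\genfd[\N]} = \big( \arr{i} \1{\{ |\arr{i}| < a \}} - \cexp{\arr{i} \1{\{ |\arr{i}| < a \}}}{\genfd[\N]} \big) + \big( \arr{i} \1{\{ |\arr{i}| \geq a \}} - \cexp{\arr{i} \1{\{ |\arr{i}| \geq a \}}}{\genfd[\N]} \big) \eqsp.
$$
The running maximum of the ``tail'' partial sums (the second bracket) is bounded crudely by $\sum_{i = 1}^{\M[\N]} |\arr{i}| \1{\{ |\arr{i}| \geq a \}} + \sum_{i = 1}^{\M[\N]} \cexp{|\arr{i}| \1{\{ |\arr{i}| \geq a \}}}{\genfd[\N]}$. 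The second sum tends to zero in probability by \A{ass:DM:cons:Lindeberg}; since it is the conditional expectation of the first sum, a conditional Markov inequality followed by bounded convergence shows that the first sum vanishes in probability as well. Thus for every fixed $a$ the tail contribution is negligible.

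Next, for the bulk part (the first bracket) the increments are conditionally independent, conditionally centered, and bounded in absolute value by $2a$, so their partial sums form, conditionally on $\genfd[\N]$, an $L^2$ martingale. Doob's maximal inequality combined with conditional orthogonality bounds $\cexp{\max_m (\,\cdot\,)^2}{\genfd[\N]}$ by $4 \sum_{i = 1}^{\M[\N]} \cexp{\arr{i}^2 \1{\{ |\arr{i}| < a \}}}{\genfd[\N]}$, which, using $\arr{i}^2 \1{\{ |\arr{i}| < a \}} \leq a |\arr{i}|$, is at most $4 a \sum_{i = 1}^{\M[\N]} \cexp{|\arr{i}|}{\genfd[\N]}$. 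The decisive step is to couple the truncation level with the tightness hypothesis \A{ass:DM:cons:tightness}: given tolerances $\eta, \delta > 0$, first use \A{ass:DM:cons:tightness} to pick $\lambda$ with $\sup_\N \prob ( \sum_i \cexp{|\arr{i}|}{\genfd[\N]} \geq \lambda ) < \delta$, and only then choose $a$ so small that $16 a \lambda / \eta^2 < \delta$. Conditioning on the $\genfd[\N]$-measurable event $\{ \sum_i \cexp{|\arr{i}|}{\genfd[\N]} < \lambda \}$ before applying Doob and Chebyshev gives $\prob ( \max_m | S_m^{\mathrm{bulk}} | \geq \eta / 2 ) < 2 \delta$ uniformly in $\N$.

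Combining the two parts, $\prob ( \max_m | S_m | \geq \eta ) < 3 \delta$ for all large $\N$, and since $\eta, \delta$ are arbitrary the assertion follows. I expect the main obstacle to be exactly this coupling: the conditional second moment $\sum_i \cexp{\arr{i}^2 \1{\{ |\arr{i}| < a \}}}{\genfd[\N]}$ entering Doob's bound is itself random, so one cannot simply let $a \to 0$. Instead one must freeze the overall scale through the \emph{uniform} tightness of \A{ass:DM:cons:tightness}, selecting $\lambda$ (hence the good event) before shrinking $a$; getting this order of quantifiers right, so that the bulk estimate is uniform in $\N$ while the tail estimate is only asymptotic, is the crux of the proof.
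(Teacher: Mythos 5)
Your proof is correct, and since the paper states this theorem without proof---it is imported verbatim from Douc and Moulines (2008)---the right comparison is with the source, where the argument is essentially the one you give: truncate at a fixed level $a$, kill the tail partial sums via the Lindeberg-type hypothesis (\textbf{A2}) together with a conditional Markov inequality and bounded convergence, and control the bounded, conditionally centered bulk by a conditional Kolmogorov--Doob $L^2$ maximal inequality, using $U_{N}^2(i)\mathbbm{1}_{\{|U_{N}(i)|<a\}} \leq a\,|U_{N}(i)|$ to reduce the random second-moment sum to $\sum_i \mathbb{E}[\,|U_{N}(i)|\mid \mathcal{G}_{N}]$, which the tightness hypothesis (\textbf{A1}) freezes on a good event of uniformly high probability. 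You also identify and resolve the one genuinely delicate point correctly: the truncation level $a$ must be chosen \emph{after} the tightness level $\lambda$, so that the bulk bound is uniform in $N$ while only the tail estimate is asymptotic.
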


\begin{theorem}[{\cite{douc:moulines:2008}}] \label{thm:DM:A-3}
    Let the assumptions of Theorem~\ref{thm:DM:A-1} hold with $\cexp[txt]{\arr[2]{i}}{\genfd[\N]} < \infty$, $\prob\mbox{-a.s.}$, for all $i \in \intvect{1}{\M[\N]}$, and
    \A{ass:DM:cons:tightness} and \A{ass:DM:cons:Lindeberg} replaced by:
    \begin{hypB} \label{ass:DM:norm:asymptotic:variance}
        For some constant $\varsigma^2 > 0$, as $\N \rightarrow \infty$,
        $$
        		\sum_{i = 1}^{\M[\N]} \left( \cexp[txt]{\arr[2]{i}}{\genfd[\N]} - \E^2 \left[ \arr{i} \mid \genfd[\N] \right] \right) \plim \varsigma^2 \eqsp.
        $$
    \end{hypB}
    \begin{hypB} \label{ass:DM:norm:Lindeberg}
        For all $\varepsilon > 0$, as $\N \rightarrow \infty$,
        $$
        		\sum_{i = 1}^{\M[\N]} \cexp{\arr[2]{i} \1{\{Ê|\arr{i}| \geq \varepsilon \}}}{\genfd[\N] } \plim 0 \eqsp.
        $$
    \end{hypB}
    Then, for all $u \in \rset$, as $\N \rightarrow \infty$,
    $$
        \cexp{ \exp \left( \operatorname{i} u \sum_{i = 1}^{\M[\N]} \left\{ \arr{i} - \cexp[txt]{\arr{i}}{\genfd[\N]} \right\} \right) }{\genfd[\N]}  \plim \exp \left( -u^2 \varsigma^2 /2 \right) \eqsp.
    $$
\end{theorem}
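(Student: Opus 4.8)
Theorem~\ref{thm:DM:A-3} is a conditional Lindeberg--Feller central limit theorem for a conditionally independent triangular array, so the plan is to adapt the classical characteristic-function (Lindeberg) argument to the conditional setting. Throughout I fix $u \in \rset$, suppress the dependence on $\N$ in the notation, and write $\bar U_i \eqdef \arr{i} - \cexp{\arr{i}}{\genfd[\N]}$ for the conditionally centred variables and $\sigma_{\N,i}^2 \eqdef \cexp[txt]{\arr[2]{i}}{\genfd[\N]} - \E^2[\arr{i} \mid \genfd[\N]] = \cexp[txt]{\bar U_i^2}{\genfd[\N]}$ for their conditional variances, so that \B{ass:DM:norm:asymptotic:variance} reads $\sum_i \sigma_{\N,i}^2 \plim \varsigma^2$. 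Since the $\arr{i}$ are conditionally independent given $\genfd[\N]$ and the subtracted means are $\genfd[\N]$-measurable, the target quantity factorises as $\prod_{i = 1}^{\M[\N]} \phi_{\N,i}$, where $\phi_{\N,i} \eqdef \cexp[txt]{\exp(\operatorname{i} u \bar U_i)}{\genfd[\N]}$. The goal is to compare this product with $\exp(-u^2 \sum_i \sigma_{\N,i}^2/2)$, which by \B{ass:DM:norm:asymptotic:variance} and the continuous mapping theorem already tends in probability to the announced limit $\exp(-u^2 \varsigma^2/2)$.

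First I would transfer the Lindeberg condition \B{ass:DM:norm:Lindeberg} from $\arr{i}$ to the centred variables $\bar U_i$. This rests on the uniform asymptotic negligibility of the conditional means: splitting $\cexp[txt]{|\arr{i}|}{\genfd[\N]}$ at level $\varepsilon$ gives $\max_i |\cexp{\arr{i}}{\genfd[\N]}| \leq \varepsilon + \varepsilon^{-1} \sum_i \cexp[txt]{\arr[2]{i} \1{\{|\arr{i}| \geq \varepsilon\}}}{\genfd[\N]}$, whose right-hand side is, by \B{ass:DM:norm:Lindeberg} and the arbitrariness of $\varepsilon$, negligible in probability. Combining this with the elementary inclusion $\{|\bar U_i| \geq 2\varepsilon\} \subseteq \{|\arr{i}| \geq \varepsilon\} \cup \{|\cexp{\arr{i}}{\genfd[\N]}| \geq \varepsilon\}$ and the bound $\bar U_i^2 \leq 2\arr[2]{i} + 2 \E^2[\arr{i} \mid \genfd[\N]]$ yields $\sum_i \cexp[txt]{\bar U_i^2 \1{\{|\bar U_i| \geq \varepsilon\}}}{\genfd[\N]} \plim 0$ for every $\varepsilon > 0$; the same truncation also gives $\max_i \sigma_{\N,i}^2 \plim 0$.

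The core estimate is the product-comparison bound $|\prod_i \phi_{\N,i} - \prod_i e^{-u^2 \sigma_{\N,i}^2/2}| \leq \sum_i |\phi_{\N,i} - e^{-u^2 \sigma_{\N,i}^2/2}|$, valid since all factors have modulus at most one. Each summand I would control by a third-order Taylor expansion: using $\cexp[txt]{\bar U_i}{\genfd[\N]} = 0$ together with $|e^{\operatorname{i} x} - 1 - \operatorname{i} x + x^2/2| \leq \min(|x|^3/6, x^2)$ gives $\phi_{\N,i} = 1 - u^2 \sigma_{\N,i}^2/2 + \rho_{\N,i}$ with $|\rho_{\N,i}| \leq \cexp[txt]{\min(|u \bar U_i|^3/6, u^2 \bar U_i^2)}{\genfd[\N]}$, while $e^{-u^2 \sigma_{\N,i}^2/2} = 1 - u^2 \sigma_{\N,i}^2/2 + r_{\N,i}$ with $|r_{\N,i}| \leq u^4 \sigma_{\N,i}^4/8$. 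Summing and truncating each cubic remainder at level $\varepsilon$ bounds the total error by a term $\lesssim \varepsilon \sum_i \sigma_{\N,i}^2$ (made small by first exploiting the tightness of $\sum_i \sigma_{\N,i}^2$ furnished by \B{ass:DM:norm:asymptotic:variance}), plus the Lindeberg sum $u^2 \sum_i \cexp[txt]{\bar U_i^2 \1{\{|\bar U_i| \geq \varepsilon\}}}{\genfd[\N]}$ (vanishing by the previous paragraph), plus a term $\lesssim (\max_i \sigma_{\N,i}^2) \sum_i \sigma_{\N,i}^2$ from the $r_{\N,i}$ contributions (vanishing since $\max_i \sigma_{\N,i}^2 \plim 0$). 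Hence the comparison error tends to $0$ in probability and the product converges to $\exp(-u^2 \varsigma^2/2)$.

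The main obstacle I anticipate is not algebraic but organisational: every quantity above converges only in probability, so chaining the estimates---in particular sending $\varepsilon$ to $0$ after the limit in $\N$---must be handled with care. The cleanest device is the subsequence principle for convergence in probability: it suffices to show that every subsequence admits a further subsequence along which the conditional characteristic function converges almost surely to $\exp(-u^2 \varsigma^2/2)$. Along such a subsequence one may assume that \B{ass:DM:norm:asymptotic:variance} and \B{ass:DM:norm:Lindeberg} (the latter for a countable dense set of $\varepsilon$, extended to all $\varepsilon$ by monotonicity) hold almost surely, whereupon the deterministic Lindeberg argument of the preceding paragraphs runs pointwise in $\omega$. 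Since the limit is a constant, almost-sure convergence along every such sub-subsequence delivers convergence in probability of the original sequence, completing the proof.
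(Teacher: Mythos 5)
This theorem is stated in the paper's appendix purely as a recalled result from \cite{douc:moulines:2008}; the paper itself contains no proof of it, so there is no internal argument to compare yours against. Your proof is correct, and it is the standard conditional Lindeberg--Feller argument underlying the cited result: factorization of the conditional characteristic function via conditional independence, the product comparison $|\prod_i a_i - \prod_i b_i| \le \sum_i |a_i - b_i|$ for factors of modulus at most one, third-order Taylor control of each factor, and continuous mapping applied to \B{ass:DM:norm:asymptotic:variance}. The two delicate points are handled soundly: in transferring \B{ass:DM:norm:Lindeberg} to the centred variables you must, beyond the stated inclusion, also dispose of the cross term $\sum_i \E^2[\arr{i} \mid \genfd[\N]] \, \prob(|\arr{i}| \ge \varepsilon \mid \genfd[\N])$, which follows from Chebyshev, since $\sum_i \prob(|\arr{i}| \ge \varepsilon \mid \genfd[\N]) \le \varepsilon^{-2} \sum_i \cexp[txt]{\arr[2]{i} \1{\{|\arr{i}| \ge \varepsilon\}}}{\genfd[\N]} \plim 0$, combined with your uniform negligibility bound on the conditional means; and the $\varepsilon$-versus-$\N$ chaining, where your subsequence device is valid but not even necessary, because the total comparison error is bounded by $(|u|^3 \varepsilon / 6) S_\N + u^2 L_\N(\varepsilon) + (u^4/8)(\max_i \sigma_{\N,i}^2) S_\N$ with $S_\N \eqdef \sum_i \sigma_{\N,i}^2 \plim \varsigma^2$ and $L_\N(\varepsilon)$ the centred Lindeberg sum, so that for each $\delta > 0$ one simply picks $\varepsilon$ with $|u|^3 \varepsilon \varsigma^2 / 6 < \delta$ and lets $\N \rightarrow \infty$ directly in probability. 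No gaps.
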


The following lemma is useful when verifying the tightness conditions~\A{ass:DM:cons:Lindeberg} and \B{ass:DM:norm:Lindeberg}.


\begin{lemma} \label{lem:tightness}
    Let the $(\{Ê\M[\N] \}_{\N \in \nn}, \{ \arr{i} \}_{i = 1}^{\M[\N]}, \{Ê\filt \}_{\N \in \nn})$ be the triangular array given in Theorem~\ref{thm:DM:A-1}. Assume that there exist sequences $\{ \mbd \}_{\N \in \nn}$, $\{\bdconst \}_{\N \in \nn}$, and $\{ \bd \}_{\N \in \nn}$ of nonnegative random variables such that
    \begin{enumerate}[(i)]
        \item for all $\N \in \nn$, $\prob\mbox{-a.s.}$,
        $$
            \max_{i \in \intvect{1}{\M[\N]}} |\arr{i}| \leq \mbd + \bdconst \bd^2 \eqsp,
        $$
        \item $\{Ê\mbd \}_{\N \in \nn}$ and $\{Ê\bdconst \}_{\N \in \nn}$ are $\{Ê\genfd[\N] \}_{\N \in \nn}$-adapted and such that $\mbd \plim 0$ as $\N \rightarrow \infty$ and
        $$
            \lim_{\lambda \rightarrow \infty} \sup_{\N \in \nn} \prob \left( \bdconst \geq \lambda \right) = 0 \eqsp,
        $$
        \item for some $\alpha \in \{1,2\} $, $\genexp > 0$, $\Hcons > 0$, and $C > 0$, $\prob\mbox{-a.s.}$,
        \begin{equation} \label{eq:lemma:exp:tail}
            \prob \left( \bd \geq y \mid \genfd[\N] \right) \leq C \M[\N] \exp \left(- c \M[\N]^\nu y^{2\alpha} \right) \eqsp.
        \end{equation}
    \end{enumerate}
    Then for $p \in \{ 1, 2 \}$,
     \begin{multline*}
        \lim_{\lambda \rightarrow \infty} \sup_{\N \in \nn} \prob \left( \sum_{i = 1}^{\M[\N]} \cexp{|\arr[p]{i}|}{\genfd[\N]} \geq \lambda \right) = 0 \\
         \Rightarrow \sum_{i = 1}^{\M[\N]} \cexp{|\arr[p]{i}| \1{\{Ê|\arr{i}| \geq \varepsilon \}}}{\genfd[\N]} \plim 0 \eqsp, \quad \mbox{$\forall \varepsilon > 0,$ as } \N \rightarrow \infty.
    \end{multline*}
\end{lemma}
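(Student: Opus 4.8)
The plan is to prove the conclusion directly by truncating on a good event and exploiting that the bound (i) controls \emph{all} $\M[\N]$ summands $|\arr{i}|$ simultaneously by the single quantity $\mbd + \bdconst \bd^2$. Fix $p \in \{1,2\}$ and $\varepsilon > 0$, abbreviate $S_\N \eqdef \sum_{i=1}^{\M[\N]} \cexp{|\arr[p]{i}| \1{\{|\arr{i}| \geq \varepsilon\}}}{\genfd[\N]}$, and let $\lambda > 0$ be an auxiliary threshold that will be sent to infinity at the very end. Since $\mbd$ and $\bdconst$ are $\genfd[\N]$-measurable by (ii), the set $G_\N \eqdef \{\mbd < \varepsilon/2\} \cap \{\bdconst \leq \lambda\}$ belongs to $\genfd[\N]$. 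On $G_\N$, the occurrence of $\{|\arr{i}| \geq \varepsilon\}$ forces, through (i), $\bdconst \bd^2 \geq \varepsilon - \mbd > \varepsilon/2$, and hence $\bd^2 > \varepsilon/(2\bdconst) \geq \varepsilon/(2\lambda)$; writing $y_0 \eqdef \sqrt{\varepsilon/(2\lambda)}$, this yields the pointwise inequality $\1{\{|\arr{i}| \geq \varepsilon\}} \leq \1{\{\bd \geq y_0\}}$ on $G_\N$, valid simultaneously for every $i$.

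Next I would convert this into a uniform bound on the summands. On $G_\N \cap \{\bd \geq y_0\}$ one has $\mbd < \varepsilon/2 = \lambda y_0^2 \leq \lambda \bd^2$ and $\bdconst \bd^2 \leq \lambda \bd^2$, so (i) gives $|\arr{i}| \leq 2\lambda \bd^2$ for all $i$ at once; combined with the previous paragraph this shows $|\arr[p]{i}| \1{\{|\arr{i}| \geq \varepsilon\}} \leq (2\lambda)^p \bd^{2p} \1{\{\bd \geq y_0\}}$ on $G_\N$ (the inequality being trivial where $\bd < y_0$). Taking conditional expectations and summing the $\M[\N]$ identical bounds produces
\[
    \1{G_\N} S_\N \leq \M[\N] (2\lambda)^p \, \cexp{\bd^{2p} \1{\{\bd \geq y_0\}}}{\genfd[\N]}.
\]

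It then remains to estimate the conditional moment on the right. Expanding it by the layer-cake formula as $y_0^{2p} \prob(\bd \geq y_0 \mid \genfd[\N]) + 2p \int_{y_0}^\infty s^{2p-1} \prob(\bd \geq s \mid \genfd[\N]) \, \rmd s$ and inserting the tail bound (iii), each contribution carries the factor $C \M[\N] \exp(-c \M[\N]^\nu s^{2\alpha})$ with $s \geq y_0$. Splitting $\exp(-c \M[\N]^\nu s^{2\alpha}) \leq \exp(-\tfrac{c}{2} \M[\N]^\nu y_0^{2\alpha}) \exp(-\tfrac{c}{2} \M[\N]^\nu s^{2\alpha})$ and integrating the residual tail shows that the conditional moment is bounded by a \emph{deterministic} quantity of order $\M[\N] \exp(-\tfrac{c}{2}\M[\N]^\nu y_0^{2\alpha})$. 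Hence the right-hand side above is dominated by a deterministic sequence $b_\N \leq C' \M[\N]^2 \exp(- c' \M[\N]^\nu)$, where $c' \eqdef \tfrac{c}{2} y_0^{2\alpha} > 0$ and $C'$ depends only on the fixed data $p, \alpha, c, \nu, \lambda, \varepsilon$; since $\M[\N] \to \infty$ and $\nu > 0$, the super-polynomial decay overwhelms the $\M[\N]^2$ prefactor and $b_\N \to 0$.

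To conclude, for any $\delta > 0$ I would split
\[
    \prob(S_\N \geq \delta) \leq \prob(\{S_\N \geq \delta\} \cap G_\N) + \prob(G_\N^c) \leq \1{\{b_\N \geq \delta\}} + \prob(\mbd \geq \varepsilon/2) + \prob(\bdconst > \lambda),
\]
using $\1{G_\N} S_\N \leq b_\N$ for the first term. Letting $\N \to \infty$, the first summand vanishes because $b_\N \to 0$ and the second because $\mbd \plim 0$ by (ii), so $\limsup_\N \prob(S_\N \geq \delta) \leq \sup_\N \prob(\bdconst > \lambda)$; sending finally $\lambda \to \infty$ and invoking the tightness of $\{\bdconst\}$ from (ii) gives $\limsup_\N \prob(S_\N \geq \delta) = 0$, that is $S_\N \plim 0$, which is the claim. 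I expect the conditional-moment estimate of the third paragraph to be the only delicate point: one must check that the two factors $\M[\N]$ — one from summing the uniformly bounded summands, one already built into the tail (iii) — are both absorbed by the decay $\exp(-c\M[\N]^\nu y_0^{2\alpha})$, which is precisely where the growing exponent $\M[\N]^\nu$ in (iii) is indispensable. Note that this route relies only on (i) and (iii); the tightness of $\sum_{i} \cexp{|\arr[p]{i}|}{\genfd[\N]}$ figuring as the hypothesis of the implication is not actually needed.
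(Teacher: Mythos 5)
Your proof is correct, and it takes a genuinely different, somewhat leaner route than the paper's. The paper first treats $p = 1$ via a single pathwise decomposition of the sum into three terms,
\begin{multline*}
    \sum_{i = 1}^{\M[\N]} \cexp{|\arr{i}| \1{\{|\arr{i}| \geq \varepsilon\}}}{\genfd[\N]}
    \leq \M[\N] \mbd \prob \left( \bd \geq \sqrt{\varepsilon/(2\lambda)} \mid \genfd[\N] \right) \\
    + \left( \1{\{\bdconst \geq \lambda\}} + \1{\{\mbd \geq \varepsilon/2\}} \right) \sum_{i = 1}^{\M[\N]} \cexp{|\arr{i}|}{\genfd[\N]}
    + \lambda \M[\N] \cexp{\bd^2 \1{\{\bd^2 \geq \varepsilon/(2\lambda)\}}}{\genfd[\N]} \eqsp,
\end{multline*}
controls the last term by the same layer-cake plus tail-bound computation that you carry out (compare \eqref{eq:cases:alpha} and \eqref{eq:bd:bd}), controls the middle term via \eqref{eq:prob:bound:X}, and then obtains $p = 2$ by a reduction, substituting $\tilde{U}_\N(i) = \arr[2]{i}$, $\tilde{V}_\N = 2\mbd^2$, $\tilde{X}_\N = 2\bdconst^2$, $\tilde{Y}_\N = \bd^2$ and rerunning the argument. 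You instead truncate once on the $\genfd[\N]$-measurable good event $G_\N$, on which you majorize the whole sum by a \emph{deterministic} null sequence $b_\N$, handle both values of $p$ simultaneously through the uniform bound $|\arr{i}| \leq 2\lambda \bd^2$ on $G_\N \cap \{\bd \geq y_0\}$, and simply discard $G_\N^c$ by its probability---legitimate because the conclusion is only convergence in probability. Your argument is sound at every step (the one point worth stating explicitly is that the residual integral $\int_{y_0}^\infty s^{2p-1} \exp(-\tfrac{c}{2}\M[\N]^\nu s^{2\alpha})\,\rmd s$ is bounded by an $\N$-free constant once $\M[\N]^\nu \geq 1$, which holds for all large $\N$ since $\M[\N] \rightarrow \infty$). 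Your closing remark that the antecedent of the implication is never needed is also accurate, and in fact consistent with the paper itself: there the antecedent enters only through \eqref{eq:prob:bound:X}, whose first inequality holds for the trivial reason that $\{\1{\{\bdconst \geq \lambda\}} \Sigma_\N \geq \varepsilon'\} \subseteq \{\bdconst \geq \lambda\}$ for any nonnegative $\Sigma_\N$, so the hypothesis is cosmetic there too (it is, of course, always available in the paper's applications, where it is condition \A{ass:DM:cons:tightness} or \B{ass:DM:norm:asymptotic:variance} being verified anyway). What the paper's route buys is a pathwise, term-by-term bound on the conditional sum on all of $\Omega$; what yours buys is brevity, a unified treatment of $p \in \{1, 2\}$, and the explicit observation that hypotheses (i)--(iii) alone suffice.
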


\begin{proof}
    We start with the case $p = 1$. First, note that for all $\upsilon > 0$,
    \begin{equation*}
        \begin{split}
            \cexp{\bd^2 \1{\{\bd^2 \geq \upsilon \}}}{\genfd[\N]}
            &= \int_\upsilon^\infty \prob \left( \bd \geq \sqrt{y} \mid \genfd[\N] \right) \, \rmd y
            + \upsilon \prob \left( \bd \geq \sqrt{\upsilon} \mid \genfd[\N] \right) \\
            &\leq C \M[\N] \int_{\upsilon}^{\infty} \exp \left(- \Hcons \M[\N]^{\genexp} y^{\alpha} \right) \rmd y + C \M[\N] \upsilon \exp \left(- \Hcons \M[\N]^{\genexp} \upsilon^{\alpha} \right) \eqsp,
        \end{split}
    \end{equation*}
    where we used the condition~(iii) in the second step. Thus,
    \begin{equation} \label{eq:cases:alpha}
        \cexp{\bd^2 \1{\{Ê\bd^2 \geq \upsilon \}}}{\genfd[\N]} \leq
        \begin{cases}
            \left( \upsilon + \M[\N]^{- \genexp} / c \right)
            C \M[\N] \exp \left(- \Hcons \M[\N]^{\genexp} \upsilon \right) & \mbox{for } \alpha = 1 \eqsp, \\
            \left(\upsilon + \M[\N]^{-\genexp} /(2 \Hcons \upsilon) \right) C \M[\N] \exp \left(- \Hcons \M[\N]^{\genexp} \upsilon^2 \right) & \mbox{for } \alpha = 2 \eqsp,
        \end{cases}
    \end{equation}
    using the standard upper tail bound for Gaussian distributions.
    In any case,
    \begin{equation} \label{eq:bd:bd}
        \M[\N] \cexp{\bd^2 \1{\{Ê\bd^2 \geq \upsilon \}}}{\genfd[\N]} \plim[N] 0 \eqsp.
    \end{equation}
    In addition, note that~(ii) implies, for all $\varepsilon' > 0$ and all $\delta > 0$, the existence of a constant $\lambda_\delta < \infty$ such that for all  $\lambda \geq \lambda_\delta$,
    \begin{equation} \label{eq:prob:bound:X}
        \sup_{\N \in \nn} \prob \left(  \1{\{Ê\bdconst \geq \lambda \}} \sum_{i = 1}^{\M[\N]} \cexp{|\arr{i}|}{\genfd[\N]} \geq \varepsilon' \right) \leq \sup_{\N \in \nn} \prob \left(  \bdconst \geq \lambda \right) \leq \delta \eqsp.
    \end{equation}
    Now, for any $\varepsilon > 0$ and $\lambda > 0$ the quantity of interest may be bounded as
    \begin{multline*}
        \sum_{i = 1}^{\M[\N]} \cexp{|\arr{i}| \1{\{Ê|\arr{i}| \geq \varepsilon \}}}{\genfd[\N]}
        \leq \M[\N] \mbd \prob \left( \bd \geq \sqrt{\frac{\varepsilon}{2 \lambda}} \mid \genfd[\N] \right) \\
        + \left( \1{\{Ê\bdconst \geq \lambda \}} + \1{\{Ê\mbd \geq \varepsilon / 2  \}} \right) \sum_{i = 1}^{\M[\N]} \cexp{|\arr{i}|}{\genfd[\N]}
        + \lambda \M[\N] \cexp{\bd^2 \1{\{Ê\bd^2 \geq \varepsilon / (2 \lambda) \}}}{\genfd[\N]} \eqsp,
    \end{multline*}
    where the upper bound may, by~(iii), \eqref{eq:bd:bd}, and \eqref{eq:prob:bound:X}, be made arbitrarily small in probability by increasing first $\lambda$ and then $\N$. This completes the proof in the case $p = 1$.

    We turn to the case $p = 2$. However, by letting $\tilde{U}_\N(i) \eqdef \arr[2]{i}$, $i \in \intvect{1}{\M[\N]}$, $\N \in \nsetpos$, and noting that $\max_{i \in \intvect{1}{\M[\N]}} \tilde{U}_\N(i) \leq \tilde{V}_\N  + \tilde{X}_\N \tilde{Y}_\N^2$, where $\tilde{V}_\N \eqdef 2 \mbd^2$, $\tilde{X}_\N \eqdef 2 \bdconst^2$, and $\tilde{Y}_\N \eqdef \bd^2$, we thus realise that the proof of the case $p = 1$ goes through if we can verify that \eqref{eq:bd:bd} and \eqref{eq:prob:bound:X} hold true when $\bdconst$, $\bd$, and $\{\arr{i} \}_{i = 1}^{\M[\N]}$ are replaced by $\tilde{X}_\N$, $\tilde{Y}_\N$, and $\{\tilde{U}_\N(i) \}_{i = 1}^{\M[\N]}$, respectively. Nevertheless, \eqref{eq:prob:bound:X} holds straightforwardly as tightness of $\{Ê\bdconst \}_{\N \in \nn}$ implies tightness of $\{ \tilde{X}_\NÊ\}_{\N \in \nn}$. Moreover, using condition (iii) one shows, along previous lines, that
    \begin{equation*}
        \begin{split}
            \cexp{\tilde{Y}_\N^2 \1{\{Ê\tilde{Y}_\N^2 \geq \upsilon \}}}{\genfd[\N]} &= \int_\upsilon^\infty \prob \left( \bd \geq \sqrt[4]{y} \mid \genfd[\N] \right) \, \rmd y + \upsilon \prob \left( \bd \geq \sqrt[4]{\upsilon} \mid \genfd[\N] \right) \\
            &\leq C \M[\N] \int_{\upsilon}^{\infty} \exp \left(- \Hcons \M[\N]^{\genexp} y^{\alpha/2} \right) \rmd y + \upsilon C \M[\N] \exp \left(- \Hcons \M[\N]^{\genexp} \upsilon^{\alpha/2} \right) \eqsp.
        \end{split}
    \end{equation*}
    For $\alpha = 1$,
    $$
        \cexp{\tilde{Y}_\N^2 \1{\{Ê\tilde{Y}_\N^2 \geq \upsilon \}}}{\genfd[\N]}
        \leq \left( 2\sqrt{\upsilon} \M[\N]^{- \genexp} / c + 2  \M[\N]^{- 2 \genexp} / c^2 +  \upsilon \right) C \M[\N] \exp \left(- \Hcons \M[\N]^{\genexp} \sqrt{\upsilon} \right) \eqsp.
    $$
    while the case $\alpha = 2$ corresponds to the first case of \eqref{eq:cases:alpha}. Consequently, as $\N \rightarrow \infty$,
    $$
        \M[\N] \cexp{\tilde{Y}_\N^2 \1{\{\tilde{Y}_\N^2 \geq \upsilon \}}}{\genfd[\N]} \plim 0 \eqsp,
    $$
    which completes the proof.
\end{proof}


\begin{lemma} \label{lem:inverse:proba}
    Let $a \in \rset$ be nonzero, $a \neq 0$, and let $\{\X{i} \}_{i = 1}^{\Nis}$, $\Nis \in \nsetpos$, be random variables such that $\X{i} \neq 0$ for all $i \in \intvect{1}{\Nis}$. Assume that $\max_{i \in \intvect{1}{\Nis}} | \X{i} - a | \plim 0$ as $\N \rightarrow \infty$. Then
    $$
        \max_{i \in \intvect{1}{\Nis}} \left| \X[-1]{i} - a^{-1} \right| \plim 0.
    $$
\end{lemma}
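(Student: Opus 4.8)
The plan is to reduce everything to the elementary algebraic identity
$$
\X[-1]{i} - a^{-1} = \frac{a - \X{i}}{a \, \X{i}} \eqsp,
$$
which is valid for every $i \in \intvect{1}{\Nis}$ since $\X{i} \neq 0$ and $a \neq 0$. Taking absolute values gives $|\X[-1]{i} - a^{-1}| = |\X{i} - a| / (|a| \, |\X{i}|)$, so the whole argument hinges on bounding the denominator $|\X{i}|$ away from zero, \emph{simultaneously} over all $i$.

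To obtain such a lower bound I would work on the event $\Omega_N \eqdef \{ \max_{i \in \intvect{1}{\Nis}} |\X{i} - a| < |a|/2 \}$. By the reverse triangle inequality, on $\Omega_N$ one has $|\X{i}| \geq |a| - |\X{i} - a| > |a|/2$ for every $i$, hence $|a| \, |\X{i}| > |a|^2 / 2$ uniformly. Consequently, on $\Omega_N$,
$$
\max_{i \in \intvect{1}{\Nis}} |\X[-1]{i} - a^{-1}| \leq \frac{2}{|a|^2} \max_{i \in \intvect{1}{\Nis}} |\X{i} - a| \eqsp.
$$

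It then remains to combine the two ingredients. By hypothesis $\max_{i \in \intvect{1}{\Nis}} |\X{i} - a| \plim 0$, which, because $|a|/2 > 0$, forces $\prob(\Omega_N) \to 1$. Thus for any $\varepsilon > 0$ and $\eta > 0$ I would pick $N$ large enough that both $\prob(\Omega_N^c) < \eta/2$ and $\prob\left( \frac{2}{|a|^2} \max_{i} |\X{i} - a| \geq \varepsilon \right) < \eta/2$; a union bound over these two events, together with the displayed inequality on $\Omega_N$, yields $\prob\left( \max_{i} |\X[-1]{i} - a^{-1}| \geq \varepsilon \right) < \eta$, which is exactly the asserted convergence in probability.

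I do not expect any serious obstacle. The only point requiring a moment's care is that the lower bound on $|\X{i}|$ must hold uniformly in $i$, but this is immediate from the structure of the hypothesis, which controls $\max_i |\X{i} - a|$ rather than each coordinate separately. Notably, the argument uses neither independence, nor the triangular-array framework, nor any of the Hoeffding-type deviation results invoked elsewhere; it is a purely deterministic bound combined with a routine two-event union bound.
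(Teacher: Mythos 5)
Your proof is correct, and it follows the same overall strategy as the paper's: restrict to the high-probability event on which every $\X{i}$ is uniformly close to $a$, obtain there a bound on $\max_i |\X[-1]{i} - a^{-1}|$ that is linear in $\max_i |\X{i} - a|$, and conclude by a two-event union bound. The one substantive difference is the mechanism for the linear bound. The paper invokes Taylor's formula to write $|\X[-1]{i} - a^{-1}| = \zeta_\N^{-2}(i)\,|\X{i} - a|$ for an intermediate point $\zeta_\N(i) \in (\X{i} \wedge a, \X{i} \vee a)$, which forces a case split on the sign of $a$ (bounding $\zeta_\N^{-2}(i)$ by $(a-\et)^{-2}$ when $a > 0$ and by $(a+\et)^{-2}$ when $a < 0$) and tacitly relies on $\X{i}$ and $a$ lying on the same side of zero on the good event. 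You instead use the exact algebraic identity
$$
\X[-1]{i} - a^{-1} = \frac{a - \X{i}}{a\,\X{i}} \eqsp,
$$
together with the reverse triangle inequality $|\X{i}| > |a|/2$ on the event $\{\max_i |\X{i} - a| < |a|/2\}$. This is marginally more elementary: it dispenses with the mean-value argument entirely, treats both signs of $a$ at once by working with $|a|$, and makes the uniform lower bound on the denominators fully explicit. Nothing is lost relative to the paper's version; both proofs are purely deterministic plus a union bound, and neither uses the triangular-array or deviation machinery developed elsewhere in the paper.
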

\begin{proof} Pick $\e > 0 $; then we may write for all $\et > 0$,
    \begin{multline*}
        \prob\left( \max_{i \in \intvect{1}{\Nis}} \left| \X[-1]{i} - a^{-1} \right| \geq \e \right) \leq \prob \left( \max_{i \in \intvect{1}{\Nis}} | \X{i} - a | \geq \et \right) \\
        + \prob \left( \max_{i \in \intvect{1}{\Nis}} \left| \X[-1]{i} - a^{-1} \right| \geq \e, \max_{i \in \intvect{1}{\Nis}} | \X{i} - a | < \et \right) \eqsp,
    \end{multline*}
    where the first term tends to zero as $\N$ tends to infinity for any $\eta$ by assumption.
    For all $i \in \intvect{1}{\Nis}$, there exists, by Taylor's formula, $\zeta_\N(i) \in (\X{i} \wedge a, \X{i} \vee a)$ such that $| \X[-1]{i} - a^{-1}| = \zeta_\N^{-2}(i) |\X{i} - a|$. Thus, if $a > 0$ and $0 < \et < a$,
    \begin{multline*}
        \prob \left( \max_{i \in \intvect{1}{\Nis}} \left| \X[-1]{i} - a^{-1} \right| \geq \e, \max_{i \in \intvect{1}{\Nis}} | \X{i} - a | < \et \right) \\
        \leq \PP\left(\max_{i \in \intvect{1}{\Nis}}  | \X{i} - a | > \e \{a - \et\}^2 \right) \eqsp,
    \end{multline*}
    where the right hand side tends, by assumption, to zero as $\N$ tends to infinity. ÊOn the other hand, if $a < 0$ and $0 < \et < -a$,
    \begin{multline*}
        \prob \left(\max_{i \in \intvect{1}{\Nis}} \left| \X[-1]{i} - a^{-1} \right| \geq \e, \max_{i \in \intvect{1}{\Nis}} | \X{i} - a | < \et \right) \\
        \leq \prob \left( \max_{i \in \intvect{1}{\Nis}}  | \X{i} - a | > \e\{a + \et\}^2 \right) \eqsp,
    \end{multline*}
    where again the right hand side tends to zero. This concludes the proof.
\end{proof}


\begin{lemma}  \label{lemma:asind}
    Let $\{Ê\evar[\N] \}_{\N \in \nset}$ be a sequence of random variables such that for some constant $z_\infty \in \RR$, as $\N \rightarrow \infty$,
    $$
        \E \left[ \evar[\N] \right] \rightarrow z_\infty
    $$
    and is uniformly bounded by some constant $z^+ \in \RR$. Let $\{\pvar[\N] \}$ be a sequence of random variables that (i) converges in probability to some constant $x_\infty \in \RR$ and (ii) is dominated by some integrable random variable. Then,
    as $\N \rightarrow \infty$,
    $$
        \E \left[ \pvar[\N] \evar[\N] \right] \rightarrow x_\infty z_\infty \eqsp.
    $$
\end{lemma}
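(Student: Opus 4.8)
The plan is to subtract the limits and reduce the statement to a single cross term that is then handled by a bounded/uniform-integrability argument. First I would write the centred decomposition
\[
	\E[\pvar[\N] \evar[\N]] - x_\infty z_\infty = \E\!\left[(\pvar[\N] - x_\infty)\evar[\N]\right] + x_\infty\left(\E[\evar[\N]] - z_\infty\right),
\]
in which the second summand tends to $0$ directly from the hypothesis $\E[\evar[\N]] \to z_\infty$. Everything therefore hinges on showing that the cross term $\E[(\pvar[\N] - x_\infty)\evar[\N]]$ vanishes as $\N \to \infty$.

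For this cross term I would first note that $\pvar[\N] - x_\infty$ is bounded (by $x^+ + |x_\infty|$) and converges to $0$ in probability, so that convergence in probability together with uniform boundedness yields $L^1$ convergence, i.e.\ $\E[|\pvar[\N] - x_\infty|] \to 0$. Granting uniform integrability of $\{\evar[\N]\}_{\N \in \nset}$, splitting according to whether $|\evar[\N]| \le K$ or not gives, for every $K > 0$,
\[
	\left|\E[(\pvar[\N] - x_\infty)\evar[\N]]\right| \le K\,\E[|\pvar[\N] - x_\infty|] + (x^+ + |x_\infty|)\sup_{\N \in \nset}\E\!\left[|\evar[\N]|\1{\{|\evar[\N]| > K\}}\right],
\]
and the claim follows on letting first $\N \to \infty$ (killing the first term) and then $K \to \infty$ (killing the second, by uniform integrability).

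The delicate point---and the only genuine obstacle---is precisely the integrability of $\{\evar[\N]\}$: the hypothesis $\E[\evar[\N]] \to z_\infty$ alone controls neither $\E[|\evar[\N]|]$ nor the uniform integrability invoked above, and without such control the conclusion can fail (take $\pvar[\N] = \1{A_\N}$ and $\evar[\N] = \N\1{A_\N}$ with $\PP(A_\N) = 1/\N$, so that $\pvar[\N] \plim 0$ and $\E[\evar[\N]] \to 1$ while $\E[\pvar[\N]\evar[\N]] = 1 \neq x_\infty z_\infty$). In every invocation of this lemma in the present paper, however, $\evar[\N]$ is a (conditional) characteristic function and hence satisfies $|\evar[\N]| \le 1$ uniformly; the uniform integrability requirement is then automatic and the displayed bound collapses to the elementary estimate $|\E[(\pvar[\N] - x_\infty)\evar[\N]]| \le \E[|\pvar[\N] - x_\infty|] \to 0$, which completes the argument.
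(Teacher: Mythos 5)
Your argument is correct and, at its core, is exactly the paper's one-line proof: the paper writes the single bound $\left| \E[\pvar[\N] \evar[\N]] - x_\infty z_\infty \right| \leq \E[|\pvar[\N] - x_\infty|] + |x_\infty| \left| \E[\evar[\N]] - z_\infty \right|$ and concludes ``by assumption and dominated convergence,'' which is precisely your decomposition together with the bounded-convergence step $\E[|\pvar[\N] - x_\infty|] \rightarrow 0$. What you add---and it is a genuine and correct observation---is that this bound is not valid under the stated hypotheses alone: estimating $|\E[(\pvar[\N] - x_\infty)\evar[\N]]|$ by $\E[|\pvar[\N] - x_\infty|]$ silently uses $|\evar[\N]| \leq 1$ almost surely (or, via your truncation argument, uniform integrability of $\{\evar[\N]\}_{\N \in \nset}$), and your counterexample $\pvar[\N] = \1{A_\N}$, $\evar[\N] = \N \1{A_\N}$ with $\prob(A_\N) = 1/\N$ shows the lemma is literally false as stated, since then $\E[\pvar[\N]\evar[\N]] = 1$ while $x_\infty z_\infty = 0$. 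So the paper's proof carries the very gap you diagnose; it is harmless in context because, as you note, in every invocation (the CLT steps in the proofs for the SIL, SiL, and Mutation operations) the factor playing the role of $\evar[\N]$ is a complex exponential or a conditional characteristic function, hence of modulus at most one, so the displayed inequality holds verbatim there (with the real-valued lemma applied to real and imaginary parts, a point the statement also glosses over). In short: same route as the paper, with a correct repair of a hypothesis the statement omitted, and your uniform-integrability variant is the natural way to state the lemma so that the proof is actually complete.
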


\begin{proof}
    The result is obtained straightforwardly by writing
    $$
        \left|\E \left[ \pvar[\N] \evar[\N] \right]- x_\infty z_\infty \right|
        \leq z^+ \E \left[ |\pvar[\N] - x_\infty|\right]+ | x_\infty | \left| \E\left[ \evar[\N] \right] - z_\infty \right| \eqsp,
    $$
    where the right hand side tends to zero as $\N$ tends to infinity by assumption and dominated convergence.
\end{proof}

\bibliographystyle{plain}
\bibliography{biblio}

\end{document}